\newcommand{\pis}{\overrightarrow{\pi}}
\newcommand{\sigmas}{\overrightarrow{\sigma}}
\newcommand{\ds}{\overrightarrow{d}}
\newcommand{\iotas}{\overrightarrow{\iota}}
\newcommand{\rhos}{\overrightarrow{\rho}}
\DeclareMathOperator{\operatorClassNP}{{\sf NP}}
\newcommand{\classNP}{\ensuremath{\operatorClassNP}}
\DeclareMathOperator{\operatorClassFPT}{{\sf FPT}\xspace}
\newcommand{\classFPT}{\ensuremath{\operatorClassFPT}\xspace}
\newcommand{\Oh}{\mathcal{O}}
\newtheorem{theorem}{Theorem}
\newtheorem{lemma}{Lemma}
\newtheorem{claim}{Claim}
\newtheorem{corollary}{Corollary}
\newtheorem{observation}{Observation}
\newtheorem{proposition}{Proposition}
\newtheorem{redrule}{Reduction Rule}
\newcommand{\pname}{\textsc}
\newcommand{\ProblemFormat}[1]{\pname{#1}}
\newcommand{\ProblemIndex}[1]{\index{problem!\ProblemFormat{#1}}}
\newcommand{\ProblemName}[1]{\ProblemFormat{#1}\ProblemIndex{#1}{}\xspace}
 \newcommand{\probWS}{\ProblemName{Whitney Switches}}
 \newcommand{\probUWS}{\ProblemName{Unlabeled Whitney Switches}}
  \newcommand{\probSRD}{\ProblemName{Sequence Reversal Distance}}   
    \newcommand{\probRS}{\ProblemName{Sorting by Reversals}}
\newcommand{\NP}{{\ensuremath{\rm{NP}}}}
\newlength{\RoundedBoxWidth}
\newsavebox{\GrayRoundedBox}
\newenvironment{GrayBox}[1]%
   {\setlength{\RoundedBoxWidth}{.93\textwidth}
    \def\boxheading{#1}
    \begin{lrbox}{\GrayRoundedBox}
       \begin{minipage}{\RoundedBoxWidth}}%
   {   \end{minipage}
    \end{lrbox}
    \begin{center}
    \begin{tikzpicture}%
       \node(Text)[draw=black!20,fill=white,rounded corners,%
             inner sep=2ex,text width=\RoundedBoxWidth]%
             {\usebox{\GrayRoundedBox}};
        \coordinate(x) at (current bounding box.north west);
        \node [draw=white,rectangle,inner sep=3pt,anchor=north west,fill=white] 
        at ($(x)+(6pt,.75em)$) {\boxheading};
    \end{tikzpicture}
    \end{center}}     
\newenvironment{defproblemx}[2][]{\noindent\ignorespaces%
                                \FrameSep=6pt%
                                \parindent=0pt%
                \vspace*{-1.5em}
                \ifthenelse{\isempty{#1}}{%
                  \begin{GrayBox}{\textsc{#2}}%                
                }{%
                  \begin{GrayBox}{\textsc{#2}  parameterized by~{#1}}%  
                }
                \begin{tabular*}{\textwidth}{@{\hspace{.1em}} >{\itshape} p{1.8cm} p{0.8\textwidth} @{}}%        
            }{
                \end{tabular*}%
                \end{GrayBox}%
                \ignorespacesafterend
            }
\newcommand{\defproblema}[3]{% FJR Version
  \begin{defproblemx}{#1}
    Input:  & #2 \\
    Task: & #3
  \end{defproblemx}
}%
\begin{document}
\title{Kernelization of Whitney Switches
\thanks{The research leading to these results have  been supported by the Research Council of Norway via the projects ``MULTIVAL" (grant no. 263317).}}

\author{
Fedor V. Fomin\thanks{
Department of Informatics, University of Bergen, Norway.} \addtocounter{footnote}{-1}
\and
Petr A. Golovach\footnotemark{}
}

\date{}

\maketitle

\begin{abstract}
A fundamental theorem of Whitney from 1933 asserts that $2$-connected graphs $G$ and $H$ are $2$-isomorphic, or equivalently, their cycle matroids are isomorphic, if and only if $G$ can be transformed into $H$ by a series of operations called Whitney switches.
In this paper we 
consider the quantitative 
question arising from 
 Whitney's theorem: Given two 2-isomorphic graphs, can we transform one into another by applying at most $k$ Whitney switches? 
 This problem is already \classNP-complete for cycles, and we investigate its parameterized complexity. 
 We show that the problem admits a kernel of size $\Oh(k)$, and thus, is fixed-parameter tractable when parameterized by $k$.

\end{abstract}

\section{Introduction}\label{sec:intro}
A fundamental result of 
Whitney from 1933~\cite{Whitney33}, asserts that every $2$-connected graph is completely characterized, up to a series of Whitney switches  (also known as $2$-switches), by  its edge set and  cycles.  This theorem is one of the cornerstones of Matroid Theory, since it provides an exact characterization of two graphs having isomorphic cycle matroids \cite{VertiganW97}. In graph drawing and graph embeddings, this theorem (applied to dual graphs) is used to  characterize all drawings  of a planar graph on the palne~\cite{ChimaniHM12}.

A \emph{Whitney switch}  is an operation that from  a 2-connected graph $G$, constructs graph $G'$ as follows. 
Let $\{u,v\}$ be two vertices of $G$ 
whose removal separates $G$ into two disjoint subgraphs $G_1$ and $G_2$. 
The graph $G'$ is obtained by 
flipping the neighbors of $u$ and $v$ in the set of vertices of $G_2$. In other words, for every vertex $w\in V(G_2)$, if $w$ was adjacent to $u$ in $G$, in graph $G'$  edge $uw$  is replaced by $vw$. Similarly, if  $w$ was adjacent to $v$ in $G$, we replace $vw$  by $uw$. See  Figure~\ref{fig:switching} for an example.

\begin{figure}[ht]
\centering
\scalebox{0.7}{
\input{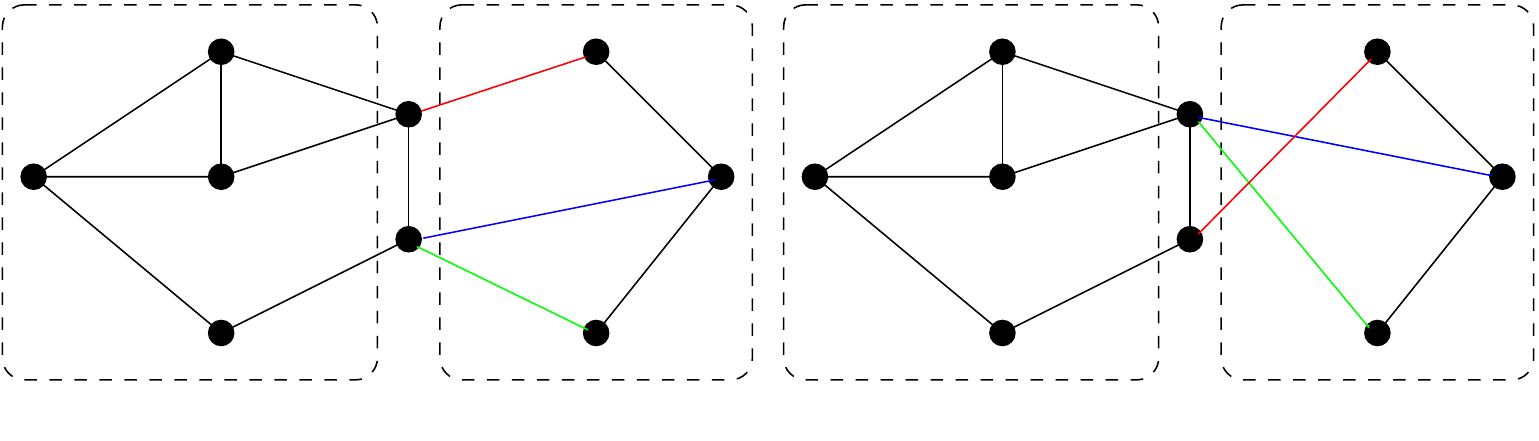_t}}
\caption{$G'$ is obtained from $G$ by the Whitney switch with respect to the partition of $G-\{u,v\}$ into $G_1$ and $G_2$.}
\label{fig:switching}
\end{figure}

If we view the graph $G$ as a graph with labelled edges, then a Whitney switch transforms $G$    into a graph  $G'$ with the same set of labelled edges, however graphs $G$ and $G'$ are  not necessarily isomorphic. On other hand, graphs $G$ and $G'$ have the same set of cycles in the following sense: a set of (labelled) edges  forms a cycle  in $G$ if and only if it forms a cycle in $G'$. (In other words, the cycle matroids of $G$ and $G'$ are isomorphic.) What Whitney's theorem says that the opposite is also true: if there is a cycle-preserving mapping between graphs $G$ and $G'$ then one graph can be transformed into another  by a sequence of Whitney switches.  To state the theorem of Whitney more precisely, we need to define $2$-isomorphisms. 

We say that 2-connected graphs $G$ and $H$ are  \emph{$2$-isomorphic} if there is a bijection $\varphi\colon E(G)\rightarrow E(H)$ such that  $\varphi$ and $\varphi^{-1}$ preserve cycles, that is, for every cycle $C$ of $G$, $C$ is mapped to a cycle of $H$  by $\varphi$ and, symmetrically, every cycle of $H$ is mapped to a cycle of $G$ by $\varphi^{-1}$.   We refer to $\varphi$ as to  \emph{$2$-isomorphism} from $G$ to $H$. 
An isomorphism $\psi\colon V(G)\rightarrow V(H)$ is a \emph{$\varphi$-isomorphism} if for every edge $uv\in E(G)$, $\varphi(uv)=\psi(u)\psi(v)$, and $G$ and $H$ are $\varphi$-isomorphic if there is an isomorphism $G$ to $H$ that is a $\varphi$-isomorphism.  Let us note that for $3$-connected graphs the notions of $2$-isomorphism and  {$\varphi$-isomorphism} coincides. More precisely, if $G$ is $3$-connected and $2$-isomorphic to $H$ under $\varphi$ then $G$ and $H$ are $\varphi$-isomorphic
\cite[Lemma~1]{Truemper80}. But for $2$-connected graphs this is not true. For example, the graphs in Fig.~\ref{fig:switching} are not isomorphic but are 2-isomorphic. But even isomorphic graphs with $2$-isomorphism $\varphi$ not always have a {$\varphi$-isomorphism}. For example,  for the $2$-isomorphism $\varphi$ in Fig.~\ref{fig:cycles} mapping a cycle $G$ into another cycle $H$ (we view these cycles as labelled graphs), there is no $\varphi$-isomorphism.  (For every $\varphi$-isomorphism   edges $\varphi(a)$ and $\varphi(b)$ should have an endpoint in common.)
On the other hand, graph $G'$ obtained from $G$ by Whitney switch  (for vertices $u$ and $v$)  is $\varphi$-isomorphic to $H$.

\begin{figure}[ht]
\centering
\scalebox{0.3}{
\includegraphics{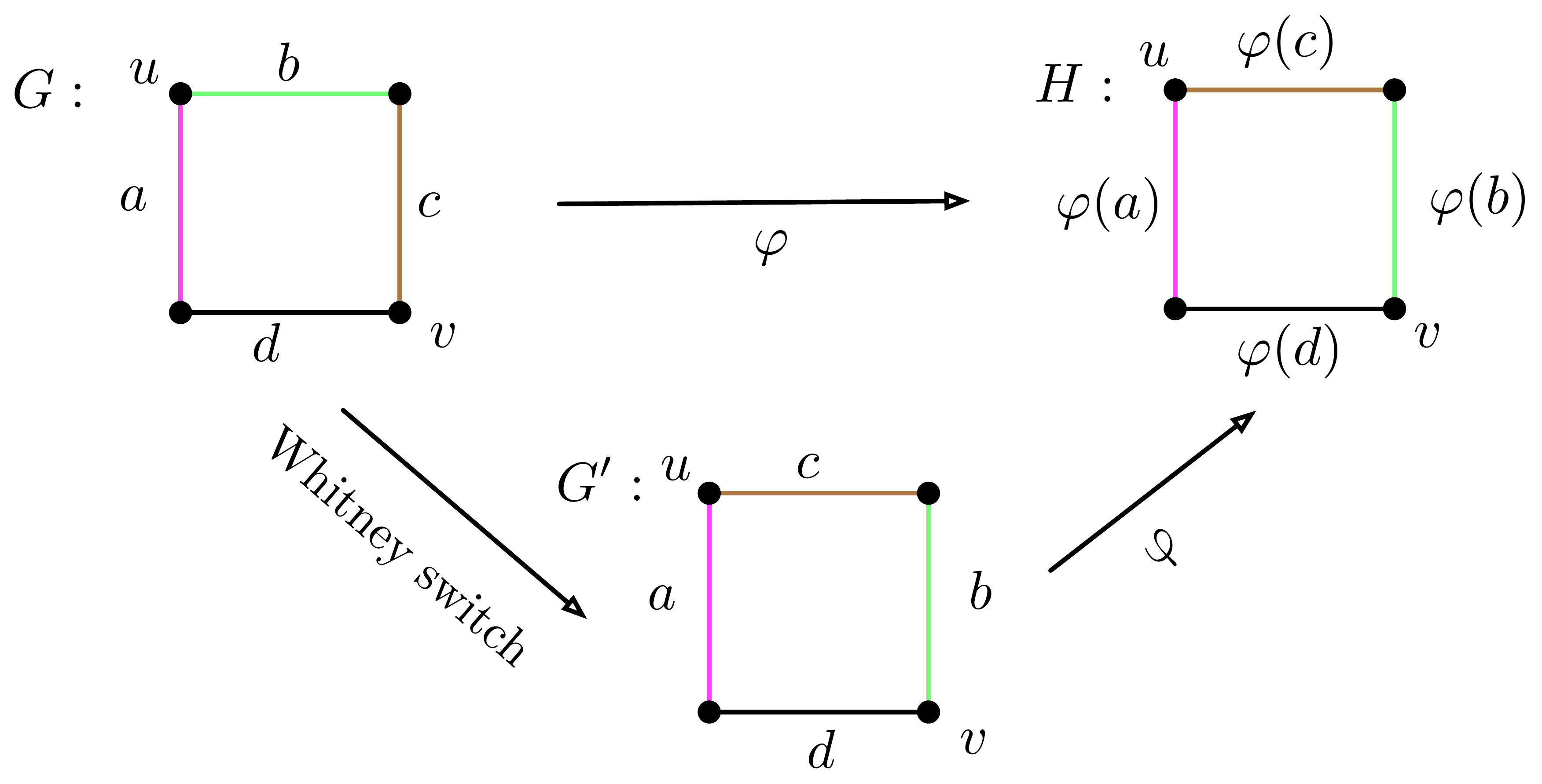}}
\caption{Graph $G$ is not $\varphi$-isomorphic to $H$ but its Whitney switch $G'$ is. }
\label{fig:cycles}
\end{figure}

\begin{theorem}[\textbf{Whitney's theorem}~\cite{Whitney33}]\label{thm:Withney}
If   there is a $2$-isomorphism $\varphi$ from graph  $G$ to graph $H$, then  $G$ can be transformed by a sequence 
of Whitney switches to  a graph $G'$ which is $\varphi$-isomorphic to $H$. 
\end{theorem}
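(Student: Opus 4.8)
The plan is to prove Whitney's theorem by induction on the number of edges of $G$, using the structure of $2$-separations and invoking the cited result for the $3$-connected case. If $G$ is $3$-connected, then by \cite[Lemma~1]{Truemper80} the $2$-isomorphism $\varphi$ is already induced by a genuine graph isomorphism, so $G$ itself is $\varphi$-isomorphic to $H$ and no switch is needed; graphs with only a bounded number of edges serve as base cases and are checked directly. So assume $G$ is $2$-connected but not $3$-connected, and fix a $2$-cut $\{u,v\}$ whose deletion disconnects $G$ into sides $G_1$ and $G_2$.

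Build two \emph{torsos}: let $G_i^+$ be obtained from the part of $G$ on one side of the separation together with $u$ and $v$ by adding a new ``virtual'' edge $e_i$ joining $u$ and $v$. Each $G_i^+$ is $2$-connected and, since each side of a $2$-separation of a $2$-connected graph carries at least two edges, has strictly fewer edges than $G$, so the induction hypothesis will apply to it. The core of the proof is the claim that $\varphi$ transports the separation to $H$: writing $F_i:=\varphi(E(G_i))$, there is a $2$-cut $\{x,y\}$ of $H$ splitting $H$ into $H_1,H_2$ with $E(H_i)=F_i$, and the map $\varphi_i$ that equals $\varphi$ on $E(G_i)$ and sends $e_i$ to the virtual edge $xy$ of $H_i^+$ is a $2$-isomorphism from $G_i^+$ to $H_i^+$. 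One proves this by reasoning purely about cycles: a cycle of $G$ either lies inside some $G_i$, in which case its $\varphi$-image is a cycle of $H$ inside $F_i$ (and conversely via $\varphi^{-1}$), or it is the union of a $u$-$v$ path of $G_1$ with a $u$-$v$ path of $G_2$, in which case its $\varphi$-image is a cycle of $H$ that decomposes as $R_1\cup R_2$ with $R_i\subseteq F_i$. Letting the $G_1$-side of such a crossing cycle vary while keeping the $G_2$-side fixed forces all the resulting $R_1$'s to share a common pair of vertices, which one designates $\{x,y\}$; after that, replacing a $u$-$v$ path by a virtual edge turns ``cycle of $G$ crossing the separation'' into ``cycle of $G_i^+$ through $e_i$'', and one reads off that $\varphi_i$ and $\varphi_i^{-1}$ preserve cycles.

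Now apply the induction hypothesis to $\varphi_1$ and to $\varphi_2$: this yields sequences of Whitney switches turning $G_i^+$ into a graph $\widehat G_i^+$ that is $\varphi_i$-isomorphic to $H_i^+$, via an isomorphism $\psi_i$. Each switch performed inside $G_i^+$ corresponds directly to a Whitney switch of the $G_i$-side of $G$ (switches whose cut involves $u$ or $v$ may additionally interchange the labels $u$ and $v$, which is accounted for in the final step), and performing all of them transforms $G$ into the graph obtained from $\widehat G_1^+$ and $\widehat G_2^+$ by identifying their virtual edges $e_1,e_2$ and then deleting the result. If $\psi_1$ and $\psi_2$ identify $\{u,v\}$ with $\{x,y\}$ in the same way, then $\psi_1\cup\psi_2$ restricted to $V(G)$ is a $\varphi$-isomorphism onto $H$ and we are done; if they disagree — one sending $u\mapsto x$, the other $u\mapsto y$ — a single further Whitney switch at $\{u,v\}$ applied to the $G_2$-side swaps $u$ and $v$ there and repairs the mismatch. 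This closes the induction.

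The step I expect to be the main obstacle is the separation-transfer claim of the second paragraph: that $\varphi(E(G_1))$ and $\varphi(E(G_2))$ genuinely induce a $2$-separation of $H$ and that the restricted maps are $2$-isomorphisms of the torsos. This is the only place where both directions of cycle preservation are essential, and it is where one must cope with degenerate configurations — edges parallel to $uv$, cut pairs joined by several edges, and separations that are nested or crossing — so as to obtain a clean recursion on strictly smaller torsos. Everything after that is bookkeeping: propagating the virtual edges through the switches on the two sides and spending at most one additional switch at $\{u,v\}$ to align the two partial isomorphisms.
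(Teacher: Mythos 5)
The paper itself gives no proof of this statement: Theorem~\ref{thm:Withney} is quoted from Whitney~\cite{Whitney33} and later reused as a black box (Proposition~\ref{prop:whitney}), with Truemper's Tutte-decomposition proof~\cite{Truemper80} cited as the modern treatment. Your plan---induction on $|E(G)|$, splitting along a $2$-cut $\{u,v\}$ into torsos with virtual edges, using the $3$-connected case as the base, lifting the torso switches back to $G$, and spending at most one extra switch at $\{u,v\}$ to align $\psi_1$ and $\psi_2$---is the standard ``$2$-sum'' induction and is close in spirit to Truemper's argument. The skeleton is sound, and the regluing/bookkeeping in your last step is correct.

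The genuine gap is the separation-transfer claim, and the argument you sketch for it does not work as stated. A priori $\varphi(P)$, for $P$ a $u$--$v$ path of $G_1$, is just an edge set; all the cycle condition gives is that it is a union of vertex-disjoint paths inside some cycle of $H$, and nothing in the ``vary the $G_1$-side'' device shows it is a \emph{single} path, let alone that all these images share the same two endpoints. Worse, the device is vacuous in degenerate but unavoidable cases, e.g.\ when $G_1$ is itself a $u$--$v$ path, so there is exactly one $u$--$v$ path on that side to vary. The missing lemma is really a statement about $2$-separations of the cycle matroid: since $\varphi$ preserves cycles it preserves independence and rank, so $r_H(\varphi(E(G_1)))+r_H(\varphi(E(G_2)))=(|A|-1)+(|B|-1)=r(M(H))+1$, i.e.\ $(\varphi(E(G_1)),\varphi(E(G_2)))$ is an exact $2$-separation of $M(H)$; for a $2$-connected $H$ this forces the two edge-induced subgraphs to be connected and to meet in exactly two vertices $x,y$, which is the $2$-cut you need, and only then does your ``crossing cycle $\mapsto$ crossing cycle'' computation show that $\varphi_i$ is a $2$-isomorphism of the torsos. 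Without this (or an equivalent graph-theoretic argument) the induction never gets started. Two smaller points: if $uv\in E(G)$ the torso acquires a parallel edge, so either carry multigraphs through all definitions or use the existing edge $uv$ as the marker edge instead of a new one; and the $3$-connected base case must come from a proof independent of Whitney's theorem (Truemper's Lemma~1 qualifies, whereas Proposition~\ref{prop:three-conn} is noted in the paper to be derivable \emph{from} Whitney's theorem, which you cannot use here).
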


However, Whitney's theorem does not provide an answer to the following  computational question: Given a $2$-isomorphism $\varphi$ from graph  $G$ to graph $H$, \emph{what is the minimum number of Whitney switches required to transform $G$ to a graph  $\varphi$-isomorphic to $H$?} Truemper in~\cite{Truemper80} proved that $n-2$ switches always suffices, where $n$ is the number of vertices in $G$. He also   proved that this upper bound it tight, that is, there are graphs $G$ and $H$ for which $n-2$ switches are necessary. 
In this paper we study the algorithmic complexity of the following problem about Whitney switches.

\defproblema{\probWS}{$2$-Isomorphic $n$-vertex graphs $G$ and $H$ with a $2$-isomorphism $\varphi\colon E(G)\rightarrow E(H)$, and a nonnegative integer $k$. }{Decide whether it is possible to obtain from $G$ a graph $G'$ that is  $\varphi$-isomorphic to $H$  by at most $k$ Whitney switches.}

The departure  point for our study of  \probWS is an easy reduction (Theorem~\ref{cor:hard})  from \probRS that establishes 
 \classNP-completeness of \probWS  even when input graphs $G$ and $H$ are  cycles. Our main algorithmic result
 is the following theorem (we postpone the definition of a kernel  till Section~\ref{sec:prelim}).  
 Informally, it means that the instance of the problem can be compressed in polynomial time to an equivalent instance with two graphs on $\Oh(k)$ vertices.  It also implies that  \probWS  is fixed-parameter tractable parameterized by $k$.

\begin{theorem}\label{thm:main}    
\probWS  admits a kernel with $\Oh(k)$ vertices and is solvable in  $2^{\Oh(k\log k)}\cdot n^{\Oh(1)}$ time.
\end{theorem}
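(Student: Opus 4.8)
The plan is to reduce \probWS to the canonical decomposition of a $2$-connected graph into its $3$-connected components --- the tree of $3$-blocks, equivalently the SPQR-tree --- and to account for the number of Whitney switches in terms of this tree. The decomposition is computable in polynomial time, and since $\varphi$ preserves cycles it induces a tree isomorphism between the decomposition of $G$ and that of $H$ matching polygon (cycle) nodes with polygon nodes, bond (parallel) nodes with bond nodes, and $3$-connected nodes with $3$-connected nodes; on the latter $\varphi$ is moreover a genuine graph isomorphism by \cite[Lemma~1]{Truemper80}, so these nodes are rigid. A Whitney switch picks a $2$-separation of the current graph and flips one side; under the decomposition such a separation is either carried by a pair of twin virtual edges on a tree edge --- in which case the switch merely toggles how the two sides of that tree edge are glued --- or it lies inside a single polygon node, where the switch reverses a contiguous arc of that node's cyclic sequence of virtual edges. (The internal order of a bond node plays no role, as its parallel branches carry no order, and flipping one such branch just toggles the corresponding tree edge.) Hence making $G$ $\varphi$-isomorphic to $H$ (which is possible by Theorem~\ref{thm:Withney}) splits into (a) getting the gluing at every tree edge right and (b) sorting, inside every polygon node, its cyclic sequence of virtual edges into the target sequence prescribed by $H$ by arc reversals; and (b) is exactly \probRS on a cyclic string, which is why \probWS is already hard on cycles (Theorem~\ref{cor:hard}). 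The first and main task is to make this precise: to prove that the minimum number of switches is, up to a bounded additive term, the number of ``defective'' tree edges plus the sum over polygon nodes of their cyclic reversal distances, and, crucially, that an optimal switch sequence may be taken to be \emph{local} --- each switch either toggling one tree edge or reversing an arc inside one polygon node, never interfering with anything else. Proving the matching lower bound (a potential function counting defects and breakpoints) together with this localization is the step I expect to be the crux.

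Granted this characterization, the kernelization runs in polynomial time. There is a polynomial-time computable lower bound $L$ on the optimum --- the number of defective tree edges plus half the total number of breakpoints over the polygon nodes --- with $L\le\mathrm{OPT}\le\Oh(L)$ (using $d\ge b/2$ for reversal distance versus breakpoints). If $L>ck$ for a suitable constant $c$ we output a trivial \noinstance; otherwise all the ``activity'' is confined to $\Oh(k)$ tree edges and $\Oh(k)$ polygon nodes and the rest of $G$ and $H$ is inert. We then apply reduction rules that shrink the inert remainder while keeping $G$ and $H$ $2$-connected and consistently $\varphi$-labelled: a maximal subtree all of whose tree edges are glued correctly and that contains no activity is replaced, in $G$ and in $H$ alike, by a constant-size rigid gadget on the same pair of attachment vertices with the same behaviour under a flip of that pair; inside a polygon node a maximal arc of virtual edges already in its target position and with inert pendant subtrees is contracted to a single edge, which reduces the per-node \probRS instance to its strip-contracted form on $\Oh(k)$ elements; and all but a bounded number of inert parallel branches at a bond node are deleted. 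Soundness is the delicate part: by locality, an optimal switch sequence for either instance transports to the other without growing, because a switch forced to touch a contracted region is either inert there or can be rerouted to an equivalent switch present in the reduced instance, and the gadgets are chosen precisely so as not to create new defects or destroy $2$-connectivity. When no rule applies, the measure ``defective tree edges plus total breakpoints'' is still $\Oh(k)$ and every part of the decomposition is either active or of bounded size next to an active part, so a standard counting argument bounds the tree of $3$-blocks by $\Oh(k)$ nodes, each contributing $\Oh(1)$ vertices except the contracted polygon and $3$-connected nodes, which contribute $\Oh(k)$ in total; hence $|V(G)|=|V(H)|=\Oh(k)$, the claimed kernel.

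For the running time we kernelize in polynomial time and then solve the $\Oh(k)$-vertex instance by brute-force search over short switch sequences. The key point is that on the kernel only polynomially many Whitney switches are ever useful: a switch is given by its cut pair ($\Oh(k^2)$ choices) together with the set of branches of the decomposition it flips, and for a fixed cut pair it suffices to try the bounded number of flips dictated by the target $H$ --- at a bond cut pair essentially the one flip of exactly the currently misoriented branches, at a polygon cut pair the arc reversal it induces --- so there are only $\poly(k)$ relevant switches overall. Enumerating all sequences of at most $k$ of them and testing $\varphi$-isomorphism of the result after each step (a polynomial-time test) therefore runs in $(\poly(k))^{k}\cdot k^{\Oh(1)}=2^{\Oh(k\log k)}$ time, and composing with the polynomial-time kernelization yields the stated $2^{\Oh(k\log k)}\cdot n^{\Oh(1)}$ bound. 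Besides the localization lemma, the remaining delicate points are designing the inert-part gadgets so that they neither introduce new defects nor break $2$-connectivity, and carrying a linear kernel for \probRS through the polygon nodes.
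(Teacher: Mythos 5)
Your overall strategy (decompose via the Tutte/SPQR tree, classify switches into tree-edge toggles versus arc reversals inside cycle bags, bound the ``active'' region by $\Oh(k)$, shrink the rest, then brute force) is the same as the paper's, but the quantitative claim you put at the centre is false, and it is exactly the step you defer. You propose that $\mathrm{OPT}$ equals, up to a bounded additive term, the number of defective tree edges plus the sum of per-polygon reversal distances, and that $L=\#\{\text{defective tree edges}\}+b/2$ is a lower bound used to reject when $L>ck$. It is not a lower bound: a single arc reversal inside a polygon node simultaneously re-positions \emph{and} re-orients every branch hanging on the reversed arc, so one switch can repair arbitrarily many ``defective'' tree edges at once (take a polygon in which a long arc appears reversed in $G$ and all branches on it are attached backwards; one switch fixes everything, while your $L$ is linear in the arc length). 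With that $L$ your algorithm would wrongly reject yes-instances, and the ``bounded additive term'' characterization fails by an unbounded amount. The paper's accounting is different in a way that matters: the only lower bound used is $b(G)/2\le\mathrm{OPT}$ where $b(G)$ counts \emph{crucial breakpoints inside $\varphi$-bad cycle bags} (Lemma~\ref{lem:lower-break}); tree-edge defects between two $\varphi$-good bags are not charged to a potential at all but are executed as \emph{forced} switches by a reduction rule that decrements $k$ (Reduction Rule~\ref{red:switch-good}), while defects at branches hanging on bad cycle bags are absorbed into the signed-permutation analysis (Lemma~\ref{lem:signs}).

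The second gap is the ``localization'' you flag as the crux but leave as an expectation. Every Whitney switch is automatically of one of your two local forms; the nontrivial statement is that some \emph{minimum} sequence never splits already-correct pieces ($\varphi$-good bags, long $\varphi$-good segments, mutually $\varphi$-good pairs), which is what makes contracting ``inert'' regions and strip-contracting polygon nodes safe. This cannot be taken for granted: the paper recalls the permutation $(3,4,1,2)$, for which every optimal sorting must cut a strip, so an optimal switch sequence may very well pass through a region that is already ``in target position''. The paper's proof of this localization (Lemma~\ref{lem:good}) rests on the Hannenhalli--Pevzner theorem that optimal sortings need not cut strips of length at least three~\cite{HannenhalliP96}, extended here to partially signed circular permutations (Lemma~\ref{lem:strip-cut-circ}), and this is why only segments of length at least $5$ are contracted. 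Without this ingredient your soundness argument for the shrinking rules (``a switch forced to touch a contracted region is either inert there or can be rerouted'') has no proof, and your per-polygon strip contraction is precisely the step it would justify. Incidentally, your $2^{\Oh(k\log k)}$ enumeration of ``useful'' switches is fine (if anything more explicit than the paper's), but it only helps once the kernel itself is correct.
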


\medskip

While Theorem~\ref{thm:main}    is not restricted to planar graphs, pipelined with the well-known connection of planar embeddings and Whitney switches, it can be used to obtain interesting algorithmic consequences about distance between planar embeddings of a graph. Recall that 
graphs $G$ and $G^*$ are called abstractly dual if there is a bijection $\pi \colon E(G)\to E(G^*)$ such that edge set $E\subseteq E(G)$ forms a cycle in $G$ if and only if $\pi(E)$ is a minimal edge-cut in $G^*$. By another classical theorem of Whitney \cite{MR1501641}, a graph $G$ has a dual graph if and only if $G$ is planar. Moreover, an embedding of a planar graph into a sphere is uniquely defined by the planar graph $G$ and edges of the faces, or equivalently, its dual graph $G^*$.  
While every 
3-connected planar graph has a unique embedding into the sphere,  a $2$-connected  graph
can have several non-equivalent embeddings, and hence several non-isomorphic dual graphs.  If $G_1^*$ and $G_2^*$ are dual graphs of graph $G$, then $G_1^*$ is 2-isomorphic to $G_2^*$. Then by Theorem~\ref{thm:Withney}, by a sequence of  Whitney switches  $G_1^*$ can be transformed into  $G_2^*$, or equivalently, the  embedding of $G$ corresponding to $G_1^*$ can be transformed  to embedding of $G$ corresponding to $G_2^*$. We refer to the survey of Carsten Thomassen  \cite[Section~2.2]{MR1373661} for more details. By Theorem~\ref{thm:main}, we  have  that given two planar embeddings of a (labelled) 2-connected graph $G$, deciding whether one embedding    can be transformed into another by making use of at most $k$ Whitney switches, admits a kernel of size $\Oh(k)$ and is fixed-parameter tractable.

 \medskip
\noindent\textbf{Related work.} Whitney's theorem had a strong impact on the development of modern graph and matroid theories. While the original proof is long, a number of simpler proofs are known in the literature. The most relevant to our work is the proof of 
Truemper in~\cite{Truemper80}, whose proof is on the application of Tutte decomposition~\cite{Tutte66,Tutte84}.

 \probWS can be seen as an example of reconfiguration problems. The study of reconfiguration problems becomes a popular trend in parameterized complexity, see  e.g. \cite{MouawadN0SS17,lokshtanov2018reconfiguration}.

 The well-studied problem which is similar in spirit to \probWS is the problem of computing the  flip distance for triangulations of of a set of points. 
 The parameterized complexity of this problem was studied in  \cite{MR2541971,MR2667389}.  As we mentioned above, 
 \probWS for planar graphs is equivalent to the problem of computing the Whitney switch distance between planar embeddings. 
 We refer to the survey of Bose and
Hurtado~\cite{BoseH09} for the discussion of the relations between geometric and graph variants. The problem is known to 
be  \NP-complete~\cite{LubiwP15} and \classFPT parameterized by the number of flips~\cite{KanjSX17}. For the special case when the set of points defines a convex polygon, the problem of computing the flip distance between triangulations is  equivalent to computing  the rotation distance between two binary trees. For that case linear kernels are known 
\cite{MR2541971,MR2667389} but for the general case the existence of a polynomial kernel is open.

\medskip
\noindent\textbf{Overview of the proof of Theorem~\ref{thm:main}.}  The main tool in the construction of the kernel  is 
 the classical Tutte decompositions~\cite{Tutte66,Tutte84}.
  We postpone the formal definition till Section~\ref{sec:prelim}, informally,  the Tutte decomposition of a 2-connected graph represents the vertex separators of size two in a tree-like structure. Each node of this tree represents a part of the graph (or \emph{bag}) that is either a 3-connected graph or a cycle, and each edge corresponds to a separator of size two. Then a 2-isomorphism of $G$ and $H$  allows to establish an isomorphism of the trees representing the Tutte decompositions of the input graphs.  
After that,  potential Whitney switches can be divided into two types: the switches with respect to separators corresponding to the edges of the trees and the switches with respect to separators formed by nonadjacent vertices of a cycle-bag. The switches of the first type are relatively easy to analyze and we can identify necessary switches of this type. The ``troublemakers" that make the problem hard are switches of the second type. To deal with them, we use the structural results about sorting of permutations by reversals of 
Hannenhalli and Pevzner~\cite{HannenhalliP96} adapted for our purposes. This allows us to identify a set of vertices of size $\Oh(k)$ that potentially can be used for Whitney switches transforming $G$ to $H$.   Given such a set of crucial vertices, we simplify the structure of the input graphs and then reduce their size.

\medskip
\noindent{\bf Organization of the paper.}
In Section~\ref{sec:prelim}, we give basic definitions. In Section~\ref{sec:reversals}, we discuss the \probRS problem for permutations that is closely related  to \probWS. Section~\ref{sec:techn} contains structural results used by our kernelization algorithm, and in Section~\ref{sec:kern}, we give the algorithm itself. We conclude in Section~\ref{sec:concl} by discussing further directions of research.

\section{Preliminaries}\label{sec:prelim}  

\medskip
\noindent{\bf Graphs.} 
All graphs considered in this paper are finite undirected graphs without loops or multiple edges, unless it is specified explicitly that we consider directed graphs (in Section~\ref{sec:concl} we deal with tournaments).  
We follow the standard graph theoretic notation and terminology (see, e.g., \cite{Diestel12}). For each of the graph problems considered in this paper, we let $n=|V(G)|$ and $m=|E(G)|$ denote the number of vertices and edges,
respectively, of the input graph $G$ if it does not create confusion. 
For a graph $G$ and a subset $X\subseteq V(G)$ of vertices, we write $G[X]$ to denote the subgraph of $G$ induced by $X$.
For a set of vertices $S$, $G-S$ denotes the graph obtained by deleting the vertices of $S$, that is, $G-S=G[V(G)\setminus S]$; for a vertex $v$, we write $G-v$ instead of $G-\{v\}$.
Similarly, for a set of edges $A$ (an edge $e$, respectively), $G-A$ ($G-e$, respectively) denotes the graph obtained by the deletion of the edges of $A$ (an edge $e$, respectively). 
For a vertex $v$, we denote by $N_G(v)$ the \emph{(open) neighborhood} of $v$, i.e., the set of vertices that are adjacent to $v$ in $G$ and we use $E_G(v)$ to denote the set of edges incident to $v$. We use $N_G[v]$ to denote the \emph{closed neighborhood}, that is $N_G(v)\cup \{v\}$.
For $S\subseteq V(G)$, $N_G[S]=\bigcup_{v\in S}N_G[v]$ and $N_G(S)=N_G[S]\setminus S$. We write $N_G^2(v)=N_G(N_G[v])$ for a vertex $v$ to denote the \emph{second neighborhood}.
A vertex $v$ is \emph{simplicial} if $N_G(v)$ is a \emph{clique}, that is, a set of pairwise adjacent vertices. 
A pair $(A,B)$, where $A,B\subseteq V(G)$, is a \emph{separation} of $G$ if $A\cup B=V(G)$,  $A\setminus B\neq\emptyset$, $B\setminus A\neq\emptyset$ and $G$ has no edge $uv$ with $u\in A\setminus B$ and $v\in B\setminus A$;
$|A\cap B|$ is the \emph{order} of the separation. If the order is 2, then we say that $(A,B)$ is a \emph{Whitney separation}.
 A set $S\subseteq V(G)$ is a \emph{separator} of $G$ if there is a separation $(A,B)$ of $G$
 with $S=A\cap B$. For a positive integer $k$, a graph $G$ is \emph{$k$-connected} if $G$ is a connected graph with at least $k+1$ vertices without a separator of size at most $k-1$. In particular, $G$ is 2-connected if $G-v$ is connected for every $v\in V(G)$.

\medskip
\noindent{\bf Isomorphisms.} Graphs $G$ and $H$ are \emph{isomorphic} if there is bijection
 $\eta \colon V(G)\rightarrow V(H)$, called \emph{isomorphism}, preserving edges, that is,   
  $uv\in E(G)$ if and only if  $ \eta(u)\eta(v)\in E(H).$
We say that 2-connected graphs $G$ and $H$ are  \emph{$2$-isomorphic} if there is a bijection $\varphi\colon E(G)\rightarrow E(H)$ such that  $\varphi$ and $\varphi^{-1}$ preserve cycles, that is, for every cycle $C$ of $G$, $C$ is mapped to a cycle of $H$  by $\varphi$ and, symmetrically, every cycle of $H$ is mapped to a cycle of $G$ by $\varphi^{-1}$.   We refer to $\varphi$ as to  \emph{$2$-isomorphism} from $G$ to $H$. 
An isomorphism $\psi\colon V(G)\rightarrow V(H)$ is a \emph{$\varphi$-isomorphism} if for every edge $uv\in E(G)$, $\varphi(uv)=\psi(u)\psi(v)$, and $G$ and $H$ are $\varphi$-isomorphic if there is an isomorphism $G$ to $H$ that is a $\varphi$-isomorphism.   

\medskip
\noindent{\bf Whitney switches.} 
Let $G$ be a 2-connected graph. Let also $(A,B)$ be a Whitney separation of $G$ with $A\cap B=\{u,v\}$. 
 The \emph{Whitney switch}  operation  with respect to $(A,B)$ transforms $G$ as follows: take $G[A]$ and $G[B]$ and identify the vertex $u$ of $G[A]$ with the vertex $v$ of $G[B]$ and, symmetrically, $v$ of $G[A]$ with $u$ of $G[B]$; if $u$ and $v$ are adjacent in $G$, then the edges $uv$ of $G[A]$ and $G[B]$ are identified as well. 
Let $G'$ be the obtained graph.
We define the mapping $\sigma_{(A,B)}\colon E(G)\rightarrow E(G')$ that maps the edges of $G[A]$ and $G[B]$, respectively, to themselves. It is easy to see that 
$\sigma_{(A,B)}$ is a 2-isomorphism of $G$ to $G'$. Therefore, if $\varphi$ is a 2-isomorphism of $G$ to a graph $H$, then $\varphi\circ \sigma_{(A,B)}^{-1}$ is a 2-isomorphism of $G'$ to $H$. To simplify notation, we assume, if it does not create confusion, that the sets of edges of $G$ and $G'$ are identical and we only change incidences by switching. In particular, under this assumption,
we have that $\varphi\circ \sigma_{(A,B)}^{-1}=\varphi$. We also assume that the graphs $G$ and $G'$ have the same sets of vertices.

\medskip
\noindent{\bf Tutte decomposition.} Our kernelization algorithm for \probWS is based on the classical result of Tutte~\cite{Tutte66,Tutte84} about decomposing of 2-connected graphs via separators of size two. 
 Following Courcelle~\cite{Courcelle99}, we define Tutte decompositions in the terms of tree decompositions. 
 
A   {\em tree decomposition} of
a graph $G$ is a pair $\mathcal{T}=(T,\{X_t\}_{t\in V(T)})$, where $T$ is a tree whose every node $t$ is assigned a vertex subset $X_t\subseteq V(G)$, called a bag,
such that the following three conditions hold: 
\begin{description}
\item[(T1)] $\bigcup_{t\in V(T)} X_t =V(G)$, that is, every vertex of $G$ is in at least one bag,
\item[(T2)] for every $uv\in E(G)$, there exists a node $t$ of $T$ such that the bag $X_t$ contains both $u$ and $v$,
\item[(T3)] for every  $v\in V(G)$, the set $T_v = \{t\in V(T) \mid v\in X_t\}$, i.e., the set of nodes whose corresponding  bags contain $v$, induces
a connected subtree of $T$.
\end{description}
To distinguish between the vertices of the decomposition tree $T$ and the vertices of
the graph $G$, we will refer to the vertices of $T$ as {\em{nodes}}. 

Let $\mathcal{T}=(T,\{X_t\}_{t\in V(T)})$ be a tree decomposition of $G$. The \emph{torso} of $X_t$ for $t\in V(T)$ is the graph obtained from $G[X_t]$ by additionally making adjacent every two distinct vertices $u,v\in X_t$ such that there is $t'\in V(T)$ adjacent to $t$ with $u,v\in X_t\cap X_{t'}$. 
For adjacent $t,t'\in V(T)$, $X_t\cap X_{t'}$ is the \emph{adhesion set} of the bags $X_t$ and $X_{t'}$ and $|X_t\cap X_{t'}|$ is the \emph{adhesion} of the bags. The maximum adhesion of adjacent bags is called the \emph{adhesion} of the tree decomposition.  

Let $G$ be a 2-connected graph. A tree decomposition $\mathcal{T}=(T,\{X_t\}_{t\in V(T)})$ is said to be a \emph{Tutte decomposition} if $\mathcal{T}$ is a tree decomposition of adhesion 2 
such that there is a partition $(W_2,W_{\geq 3})$ of $V(T)$ such that the following holds:
\begin{description}
\item[(T4)] $|X_t|=2$ for  $t\in W_2$ and $|X_t|\geq 3$ for $t\in W_{\geq 3}$,
\item[(T5)] the torso of $X_t$ is either a 3-connected graph or a cycle for every $t\in W_{\geq 3}$,
\item[(T6)] for every $t\in W_2$, $d_T(t)\geq 2$ and $t'\in W_{\geq 3}$ for each neighbor $t'$ of $t$,
\item[(T7)] for every $t\in W_{\geq 3}$, $t'\in W_2$ for each neighbor $t'$ of $t$,
\item[(T8)] if $t\in W_2$ and $d_T(t)=2$, then for the neighbors $t'$ and $t''$ of $t$, either the torso of $t'$ or the torso of $t''$ is a 3-connected graph or the vertices of $X_t$ are adjacent in $G$. 
\end{description}

Notice that the bags $X_t$ for $t\in W_2$ are distinct separators of $G$ of size two, and $X_t\subseteq X_{t'}$ for $t\in W_2$ and $t'\in N_T(t)$. Observe also that if $\{u,v\}$ is a separator of $G$ of size two, then either $\{u,v\}=X_t$ for some $t\in W_2$ or $u,v\in X_t$ for $t\in W_{\geq 3}$ such that  the torso of $X_t$ is a cycle and $u$ and $v$ are nonadjacent vertices of the torso.

Combining the results of Tutte~\cite{Tutte66,Tutte84}  and of Hopcroft and Tarjan~\cite{HopcroftT73}, we state the following proposition.
 
\begin{proposition}[\cite{Tutte66,Tutte84,HopcroftT73}]\label{prop:tutte}
A 2-connected graph $G$ has a unique Tutte decomposition that can be constructed in linear time.
\end{proposition}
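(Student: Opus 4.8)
The plan is to obtain this as a reformulation, following Courcelle~\cite{Courcelle99}, of Tutte's classical theorem on decomposing a $2$-connected graph along its separators of size two~\cite{Tutte66,Tutte84}, together with the linear-time algorithm of Hopcroft and Tarjan~\cite{HopcroftT73}. Concretely, I would establish existence by an explicit split-and-merge construction, reduce uniqueness to Tutte's structural analysis, and obtain the running time by invoking the triconnected-components algorithm. I expect uniqueness to be the real work.

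First, for existence I would run the standard split-and-merge. Starting from the single piece $G$, as long as some current piece $H$ has a separator $\{u,v\}$ of size two, split $H$ at $\{u,v\}$ into its split components --- for each bunch of edges of $H$ incident with $\{u,v\}$ take the subgraph it spans plus a fresh virtual copy of $uv$, and if there are three or more bunches add a bond on $\{u,v\}$ as well --- recording which virtual edges are identified across the split. Since each piece stays $2$-connected and the number of pieces increases at each step, this terminates with a tree of pieces, each $3$-connected, a cycle, or a bond (two vertices joined by at least three parallel edges, all but at most one virtual). I would then merge adjacent pieces that are both bonds, or both cycles, by deleting the shared virtual edge and gluing along its ends; a merge of two bonds (resp.\ two cycles) again yields a bond (resp.\ a cycle), so the process stops with no two adjacent pieces that are both bonds or both cycles. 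Finally I read off a tree decomposition: the endpoint pair of each virtual edge is an adhesion set of size two; each bond on $\{u,v\}$ becomes a node of $W_2$ with bag $\{u,v\}$ and degree equal to its number of virtual edges; each $3$-connected piece and each cycle piece becomes a bag of $W_{\geq 3}$; and every remaining tree edge joining two bags of $W_{\geq 3}$ is subdivided by a new node of $W_2$ carrying the corresponding adhesion set. Verifying (T1)--(T8) is then a case analysis: (T1)--(T3) are immediate; (T4)--(T5) hold because a cycle piece's virtual edges are edges of that cycle, so its torso is the cycle itself; (T6)--(T7) hold because a bond has at least two virtual edges and, after merging, is adjacent only to $W_{\geq 3}$-pieces, while the subdivision step separates the $W_{\geq 3}$-bags; and (T8) holds because a degree-two node of $W_2$ is either a bond of multiplicity three that contains the real edge $uv$ (so $u$ and $v$ are adjacent in $G$) or a subdivision node flanked by two $W_{\geq 3}$-pieces not both of which are cycles. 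When $G$ is already $3$-connected or a cycle the construction returns the one-node decomposition with $W_2=\emptyset$.

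For uniqueness --- which I expect to be the main obstacle --- I would invoke Tutte's theorem rather than reprove it: one shows that any two order-two separations of $G$ are either nested or ``cross'' only in a restricted way, that this induces a canonical equivalence relation on such separations, and that conditions (T4)--(T8) force the bags of any Tutte decomposition to be exactly the resulting classes; since the bags are sets of honest vertices of $G$, uniqueness up to isomorphism becomes genuine uniqueness. The delicate point is precisely this structural analysis of how order-two separations interact, and that is why I would cite~\cite{Tutte66,Tutte84} (see also~\cite{Courcelle99}) rather than argue it from scratch. Finally, for the time bound, the split-and-merge above is exactly what the triconnected-components algorithm of Hopcroft and Tarjan~\cite{HopcroftT73} computes, in time linear in $|V(G)|+|E(G)|$, and rewriting its output into the form (T1)--(T8) takes linear time as well, which gives the claimed bound.
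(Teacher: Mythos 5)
Your proposal matches the paper's treatment: the paper does not prove this proposition but simply states it by combining the classical results of Tutte~\cite{Tutte66,Tutte84} (existence and uniqueness of the decomposition, phrased following Courcelle~\cite{Courcelle99}) with the linear-time triconnected-components algorithm of Hopcroft and Tarjan~\cite{HopcroftT73}, which is exactly the derivation you outline. Your split-and-merge sketch and the reduction of uniqueness to Tutte's structural analysis are the standard way these cited results yield the statement, so there is nothing to add.
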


\noindent
{\bf Parameterized Complexity and Kernelization.}
We refer to the  books~\cite{CyganFKLMPPS15,DowneyF13,FominLSM19} for the detailed introduction to the field. Here we only give the most basic definitions. 
In the Parameterized Complexity theorey, the computational complexity  is measured as a function of the input size $n$ of a problem and an integer \emph{parameter} $k$ associated with the input.
A parameterized problem is said to be \emph{fixed parameter tractable} (or \classFPT) if it can be solved in time $f(k)\cdot n^{\Oh(1)}$ for some function~$f$. 
A \emph{kernelization } algorithm for a parameterized problem $\Pi$ is a polynomial algorithm that maps each instance $(I,k)$ of $\Pi$ to an instance $(I',k')$ of $\Pi$ such that
\begin{itemize}
\item[(i)] $(I,k)$ is a yes-instance of $\Pi$ if and only if $(I',k')$ is a yes-instance of $\Pi$, and
\item[(ii)] $|I'|+k'$ is bounded by~$f(k)$ for a computable function~$f$.
\end{itemize}
Respectively, $(I',k')$ is a \emph{kernel} and $f$ is its \emph{size}. A kernel is \emph{polynomial} if $f$ is polynomial.
It is common to present a kernelization algorithm as a series of \emph{reduction rules}.
A reduction rule for a parameterized problem is an algorithm that takes an instance of the problem and computes in polynomial time another instance that is more ``simple'' in a certain way.
A reduction rule is \emph{safe} if the computed instance is equivalent to the input instance.

\section{Sorting by reversals}\label{sec:reversals}
Sorting by reversals is the classical problem with many applications including bioinformatics. 
We refer to the book of Pevzner~\cite{Pevzner00} for the detailed survey of results and applications of this problem.
This problem is also strongly related to   \probWS---solving the problem for two cycles is basically the same as sorting 
   circular permutations by reversals. First we use this relation to observe the NP-completeness. But we also need  to establish some structural properties of  sorting by reversals which will be used in kernelization algorithm. 
   
   \medskip

Let $\pi=(\pi_1,\ldots,\pi_n)$ be a permutation of $\{1,\ldots,n\}$, that is, a bijective mapping of $\{1,\ldots,n\}$ to itself. Throughout this section, all considered permutations are permutations of  $\{1,\ldots,n\}$.
For $1\leq i\leq j\leq n$, the \emph{reversal} $\rho(i,j)$ reverse the order of elements $\pi_i,\ldots,\pi_j$ and transforms $\pi$ into $$\rho(i,j)\circ \pi=(\pi_1,\ldots,\pi_{i-1},\pi_j,\pi_{j-1},\ldots,\pi_i,\pi_{j+1},\ldots,\pi_n).$$ 
The \emph{reversal distance} $d(\pi,\sigma)$ between two permutations $\pi$ and $\sigma$
is the minimum number of reversals needed to transform $\pi$ to $\sigma$.  For a permutation $\pi$, $d(\pi)=d(\pi,\iota)$, where 
$\iota$ is the identity permutation; note that $d(\pi,\sigma)=d(\sigma^{-1}\circ\pi,\iota)$ and this means that computing the reversal distance can be reduced to sorting a permutation by the minimum number of reversals. 

These definitions can be extended for circular permutations (further, we may refer to usual permutations as \emph{linear} to avoid confusion). We say that  $\pi^c=(\pi_1,\ldots,\pi_n)$ is a \emph{circular} permutation if $\pi^c$ is the class of the permutations that can be obtained from the linear permutation $(\pi_1,\ldots,\pi_n)$ by \emph{rotations} and \emph{reflections}, that is, all the permutations 
$$(\pi_1,\ldots,\pi_n),(\pi_n,\pi_1,\ldots,\pi_{n-1}),\ldots,(\pi_2,\ldots,\pi_n,\pi_1)$$ and 
$$(\pi_n,\ldots,\pi_1),(\pi_1,\pi_n,\ldots,\pi_2),\ldots,(\pi_{n-1},\ldots,\pi_1,\pi_n)$$ 
composing one class  are identified, meaning that we do not distinguish them when discussing circular permutations.
For $i,j\in\{1,\ldots,n\}$, the \emph{circular reversal}  $\rho^c(i,j)$ is defined in the same way as $\rho(i,j)$ if $i\leq j$ and for $i>j$, $\rho^c(i,j)$ transforms 
$\pi^c$ into
$$\rho^c(i,j)\circ \pi^c=(\pi_n,\pi_{n-1},\ldots,\pi_{i},\pi_{j+1},\ldots,\pi_{i-1},\pi_{j},\pi_{j-1}\ldots,\pi_1).$$ 
The \emph{circular reversal distance} $d^c(\pi^c,\sigma^c)$ and $d^c(\pi^c)$ are defined in the same way as for linear permutations.

\begin{figure}[ht]
\centering
\scalebox{0.6}{
\input{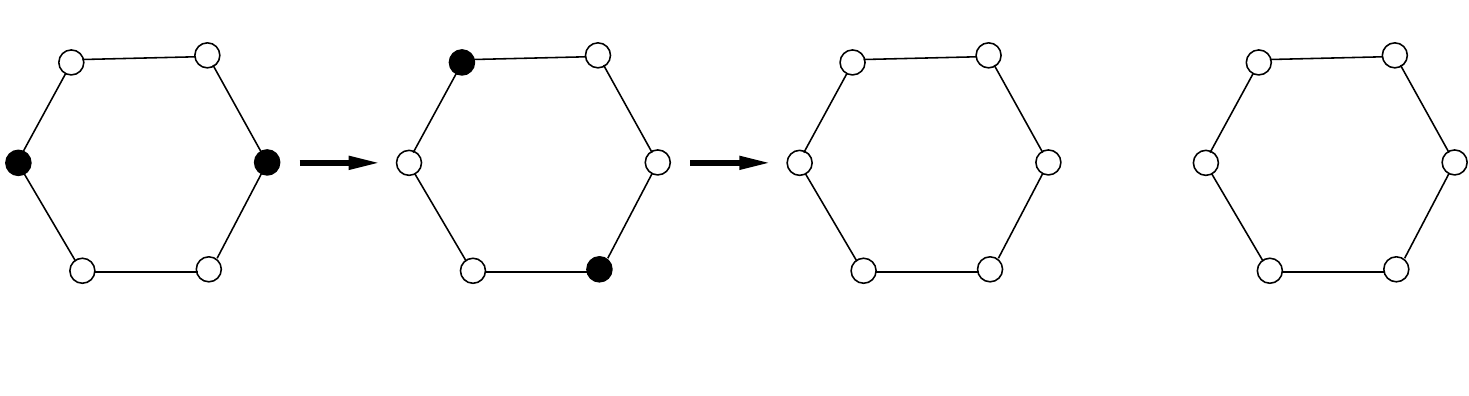_t}}
\caption{The construction of $G'$ that is $\varphi$-isomorphic to $H$ by the Whitney switches 
corresponding to the sorting by reversals $(3,4,1,2,5,6)\rightarrow (1,4,3,2,5,6)\rightarrow(1,2,3,4,5,6)$; $\varphi(e_i)=e_i'$ for $i\in\{1,\ldots,6\}$, the vertices of the separators for the switches are shown in black.} 
\label{fig:reverse}
\end{figure}

To see the connection between Whitney switches and circular reversals of permutations, consider a cycle $G$ with the vertices $v_1,\ldots,v_n$ for $n\geq 4$ taken in the cycle order and the edges $e_i=v_{i-1}v_i$ for $i\in\{1,\ldots,r\}$ assuming that $v_0=v_n$.  Let $1\leq i<j\leq n$ be such that $v_i$ and $v_j$ are not adjacent. 
Then the Whitney switch with respect to $(A,B)$, where $A=\{v_1,\ldots,v_i\}\cup\{v_j,\ldots,v_n\}$ and $B=\{v_i,\ldots,v_j\}$ is equivalent to applying the reversal $\rho^c(i+1,j)$ to the circular permutation $(e_1,\ldots,e_n)$ of the edges of $G$. Moreover, let $H$ be a cycle with $n$ vertices and denote by $e_1',\ldots,e_n'$ its edges in the cycle order.
Notice that every bijection $\varphi\colon E(G)\rightarrow E(H)$ is a 2-isomorphism of $G$ to $H$, and $G$ and $H$ are $\varphi$-isomorphic if and only if  
the circular permutation $\pi^c=(\varphi^{-1}(e_1'),\ldots,\varphi^{-1}(e_n'))$ is the same as $\sigma^c=(e_1,\ldots,e_n)$.  Clearly, we can assume that $\pi^c$ is a permutation of $\{1,\ldots,n\}$ and $\sigma^c$ is the identity permutation. Then  $G$ can be transformed to a graph $G'$ $\varphi$-isomorphic to $H$ by at most $k$ Whitney switches if and only if $d^c(\pi^c)\leq k$. An example is shown in Fig.~\ref{fig:reverse}.

 In particular, the above observation implies the hardness of \probWS, because the computing of the reversal distances is known to be \classNP-hard. For linear permutations, it was shown by Caprara in~\cite{Caprara97}. The following result for circular permutations was obtained by 
Solomon, Sutcliffe, and Lister~\cite{SolomonSL03}.

\begin{proposition}[\cite{SolomonSL03}]\label{prop:perm-hard}
It is \classNP-complete to decide, given a circular permutation $\pi^c$ and a nonnegative integer $k$, whether $d^c(\pi^c)\leq k$.  
 \end{proposition}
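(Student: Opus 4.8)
The plan is to check membership in \classNP directly and to prove \classNP-hardness by reduction from the problem of computing the reversal distance of a \emph{linear} permutation, which is \classNP-hard by Caprara~\cite{Caprara97}.

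Membership is easy: any circular permutation of $N$ symbols can be sorted with at most $N-1$ reversals, so a shortest sorting sequence of $\pi^c$ is a polynomial-size certificate, and a verifier simply applies the guessed reversals in polynomial time and checks that a representative of $\iota^c$ is obtained. For hardness, encode a linear permutation $\pi=(\pi_1,\dots,\pi_n)$ of $\{1,\dots,n\}$ on a circle next to one auxiliary ``anchor'' symbol $0$: let $\pi^c$ be the circular permutation with representative $(0,\pi_1,\dots,\pi_n)$ and consider $d^c(\pi^c)=d^c(\pi^c,\iota^c)$ with $\iota^c=(0,1,\dots,n)$. A short argument reduces this to a linear question. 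Since reversing an arc of a circle and reversing its complementary arc yield the same circular permutation (they differ by a global reflection, which is quotiented out), in a shortest sorting of $\pi^c$ every reversal may be assumed to act on an arc not containing $0$, i.e.\ to be an ordinary reversal of the word $(\pi_1,\dots,\pi_n)$; and a sequence of such $0$-avoiding reversals can only reach the two representatives $(0,1,\dots,n)$ and $(0,n,n-1,\dots,1)$ of $\iota^c$. Hence $d^c(\pi^c)=\min\bigl(d(\pi),\,d(\pi,\iota^R)\bigr)$, where $\iota^R=(n,n-1,\dots,1)$ and $d(\pi,\iota^R)$ denotes the linear reversal distance from $\pi$ to $\iota^R$.

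The crux is that this second term can be strictly smaller than $d(\pi)$, because $\iota^c$ is achiral: for instance $\pi=(n,n-1,\dots,1)$ has $d(\pi)=1$ but $d(\pi,\iota^R)=0$, so the bare anchoring is not answer-preserving. This is exactly what the reduction of Solomon, Sutcliffe and Lister~\cite{SolomonSL03} has to overcome, and it does so by padding $\pi$ with a polynomial number of fresh symbols arranged into a \emph{rigid oriented} block that forces a consistent orientation around the circle, so that the ``reflected'' alternative is made to cost more than $k$ reversals while the ``correct'' one still costs $d(\pi)$. Designing such a block is the main obstacle and genuinely needs care: a naively sorted block does not work, since under \emph{unsigned} reversals a decreasing run of consecutive values carries no breakpoints and is repaired cheaply, so the symmetry breaking must come from a subtler structure. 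Taking this gadget together with the matching lower bound on the cost of the reflected orientation as given (they are the content of~\cite{SolomonSL03}), the resulting instance is constructed in polynomial time and satisfies $d^c(\pi^c)\le k'$ if and only if $d(\pi)\le k$, which yields the theorem.
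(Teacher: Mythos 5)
The paper gives no proof of this proposition at all—it is imported verbatim from Solomon, Sutcliffe and Lister~\cite{SolomonSL03}—and your write-up ultimately does the same, deferring the essential symmetry-breaking gadget and its lower bound to that reference, so you are taking essentially the same route as the paper. Your additional observations (membership in \classNP, the identity $d^c(\pi^c)=\min\{d(\pi,\iota),d(\pi,\iota^R)\}$ obtained by anchoring and replacing each reversal through the anchor by the complementary one, and the reason a naive anchoring is not answer-preserving) are correct but are not something the paper needs or supplies.
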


This  brings us to  the following corollary.

\begin{theorem}\label{cor:hard}
\probWS is \classNP-complete even when restricted to cycles.
\end{theorem}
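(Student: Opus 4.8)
The plan is to derive Theorem~\ref{cor:hard} essentially for free from Proposition~\ref{prop:perm-hard} via the correspondence between Whitney switches on cycles and circular reversals of permutations that was just set up in the paragraph preceding the statement. First I would observe membership in \classNP: given an instance of \probWS and a purported sequence of at most $k$ Whitney switches, each switch can be specified by naming its Whitney separation, applied in polynomial time, and at the end one checks in polynomial time whether the resulting graph is $\varphi$-isomorphic to $H$; so $k$ witness switches together with the intermediate graphs form a polynomial-size certificate. This argument is generic and does not even need the restriction to cycles.

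For hardness I would give a polynomial-time reduction from the circular-reversal-distance problem of Proposition~\ref{prop:perm-hard}. Given a circular permutation $\pi^c=(\pi_1,\ldots,\pi_n)$ of $\{1,\ldots,n\}$ and an integer $k$, I would build a cycle $H$ with $n$ vertices $w_0,\ldots,w_{n-1}$ (indices mod $n$) and edges $e_i'=w_{i-1}w_i$, and a cycle $G$ with $n$ vertices $v_0,\ldots,v_{n-1}$ and edges $e_i=v_{i-1}v_i$; define the bijection $\varphi\colon E(G)\to E(H)$ by $\varphi(e_i)=e'_{\pi_i}$ (equivalently, choose $\varphi$ so that, reading the edges of $H$ in cyclic order and pulling them back along $\varphi^{-1}$, one recovers the cyclic sequence $\pi^c$ while the edges of $G$ in cyclic order are the identity). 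Since $G$ and $H$ are cycles on the same number of vertices, every bijection between their edge sets is a $2$-isomorphism, so $(G,H,\varphi,k)$ is a legitimate instance of \probWS, and it is clearly computable in polynomial time. By the observation established above the statement, $G$ can be transformed into a graph $\varphi$-isomorphic to $H$ by at most $k$ Whitney switches if and only if $d^c(\pi^c)\le k$; hence the reduction is correct. Combining membership in \classNP\ with this reduction and Proposition~\ref{prop:perm-hard} yields \classNP-completeness of \probWS\ restricted to cycles, and a fortiori of \probWS\ in general.

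There is essentially no serious obstacle here, since the heavy lifting is done by Proposition~\ref{prop:perm-hard} and by the Whitney-switch/circular-reversal dictionary already in place; the only point requiring a little care is the bookkeeping between a reversal $\rho^c(i,j)$ and the Whitney separation $(A,B)$ with $A=\{v_1,\ldots,v_i\}\cup\{v_j,\ldots,v_n\}$, $B=\{v_i,\ldots,v_j\}$, and in particular checking that the non-adjacency condition on the two separator vertices of a cycle matches exactly the circular reversals available on the permutation (which is why the paper takes $n\ge 4$: on a cycle every pair of separating vertices is non-adjacent iff the cycle is long enough for the corresponding reversal to be non-trivial, and the at-most-$k$ count is unaffected by trivial moves). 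I would state this matching explicitly as a one-line remark and then conclude.
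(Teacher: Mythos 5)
Your proposal is correct and matches the paper's intended argument: the paper derives this theorem directly from Proposition~\ref{prop:perm-hard} together with the Whitney-switch/circular-reversal correspondence set up just before the statement, which is exactly the reduction you describe (your explicit check of \classNP\ membership and of the non-adjacency/trivial-reversal bookkeeping only spells out details the paper leaves implicit).
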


For our kernelization algorithm, we need some further structural results about reversals in an optimal sorting sequence.

Let $\pi=(\pi_1,\ldots,\pi_n)$ be a linear permutation. For $1\leq i\leq j\leq n$, we say that $(\pi_i,\ldots,\pi_j)$ is an \emph{interval} of $\pi$. An interval $(\pi_i,\ldots,\pi_j)$ is called a \emph{block} if either $i=j$ or $i<j$ and for every $h\in\{i+1,\ldots,j\}$, $|\pi_{h-1}-\pi_h|=1$, that is, a block is formed by consecutive integers in $\pi$ in either the ascending or descending order. An inclusion maximal block is called a \emph{strip}. In other words, a strip is an inclusion maximal interval that has no \emph{breakpoint}, that is, a pair of elements $\pi_{h-1},\pi_h$ with $|\pi_{h-1}-\pi_h|\geq 2$. It is said that a reversal $\rho(p,q)$ \emph{cuts} a strip $(\pi_i,\ldots,\pi_j)$ if  either $i<p\leq j$ or $i\leq q<j$, that is, the reversals separates elements that are consecutive in the identity permutation. 

It is known that there are cases when every optimal sorting by reversal requires a reversal that cuts a strip. For example, as was pointed by Hannenhalli and Pevzner in~\cite{HannenhalliP96}, the permutation $(3,4,1,2)$ requires three reversals that do not cut strips, but the sorting can be done by two reversals: \footnote{This example can be extended for circular permutations:
$(3,4,1,2,5,6)\rightarrow (1,4,3,2,5,6)\rightarrow(1,2,3,4,5,6)$.}
$$(3,4,1,2)\rightarrow (1,4,3,2)\rightarrow (1,2,3,4).$$
Nevertheless, it was conjectured by Kececioglu and Sankoff~\cite{KececiogluS95} that there is an optimal sorting that does not cut strips other than at their first or last elements. This conjecture was proved by Hannenhalli and Pevzner in~\cite{HannenhalliP96}. More precisely, they proved that there is an optimal sorting that does not cut strips of length at least three. 

It is common for bioinformatics applications, to consider \emph{signed permutations} (see, e.g.,~\cite{Pevzner00}). In a signed  permutation $\pis=(\pi_1,\ldots,\pi_n)$, each element $\pi_i$ has its \emph{sign} ``$-$'' or ``$+$''. Then for $i,j\in\{1,\ldots,n\}$, the reversal reverse the sign of each element $\pi_i,\ldots,\pi_j$ besides reversing their order. 
We generalize this notion and define \emph{partially signed linear permutations}, where each element has one of the sings: ``$-$'', ``$+$'' or ``no sign''. Formally, a partially signed linear permutation is 
$\pis=(\langle \pi_1,s_1\rangle,\ldots,\langle \pi_n,s_n\rangle)$ with $s_i\in\{-1,+1,0\}$ for $i\in\{1,\ldots,n\}$. Then for $1\leq i\leq j\leq n$, the reversal
$$\rhos(i,j)\circ\pis=(\langle\pi_1,s_1\rangle,\ldots,\langle\pi_{i-1},s_{i-1}\rangle,\langle\pi_j,-s_j\rangle,\ldots,\langle\pi_i,-s_i\rangle,
\langle\pi_{j+1}, s_{j+1}\rangle\ldots,\langle\pi_n,s_n\rangle).$$
We say that  $\pis=(\langle \pi_1,s_1\rangle,\ldots,\langle \pi_n,s_n\rangle)$ is \emph{signed}
if $s_i=-1$ or $s_i+1$ for each $i\in\{1,\ldots,n\}$, and $\pis$ is \emph{unsigned} if $s_i=0$ for every $i\in\{1,\ldots,n\}$.
We define the \emph{signed linear identity permutation} as
$\iotas=(\langle 1,+1\rangle,\ldots,\langle n,+1\rangle)$. 

We say that a partially signed linear permutation
$\pis=(\langle \pi_1,s_1\rangle,\ldots,\langle \pi_n,s_n\rangle)$ \emph{agrees in signs} with a signed linear permutation 
$\pis'=(\langle \pi_1,s_1'\rangle,\ldots,\langle \pi_n,s_n'\rangle)$ if $s_i=s_i'$ for $i\in\{1,\ldots,n\}$ such that $s_i\neq 0$, that is, the zero signs are replaced by either $-1$ or $+1$ in the signed permutation. For a partially signed linear permutation $\pis$, $\Sigma(\pis)$,  denotes the set of all signed linear permutations $\pis'$  that agree in signs with $\pis$.  
The \emph{reversal distance} $\ds(\pis,\sigmas)$
between a partially signed linear permutation $\pis$ and a signed linear permutation $\sigmas$ is the minimum number or reversal needed to obtain from $\pis$ a partially signed linear permutation $\pis'$ that agrees in signs with $\sigmas$;  $\ds(\pis)=\ds(\pis,\iotas)$. We say that a sequence of reversals of minimum length that result in a partially signed linear permutation 
that agrees in signs with $\iotas$ is an \emph{optimal sorting} sequence. 
It is straightforward to observe the following.

\begin{observation}\label{obs:sign-lin}
For every partially signed linear permutation 
$$\ds(\pis)=\min\{\ds(\pis')\mid {\pis}'\in \Sigma(\pis)\}.$$
\end{observation}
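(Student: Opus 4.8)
The plan is to prove the two inequalities $\ds(\pis)\le\min\{\ds(\pis')\mid\pis'\in\Sigma(\pis)\}$ and $\ds(\pis)\ge\min\{\ds(\pis')\mid\pis'\in\Sigma(\pis)\}$ separately, and both rest on one bookkeeping fact that I would isolate first. \emph{Compatibility:} if a partially signed linear permutation $\overrightarrow{\alpha}$ agrees in signs with a signed linear permutation $\overrightarrow{\beta}$, then for every $1\le i\le j\le n$ the permutation $\rhos(i,j)\circ\overrightarrow{\alpha}$ agrees in signs with $\rhos(i,j)\circ\overrightarrow{\beta}$, and $\rhos(i,j)\circ\overrightarrow{\beta}$ is again a signed permutation. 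This is immediate from the definitions: a reversal acts on \emph{absolute} positions, so $\overrightarrow{\alpha}$ and $\overrightarrow{\beta}$ retain a common underlying permutation; and each new sign is either an old sign or its negation, so the zero signs of $\overrightarrow{\alpha}$ stay zero while at every position carrying a nonzero sign of $\overrightarrow{\alpha}$ that sign equals the corresponding sign of $\overrightarrow{\beta}$ before the reversal and hence also after it. Iterating, applying the \emph{same} sequence of reversals to $\overrightarrow{\alpha}$ and to $\overrightarrow{\beta}$ keeps the two results in the ``agrees in signs'' relation after each step; and since every $\rhos(i,j)$ is an involution, the relation can just as well be propagated backwards along a fixed reversal sequence.

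For ``$\le$'' I would fix an arbitrary $\pis'\in\Sigma(\pis)$ together with an optimal sorting sequence $\rhos(i_1,j_1),\ldots,\rhos(i_t,j_t)$ for it, $t=\ds(\pis')$; since $\pis'$ is signed, this sequence transforms $\pis'$ into $\iotas$ exactly. Applying the same sequence to $\pis$ and invoking Compatibility step by step, the result agrees in signs with $\iotas$, so $\ds(\pis)\le t=\ds(\pis')$; taking the minimum over $\pis'\in\Sigma(\pis)$ yields the first inequality.

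For ``$\ge$'' I would start from an optimal sorting sequence $\rhos(i_1,j_1),\ldots,\rhos(i_t,j_t)$ for $\pis$, with intermediate permutations $\pis=\pis^{(0)},\pis^{(1)},\ldots,\pis^{(t)}$ where $\pis^{(t)}$ agrees in signs with $\iotas$, and run the construction in reverse: set $\overrightarrow{\beta}^{(t)}=\iotas$ and $\overrightarrow{\beta}^{(\ell-1)}=\rhos(i_\ell,j_\ell)\circ\overrightarrow{\beta}^{(\ell)}$ for $\ell=t,\ldots,1$. Each $\overrightarrow{\beta}^{(\ell)}$ is then signed, and a downward induction using the involution property together with Compatibility shows that $\pis^{(\ell)}$ agrees in signs with $\overrightarrow{\beta}^{(\ell)}$ for all $\ell$; in particular $\overrightarrow{\beta}^{(0)}\in\Sigma(\pis)$. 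On the other hand, applying $\rhos(i_1,j_1),\ldots,\rhos(i_t,j_t)$ to $\overrightarrow{\beta}^{(0)}$ undoes the backward construction and recovers $\overrightarrow{\beta}^{(t)}=\iotas$, so $\ds(\overrightarrow{\beta}^{(0)})\le t=\ds(\pis)$, which gives the second inequality. I do not expect a genuine obstacle here: the statement is essentially bookkeeping, and the only points that deserve care are carrying the \emph{underlying} permutation (not merely the signs) correctly through the argument, which is why Compatibility is phrased with a common underlying permutation, and the involution property of reversals, which is what makes the backward construction in the ``$\ge$'' direction legitimate.
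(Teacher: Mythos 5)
Your proof is correct: the Compatibility fact holds exactly as you state it (zero signs stay zero under negation, nonzero signs track the signed twin), and the two directions via applying a fixed reversal sequence forwards, respectively propagating $\iotas$ backwards through the involutions, establish both inequalities. The paper offers no proof of this observation (it is declared straightforward), and your argument is precisely the bookkeeping the authors have in mind, so there is nothing to compare beyond noting that your write-up fills in the omitted details correctly.
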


We generalize the results of Hannenhalli and Pevzner in~\cite{HannenhalliP96} for partially signed linear permutations. Let $\pis=(\langle \pi_1,s_1\rangle,\ldots,\langle \pi_n,s_n\rangle)$. For $1\leq i\leq j\leq n$, $(\langle \pi_i,s_i\rangle,\ldots,\langle\pi_j,s_j\rangle)$ is an \emph{interval} of $\pis$. An interval $(\langle \pi_i,s_i\rangle,\ldots,\langle\pi_j,s_j\rangle)$ is a \emph{signed block} if  $i=j$ or $i<j$  and the following holds: 
\begin{itemize}
\item[(i)] for every $h\in\{i+1,\ldots,j\}$, $|\pi_{h-1}-\pi_h|=1$,
\item[(ii)] the  block is \emph{canonically signed}, that is, $s_h\in\{0,+1\}$ if $\pi_i<\ldots<\pi_j$ and $s_h\in\{0,-1\}$ if $\pi_i>\ldots>\pi_j$.
\end{itemize}
Similarly to unsigned permutations, an inclusion maximal signed block is called a \emph{signed strip}.
A reversal $\rhos(p,q)$ \emph{cuts} a signed strip   $(\langle \pi_i,s_i\rangle,\ldots,\langle\pi_j,s_j\rangle)$ if either $i<p\leq j$ or $i\leq q<j$.

Let $\pis=(\langle \pi_1,s_1\rangle,\ldots,\langle \pi_n,s_n\rangle)$ and $\pis'=(\langle \pi_1,s_1'\rangle,\ldots,\langle \pi_n,s_n'\rangle)$ be signed linear permutations that may differ only in signs and let $\sigma=(\langle \pi_i,s_i\rangle,\ldots,\langle\pi_j,s_j\rangle)$ be a signed strip of $\pis$. It is said that $\pis$ and $\pis'$ are \emph{twins} with respect to $\sigma$ if $s_h=s_h'$ for all $h\in \{1,\ldots,i-1\}\cup\{j+1,\ldots,n\}$, that is, the signs may be only different for elements of $\sigma$. The crucial nontrivial claim of Hannenhalli and Pevzner that was used to show that  soring of unsigned permutations can be done without cutting strips of length at least three is Lemma~3.2 of~\cite{HannenhalliP96}.

\begin{lemma}[\cite{HannenhalliP96}]\label{lem:signed}
Let $\pis$ and $\pis'$ be signed linear permutations that are twins with respect to a signed strip $\sigma$ of $\pis$ with $|\sigma|\geq 3$. Then $\ds(\pis)\leq\ds(\pis')$.
\end{lemma}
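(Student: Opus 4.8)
**Proof proposal for Lemma~\ref{lem:signed} (generalization of Hannenhalli--Pevzner Lemma~3.2 to the partially-signed setting).**

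The plan is to reduce the statement to the original Lemma~3.2 of Hannenhalli and Pevzner~\cite{HannenhalliP96} rather than reprove it from scratch, exploiting the fact that $\pis$ and $\pis'$ here are \emph{signed} permutations (all signs in $\{-1,+1\}$), so the ambient machinery of breakpoint graphs, cycles, and hurdles for signed permutations applies verbatim. Concretely, I would first recall that for a signed permutation the reversal distance is governed by the formula $\ds(\pis) = n+1-c(\pis)+h(\pis)+f(\pis)$ (cycles, hurdles, fortress), and that the only quantities that can possibly change when we flip the signs of a single signed strip $\sigma$ are those computed from the breakpoint graph. The key structural observation, which is exactly the content of the HP argument, is that a signed strip $\sigma$ with $|\sigma|\ge 3$ contributes ``adjacencies'' in the interior that are insensitive to the global orientation of $\sigma$: reversing the signs of all elements of $\sigma$ simultaneously (turning $\pis'$ into $\pis$, say) corresponds to performing a reversal on the segment occupied by $\sigma$, and because $\sigma$ is a strip of length $\ge 3$ this reversal does not create any new breakpoint and does not cut $\sigma$ internally.

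The main steps, in order, would be: (1) set up the breakpoint graph $B(\pis)$ and $B(\pis')$ and observe they differ only by the reversal $\rho$ acting on the coordinates of $\sigma$; (2) show $c(\pis)\ge c(\pis')-1$ and more precisely analyze how the cycle containing the endpoints of $\sigma$ behaves — here the length-$\ge 3$ hypothesis guarantees that the two ``black edges'' at the ends of $\sigma$ are not adjacent in the cycle structure, so the reversal either merges two cycles or splits one, changing $c$ by exactly $\pm 1$; (3) combine this with the standard fact (from HP) that the hurdle/fortress terms cannot jump by more than compensates, so that $\ds(\pis) \le \ds(\pis')$. Since this is precisely the chain of inequalities worked out in Section~3 of~\cite{HannenhalliP96}, I would phrase the argument as: \emph{apply Lemma~3.2 of~\cite{HannenhalliP96} to the pair $(\pis,\pis')$}, noting that both are genuine signed permutations and $\sigma$ is a signed strip of length $\ge 3$ in our sense precisely when it is a strip of length $\ge 3$ in theirs (condition~(ii), canonical signing, is exactly what makes $\sigma$ a strip after forgetting that its interior signs are forced). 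The ``partially signed'' aspect of the lemma's hypotheses plays no role here because the statement of this particular lemma is already purely about signed permutations; the partial signs only enter in the later corollaries via Observation~\ref{obs:sign-lin} and $\Sigma(\pis)$.

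The one point that requires genuine care — and which I expect to be the main obstacle — is verifying that our notion of \emph{canonically signed} block coincides with the HP notion of a strip in the signed breakpoint graph, i.e.\ that an interval satisfying (i) and (ii) really does become an uncut strip of length $\ge 3$ in their framework, so that their Lemma~3.2 is applicable as a black box. This amounts to checking that the elements of $\sigma$, after the standard reduction that pads a signed permutation of $\{1,\dots,n\}$ to one of $\{0,\dots,n+1\}$ and doubles each element, form a contiguous ascending (resp.\ descending) run of $2|\sigma|$ consecutive values with matching orientation, which is immediate from (i) and (ii). Once this identification is made, no new combinatorial work is needed: the inequality $\ds(\pis)\le\ds(\pis')$ is literally their Lemma~3.2, and we are done. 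I would therefore keep the proof short — essentially a dictionary between our terminology and theirs, followed by a citation.
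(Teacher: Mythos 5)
Your proposal takes essentially the same route as the paper: the paper does not reprove this statement but simply cites it as Lemma~3.2 of Hannenhalli and Pevzner~\cite{HannenhalliP96} (remarking that its proof is nontrivial and rests on their duality theorem), which is exactly your plan of a terminology dictionary followed by a black-box citation. One caveat: your intermediate breakpoint-graph sketch is slightly off --- the twin $\pis'$ may carry an arbitrary sign pattern on $\sigma$, so $\pis$ and $\pis'$ do not in general differ by a single reversal on the coordinates of $\sigma$ --- but since you ultimately invoke the cited lemma as a black box, this does not affect the correctness of your approach.
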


Further, Hannenhalli and Pevzner used the result of Kececioglu and Sankoff~\cite{KececiogluS94} that for signed permutations, it is always possible to avoid cutting strips.

\begin{proposition}[\cite{KececiogluS94}]\label{prop:strips-cut-signed}
For a signed linear permutation $\pis$, there is an optimal sorting sequence such that no reversal cuts a signed strip.
\end{proposition}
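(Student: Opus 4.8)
The plan is an induction on $n$ whose single nontrivial step \emph{contracts} a signed strip of length at least two into one element. Suppose $\pis$ has a signed strip $\sigma=(\langle\pi_i,s_i\rangle,\ldots,\langle\pi_j,s_j\rangle)$ of length $\ell=j-i+1\ge 2$; reflecting if necessary I may assume $\pi_i<\cdots<\pi_j$ and $s_i=\cdots=s_j=+1$, so that $\sigma$ is the run of consecutive values $a,a+1,\ldots,a+\ell-1$. I then pass to the signed permutation $\pis'$ on $n-\ell+1$ elements obtained from $\pis$ by replacing the whole block $\sigma$ by a single element of value $a$ and sign $+1$ and decreasing every value exceeding $a+\ell-1$ by $\ell-1$. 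Because $\sigma$ is an \emph{inclusion-maximal} signed block, this relabeling creates no new adjacency at the former ends of $\sigma$.

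The first claim, $\ds(\pis)\le\ds(\pis')$, is the easy direction: I would take any sorting of $\pis'$ and \emph{lift} it reversal by reversal. A reversal whose reversed window misses the super-element is copied verbatim; a reversal whose window contains the super-element is replaced by the reversal of the corresponding window of the expanded permutation, which then contains all of $\sigma$. Reversing $\sigma$ as a whole turns it into the signed strip $(-(a+\ell-1),\ldots,-a)$ and flips the super-element's sign in exactly the same way, so the two runs stay in lockstep and the lifted sequence still reaches $\iotas$. By construction every lifted reversal either contains $\sigma$ entirely or is disjoint from it, hence none cuts $\sigma$.

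The second claim, $\ds(\pis')\le\ds(\pis)$ --- equivalently, that contracting a signed strip does not lower the reversal distance --- is the step I expect to be the main obstacle. The cleanest route I see runs through the Hannenhalli--Pevzner breakpoint graph: the $\ell-1$ internal bonds of $\sigma$ are precisely $\ell-1$ trivial cycles of the breakpoint graph of $\pis$, and contracting $\sigma$ simply deletes these $\ell-1$ trivial cycles. Since trivial cycles lie in no component of the interleaving graph, neither the number of hurdles nor the fortress indicator changes, while the number of cycles and the value $n$ both drop by $\ell-1$; hence, in the Hannenhalli--Pevzner formula $\ds=n+1-c+h+f$ (counting cycles, hurdles, and the fortress), $\ds(\pis')=\ds(\pis)$. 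A more elementary but more delicate alternative is a direct exchange argument, in the spirit of Lemma~\ref{lem:signed}, showing that in an optimal sorting of $\pis$ a reversal cutting $\sigma$ can always be traded for reversals that do not; this is essentially the route of Kececioglu and Sankoff.

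Putting the two claims together gives $\ds(\pis)=\ds(\pis')$, so the sequence obtained in the first step by lifting an optimal sorting of $\pis'$ is an \emph{optimal} sorting of $\pis$ that does not cut $\sigma$. To finish, I apply the induction hypothesis to $\pis'$ (which has fewer than $n$ elements) to obtain an optimal sorting of $\pis'$ cutting no signed strip of $\pis'$; since every signed strip of $\pis$ disjoint from $\sigma$ maps under contraction into a signed strip of $\pis'$, the lift of this sorting cuts neither those strips nor $\sigma$, and is therefore an optimal sorting of $\pis$ that cuts no signed strip. The base case, in which $\pis$ has only length-one signed strips, is immediate, since a length-one strip can never be cut. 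The whole argument rests on the distance-invariance of strip contraction, so that is where the real work lies.
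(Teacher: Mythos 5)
The first thing to note is that the paper does not prove Proposition~\ref{prop:strips-cut-signed} at all: it is imported verbatim from Kececioglu and Sankoff~\cite{KececiogluS94} and used as a black box in the proof of Lemma~\ref{lem:strip-cut-linear}, so there is no in-paper argument to compare yours against. Judged on its own, your skeleton (contract a maximal signed strip $\sigma$ to a single signed element, lift an optimal sorting of the contracted permutation $\pis'$, and induct on the number of elements) is sound: the lifting construction, the observation that contraction creates no new adjacency at the former ends of a maximal strip, and the fact that the remaining strips of $\pis$ survive contraction as blocks of $\pis'$ all check out, and a lifted reversal never cuts $\sigma$ by construction.

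The weight of the proof, as you acknowledge, rests entirely on the invariance $\ds(\pis')=\ds(\pis)$ (more precisely on $\ds(\pis')\le\ds(\pis)$, which is what makes the lifted sequence optimal), and this is exactly the content of the cited result: an optimal sorting of $\pis$ may cut $\sigma$ and therefore does not project to a sorting of $\pis'$, and your fallback ``direct exchange argument in the spirit of Lemma~\ref{lem:signed}'' is not a proof but a pointer back to Kececioglu and Sankoff. Your Hannenhalli--Pevzner route does work: the internal bonds of $\sigma$ are adjacencies, i.e.\ trivial cycles of the breakpoint graph, and writing the distance as $b-c'+h+f$ with $b$ the number of breakpoints and $c'$ the number of nontrivial cycles shows that contraction changes nothing, since a trivial cycle interleaves with no other cycle and so contributes to no component relevant to hurdles or the fortress. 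But in your write-up these facts are asserted rather than argued --- in particular you should verify that the order-preserving relabeling leaves the nontrivial part of the breakpoint graph (cycles, orientations, interleaving) untouched, which is what legitimizes ``simply deletes these $\ell-1$ trivial cycles'' --- and the route leans on the full duality theorem of~\cite{HannenhalliP99}, a far heavier (though logically independent, hence not circular) tool than the statement needs. Finally, your induction as phrased yields a sorting that cuts no strip of the \emph{initial} $\pis$; the stronger reading (no reversal cuts a strip of the current permutation at each step) needs the extra observation that at every intermediate stage the strips of the expanded permutation are exactly the expansions of the strips of the contracted one. Fortunately the paper only ever uses the first reversal of the sequence (see the proof of Lemma~\ref{lem:strip-cut-linear} and the explicit caveat following Lemma~\ref{lem:strip-cut-circ}), so either reading suffices for its purposes.
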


Then the result of Hannenhalli and Pevzner~\cite{HannenhalliP96} is obtained by combining Observation~\ref{obs:sign-lin}, Lemma~\ref{lem:signed} and Proposition~\ref{prop:strips-cut-signed}.
We use the same arguments for partially signed linear permutations and the proof of the following lemma essentially repeats the proof of Theorem~3.1 of~\cite{HannenhalliP96} and we give it here for completeness.

\begin{lemma}\label{lem:strip-cut-linear}
For a partially signed linear permutation $\pis$, there is an optimal sorting sequence such that no reversal cuts a signed strip of length at least three.
\end{lemma}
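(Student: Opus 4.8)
The plan is to follow the same route Hannenhalli and Pevzner use for Theorem~3.1 of~\cite{HannenhalliP96} --- pass to signed permutations via Observation~\ref{obs:sign-lin}, canonicalize long strips with Lemma~\ref{lem:signed}, and then invoke Proposition~\ref{prop:strips-cut-signed} --- but I would run it as an induction on $\ds(\pis)$, so that at each stage it suffices to control only the \emph{first} reversal. If $\ds(\pis)=0$ the empty sequence is an optimal sorting and cuts nothing, so assume $\ds(\pis)=d\geq 1$.

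First I would move to the signed setting. By Observation~\ref{obs:sign-lin} choose $\pis^*\in\Sigma(\pis)$ with $\ds(\pis^*)=d$. Starting from $\pis^*$, I would \emph{canonicalize} every signed strip $S$ of $\pis$ with $|S|\geq 3$: overwrite the signs inside $S$ by $+1$ everywhere if $S$ is ascending and by $-1$ everywhere if $S$ is descending, arriving at a signed permutation $\pis^{**}$. Since the non-zero signs of $\pis$ inside such an $S$ are already canonical, this only overwrites signs at positions that are unsigned in $\pis$, so $\pis^{**}\in\Sigma(\pis)$. Performing the canonicalizations one strip at a time, each step replaces $\pis^{(j-1)}$ by a permutation $\pis^{(j)}$ that is its twin with respect to a signed strip of $\pis^{(j)}$ of length at least three, so Lemma~\ref{lem:signed} gives $\ds(\pis^{(j)})\leq\ds(\pis^{(j-1)})$; chaining these inequalities yields $\ds(\pis^{**})\leq\ds(\pis^*)=d$, and since $\pis^{**}\in\Sigma(\pis)$, Observation~\ref{obs:sign-lin} supplies the matching lower bound, so $\ds(\pis^{**})=d$.

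Next I would apply Proposition~\ref{prop:strips-cut-signed} to $\pis^{**}$, obtaining an optimal sorting of $\pis^{**}$ (of length $d$) in which no reversal cuts a signed strip, and let $\rho$ be its first reversal; then $\ds(\rho\circ\pis^{**})=d-1$. The crucial point is that every signed strip $S$ of $\pis$ with $|S|\geq 3$ is also a signed strip of $\pis^{**}$: its signs are canonical in $\pis^{**}$ by construction, and it cannot be enlarged there, since each position $w$ bordering $S$ either lies across a value-breakpoint (unaffected by canonicalization) or carries a non-zero sign of $\pis$ incompatible with the orientation of $S$, and such a $w$ forms a singleton signed strip of $\pis$, hence is never canonicalized. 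Therefore, as $\rho$ cuts no signed strip of $\pis^{**}$, it cuts no signed strip of $\pis$ of length at least three. Finally, applying $\rho$ to both $\pis^{**}$ and $\pis$ preserves the ``agrees in signs'' relation --- a reversal only permutes positions and flips signs, sending $0$ to $0$ --- so $\rho\circ\pis^{**}\in\Sigma(\rho\circ\pis)$, whence $\ds(\rho\circ\pis)\leq d-1$ by Observation~\ref{obs:sign-lin}; since one reversal changes $\ds$ by at most one, $\ds(\rho\circ\pis)=d-1$. Invoking the induction hypothesis on $\rho\circ\pis$ and prepending $\rho$ produces an optimal sorting of $\pis$ none of whose reversals cuts a signed strip of length at least three.

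The step I expect to require the most care --- and the only place where the extension from signed to \emph{partially} signed permutations is genuinely nontrivial --- is the canonicalization step: one must check both that it never increases $\ds$ (via Lemma~\ref{lem:signed}, applied strip by strip as above) and that it really turns the long signed strips of $\pis$ into signed strips of $\pis^{**}$. The delicate sub-point is that two distinct long signed strips of $\pis$ cannot be accidentally merged by canonicalization: if they were value-consecutive at their common border, that border would already carry compatible canonical-or-zero signs in $\pis$, contradicting the maximality of the two strips; hence a value-breakpoint separates them and canonicalization leaves the separation intact. Organizing the argument as an induction on $\ds(\pis)$ is what keeps this local, since it avoids having to track how signed strips evolve along an entire sorting sequence.
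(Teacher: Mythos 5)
Your proposal is correct and follows essentially the same route as the paper's proof: induction on $\ds(\pis)$, passing to a signed permutation in $\Sigma(\pis)$ via Observation~\ref{obs:sign-lin}, using Lemma~\ref{lem:signed} to assume the long strips of $\pis$ are strips of that signed permutation, and then borrowing the first reversal of a strip-preserving optimal sorting guaranteed by Proposition~\ref{prop:strips-cut-signed}. You merely spell out the canonicalization and non-merging details that the paper leaves implicit, which is a faithful elaboration rather than a different argument.
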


\begin{proof}
Let $\pis$ be a partially signed linear permutation. The lemma is proved by the induction on $d=\ds(\pis)$. The claim is straightforward if $d\leq 1$. Assume that $d\geq 2$ and the claim holds for the lesser values. By Observation~\ref{obs:sign-lin}, there is a signed permutation $\pi'\in\Sigma(\pis)$ such that $\ds(\pis)=\ds(\pis')$. By Lemma~\ref{lem:signed}, we can assume that every signed strip $\sigma$ of $\pis$ of length at least 3 is a signed strip of $\pis'$, i.e., $\sigma$ remains canonically ordered when  zero signs in $\pis$ are replaced by $-1$ or $+1$ to construct $\pis'$. Then, by Proposition~\ref{prop:strips-cut-signed}, there is  an optimal sorting sequence for $\pis'$ such that no reversal cuts a signed strip of this permutation. Let $\rhos(i,j)$ be the first reversal in this sorting sequence. We apply it for $\pis$ and denote $\pis^*=\rhos(i,j)\circ \pis$. Note that $\rhos(i,j)$ does not cuts signed strips of $\pis$ of length at least  three. We also have that 
$\ds(\pis^*)\leq\ds(\rhos(i,j)\circ \pi')=d-1$. By induction, there is an optimal sorting sequence for $\pis^*$ such that no reversal cuts a signed strip of length at least three. This completes the proof.
\end{proof}

In our study of Whitney switches, we are interested in circular permutations and, therefore,  we extend Lemma~\ref{lem:strip-cut-linear} for such permutations. For this, we define a \emph{partially signed circular permutation}   
$\pis^c=(\langle \pi_1,s_1\rangle,\ldots,\langle \pi_n,s_n\rangle)$, where $(\pi_1,\ldots,\pi_n)$ is a linear permutation and $s_i\in\{-1,+1,0\}$ for $i\in\{1,\ldots,m\}$,
 as the class of the linear permutations that can be obtained from $(\langle \pi_1,s_1\rangle,\ldots,\langle \pi_n,s_n\rangle)$ by rotations and reflections such that every reflection reverse signs. In other words, the linear permutations 
$$(\langle \pi_1,s_1\rangle,\ldots,\langle \pi_n,s_n\rangle),(\langle \pi_n,s_n\rangle,\langle \pi_1,s_1\rangle\ldots,\langle \pi_{n-1},s_{n-1}\rangle),\ldots,(\langle \pi_2,s_2\rangle,\ldots,
\langle \pi_n,s_n\rangle,\langle \pi_1,s_1\rangle) $$
and
\begin{align*}
(\langle \pi_n,-s_n\rangle,\ldots,\langle \pi_1,-s_1\rangle),(\langle \pi_1,-s_1\rangle,\langle \pi_{n},-s_{n}\rangle\ldots,\langle \pi_{1},-s_{2}\rangle),\ldots,\\(\langle \pi_{n-1},-s_2\rangle,\ldots,
\langle \pi_1,-s_1\rangle,\langle \pi_n,-s_n\rangle)
\end{align*}
are identified.
For $i,j\in\{1,\ldots,n\}$, the reversal 
$$\rhos^c(i,j)\circ\pis^c=(\langle\pi_1,s_1\rangle,\ldots,\langle\pi_{i-1},s_{i-1}\rangle,\langle\pi_j,-s_j\rangle,\ldots,\langle\pi_i,-s_i\rangle,
\langle\pi_{j+1}, s_{j+1}\rangle\ldots,\langle\pi_n,s_n\rangle)$$
if $i\leq j$, and
$$\rhos^c(i,j)\circ\pis^c=(\langle\pi_n,-s_n\rangle,\ldots,\langle\pi_{i},-s_{i}\rangle,\langle\pi_{j+1},s_{j+1}\rangle,\ldots,\langle\pi_{i-1},s_{i-1}\rangle,
\langle\pi_{j}, -s_{j}\rangle\ldots,\langle\pi_1,-s_1\rangle)$$
otherwise.

In the same way as with partially signed linear permutations, $\pis^c$ is \emph{signed} if each $s_i$ is either $-1$ or $+1$ and the \emph{signed circular identity permutation} is
$\iotas^c=(\langle 1,+1\rangle,\ldots,\langle n,+1\rangle)$. 
Also a partially signed circular permutation
$\pis^c=(\langle \pi_1,s_1\rangle,\ldots,\langle \pi_n,s_n\rangle)$ \emph{agrees in signs} with a signed circular permutation 
$\pis'^c=(\langle \pi_1,s_1'\rangle,\ldots,\langle \pi_n,s_n'\rangle)$ if $s_i=s_i'$ for $i\in\{1,\ldots,n\}$ such that $s_i\neq 0$, that is, the zero signs are replaced by either $-1$ or $+1$ in the signed permutation, and  $\Sigma(\pis^c)$ is used to denote the set of all signed circular  permutations $\pis'^c$  that agree in signs with $\pis^c$.
Then  \emph{reversal distance} $\ds^c(\pis^c,\sigma^c)$, where $\sigmas^c$ is a signed circular permutation, is the minimum number or reversal needed to obtain from $\pis^c$ a  partially signed circular permutation $\pis'^c$ that agrees in signs with $\sigmas^c$, and $\ds^c(\pis^c)=\ds^c(\pis^c,\iotas^c)$. A a sequence of reversals of minimum length that result in a partially signed circular permutation 
that agrees in signs with $\iotas^c$ is an \emph{optimal sorting} sequence. 

We exploit the following properties of partially signed  permutations. To state them, we need some auxiliary notation. 
For a partially signed linear permutation $\pis=(\langle \pi_1,s_1\rangle,\ldots,\langle \pi_n,s_n\rangle)$, we define the negation 
$-\pis=(\langle \pi_n,-s_n\rangle,\ldots,\langle \pi_1,-s_1\rangle)$.
For an integer $h$, we denote $\pis\oplus h=(\langle \pi_{1+h},s_{1+h}\rangle,\ldots,\langle \pi_{n+h},s_{n+h}\rangle)$, where it is assumed that $\pi_0=\pi_n$, $s_0=s_n$ and the other indices are taken modulo $n$. The negation corresponds to the reflection and $\oplus$ defines rotations.

\begin{lemma}\label{lem:rotate}
Let 
$\pis$  be partially signed linear permutation, $\sigmas$ be a signed permutation,
and let $h$ be an integer.
Then
$$\min\{\ds(\pis,\sigmas),\ds(\pis,-\sigmas)\}=\min\{\ds(\pis\oplus h,\sigmas\oplus h),\ds(\pis\oplus h,-(\sigmas\oplus h)\}.$$
\end{lemma}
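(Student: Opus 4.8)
The plan is to establish Lemma~\ref{lem:rotate} directly from the definitions, using the fact that rotation $\oplus h$ and negation commute with reversals in a controlled way. First I would observe that it suffices to treat two separate effects: the effect of replacing $\sigmas$ by $-\sigmas$, and the effect of simultaneously rotating both $\pis$ and $\sigmas$ by $h$. For the first, note that if a sequence of reversals $\rhos(i_1,j_1),\ldots,\rhos(i_d,j_d)$ transforms $\pis$ into a partially signed permutation that agrees in signs with $\sigmas$, then the \emph{mirrored} sequence of reversals (each $\rhos(i,j)$ replaced by $\rhos(n+1-j,n+1-i)$) transforms $-\pis$ into a permutation agreeing with $-\sigmas$; hence $\ds(-\pis,-\sigmas)=\ds(\pis,\sigmas)$. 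Combined with the trivial identity $\ds(\pis,\sigmas)=\ds(\pis,\sigmas)$ this already shows $\min\{\ds(\pis,\sigmas),\ds(\pis,-\sigmas)\}$ is invariant under passing from $(\pis,\sigmas)$ to $(-\pis,\sigmas)$ — but that is not quite what we need, so more care is warranted.

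The cleaner route is to pass through circular permutations. The key point is that the quantity $\min\{\ds(\pis,\sigmas),\ds(\pis,-\sigmas)\}$ depends only on the circular permutations $\pis^c$ and $\sigmas^c$ obtained from $\pis$ and $\sigmas$, \emph{provided} we also quotient by rotations. Concretely, I would argue that for a linear reversal $\rhos(i,j)$ with $i\le j$ acting on $\pis$, the composition with a rotation $\oplus h$ realizes exactly a circular reversal: $(\rhos(i,j)\circ\pis)\oplus h$ equals $\rhos^c(i',j')\circ(\pis\oplus h)$ for the appropriately shifted indices $i'=i-h$, $j'=j-h$ taken modulo $n$ (with the two cases $i'\le j'$ and $i'>j'$ of the circular reversal definition matching the ``wrap-around'' case). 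Since a circular reversal on $\pis^c$ can always be represented, after choosing a suitable rotation of the linear representative, as an ordinary linear reversal, one gets that any sorting sequence for $\pis\oplus h$ against $\sigmas\oplus h$ of length $d$ yields a sorting sequence for $\pis$ against $\sigmas$ of the same length, after undoing the global rotation $\oplus h$ applied once at the start — and conversely. The reflection/negation ambiguity on the right-hand side is precisely what absorbs the two ways a reflection of the circular class can be lined up with a fixed linear representative, which is why the $\min$ over $\sigmas$ and $-\sigmas$ appears on both sides.

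So the concrete steps are: (1) prove $\ds(-\pis,-\sigmas)=\ds(\pis,\sigmas)$ by mirroring reversal sequences, checking that the sign-reversal bookkeeping in the definition of $\rhos$ is consistent under the index substitution $(i,j)\mapsto(n+1-j,n+1-i)$; (2) prove that for every $h$, $\ds(\pis,\sigmas)\ge\min\{\ds(\pis\oplus h,\sigmas\oplus h),\ds(\pis\oplus h,-(\sigmas\oplus h))\}$ and the reverse inequality, by translating an optimal linear sorting sequence through the rotation, using that a rotated linear reversal is a circular reversal and a circular reversal on a rotated permutation is again a linear reversal after a further rotation (this is where the identity $(\rhos(i,j)\circ\pis)\oplus h=\rhos^c(i-h,j-h)\circ(\pis\oplus h)$ is invoked); (3) combine (1) and (2): rotating can only be done ``up to reflection'' when we insist on linear representatives, so the $\min$ with the negation is forced, and symmetrizing the statement over $\pm\sigmas$ on the left as well gives the claimed equality.

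I expect the main obstacle to be step (2), specifically the careful verification that the rotation operator $\oplus h$ intertwines linear reversals with circular reversals including the wrap-around case, and tracking the signs through this: one must check both subcases of the circular reversal definition (the $i\le j$ branch and the $i>j$ branch) and confirm that the ``reflection reverses signs'' convention built into the definition of partially signed circular permutations is exactly consistent with the negation term $-(\sigmas\oplus h)$ appearing on the right-hand side. The rest is bookkeeping: once the intertwining identity is in place, the inequality chain is routine because reversal distance is defined as a minimum length of a sequence and any bijective correspondence between sorting sequences that preserves length transfers the minimum.
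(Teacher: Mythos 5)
Your plan is correct and is essentially the paper's argument: the paper also proves the two inequalities by translating the first reversal of an optimal sequence through the rotation, handling the wrap-around case by reversing the complementary interval (which produces the negation, absorbed by the $\min$ over $\pm\sigmas$), together with the mirroring symmetry $\ds(-\pis,-\sigmas)=\ds(\pis,\sigmas)$. The only difference is organizational: the paper phrases the sequence translation as an induction on the distance rather than translating the whole optimal sequence at once.
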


\begin{proof}
We show that $$\min\{\ds(\pis,\sigmas),\ds(\pis,-\sigmas)\}\geq\min\{\ds(\pis\oplus h,\sigmas\oplus h),\ds(\pis\oplus h,-(\sigmas\oplus h)\}.$$
The proof of the opposite inequality is symmetric and is done by replacing $h$ by $-h$.

The proof is by induction on the distance between permutations. Let 
$\pis=(\langle \pi_1,s_1\rangle,\ldots,\langle \pi_n,s_n\rangle)$ and $\sigmas$ be arbitrary partially signed and signed linear permutations, respectively, with 
 $d=\min\{\ds(\pis,\sigmas),\ds(\pis,-\sigmas)\}$. The claim is trivial for $d=0$. Let $d\geq 1$ and assume that the claim holds for every two permutations at reversal distance at most $d-1$. We assume without loss for generality that $d=\ds(\pis,\sigmas)$, as the other case is symmetric. 

Consider the corresponding sequence of reversals of length $d$ and assume that $\rhos(i,j)$ for $1\leq i\leq j\leq n$ is the first reversal in the sequence. 
Recall that 
$$\pis'=\rhos(i,j)\circ\pis=(\langle\pi_1,s_1\rangle,\ldots,\langle\pi_{i-1},s_{i-1}\rangle,\langle\pi_j,-s_j\rangle,\ldots,\langle\pi_i,-s_i\rangle,
\langle\pi_{j+1}, s_{j+1}\rangle\ldots,\langle\pi_n,s_n\rangle).$$
Note that either $i\neq 1$ or $j\neq n$, because $d\geq\ds(\pis,-\sigmas)$.
Let $i'=(i+h)\mod n$ and $j'=(j+h)\mod n$ assuming that $n\mod n=n$. Let $\pis^*=\pis\oplus h$.

Suppose that $i'\leq j'$. Then $\rhos(i',j')\circ \pis^*=\pis'\oplus h$. By the inductive assumption,
\begin{align*}
d-1\geq &\min\{\ds(\pis',\sigmas),\ds(\pis',-\sigmas)\}\geq\min\{\ds(\pis'\oplus h,\sigmas\oplus h),\ds(\pis'\oplus h,-(\sigmas\oplus h))\}\\
\geq &\min\{\ds(\pis\oplus h,\sigmas\oplus h),\ds(\pis\oplus h,-(\sigmas\oplus h))\}-1,
\end{align*}
and the claim follows.

Assume that $i'>j'$. Let $i''=j'+1$ and $j''=i'-1$. Notice that since $i\neq 1$ or $j\neq n$, $i''\leq j''$. 
Then $\rhos(i'',j'')\circ \pis^*=-(\pis'\oplus h)$. Using the inductive assumption we obtain that 
\begin{align*}
d-1\geq &\min\{\ds(\pis',\sigmas),\ds(\pis',-\sigmas)\}\geq\min\{\ds(\pis'\oplus h,\sigmas\oplus h),\ds(\pis'\oplus h,-(\sigmas\oplus h))\}\\
\geq &\min\{\ds(-(\pis'\oplus h),\sigmas\oplus h),\ds(-(\pis'\oplus h),-(\sigmas\oplus h))\}\\
\geq &\min\{\ds(\pis\oplus h,\sigmas\oplus h),\ds(\pis\oplus h,-(\sigmas\oplus h))\}-1.
\end{align*}
This completes the proof.
\end{proof}

\begin{lemma}\label{lem:linearize}
Let $\pis$ be a partially signed circular permutation. 
Then 
$$\ds^c(\pis^c)=\min\{\ds(\sigmas)\mid \sigmas\in\pis^c\}.$$
\end{lemma}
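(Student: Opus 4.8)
The plan is to prove the two inequalities between $\ds^c(\pis^c)$ and $m:=\min\{\ds(\sigmas)\mid\sigmas\in\pis^c\}$ separately. The inequality $\ds^c(\pis^c)\le m$ is immediate: for any linear representative $\sigmas$ of $\pis^c$, an optimal sorting of $\sigmas$ consists of $\ds(\sigmas)$ linear reversals $\rhos(i,j)$ with $i\le j$, each of which is by definition also a circular reversal $\rhos^c(i,j)$; this sequence ends at a partially signed linear permutation agreeing in signs with $\iotas$, and hence with $\iotas^c$ in the circular sense, so $\ds^c(\pis^c)\le\ds(\sigmas)$. Minimizing over $\sigmas\in\pis^c$ gives $\ds^c(\pis^c)\le m$.

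For the reverse inequality I would fix a linear representative $\mu$ of $\pis^c$ and an optimal sorting sequence of circular reversals $R_1,\dots,R_d$, $d=\ds^c(\pis^c)$, producing the ``running linearizations'' $\mu=\mu_0,\mu_1,\dots,\mu_d$, $\mu_t=R_t\circ\mu_{t-1}$, where $\mu_d$ agrees in signs with some representative of $\iotas^c$, necessarily $\iotas\oplus h$ or $-(\iotas\oplus h)$ for some $h\in\{0,\dots,n-1\}$. Reading off the two displayed formulas defining $\rhos^c(i,j)$, each $R_t$ acts on $\mu_{t-1}$ either as a linear reversal $\rhos(i_t,j_t)$ (when $i_t\le j_t$) or as the negation applied to the linear reversal of the complementary interval, $-(\rhos(j_t+1,i_t-1)\circ\mu_{t-1})$ (when $i_t>j_t$). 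Since the negation commutes with linear reversals, $-(\rhos(p,q)\circ\beta)=\rhos(n+1-q,n+1-p)\circ(-\beta)$, a straightforward induction on $t$ lets us push every occurrence of the negation past the reversals to the end: there are linear reversals $\rhos(p'_1,q'_1),\dots,\rhos(p'_d,q'_d)$ with $\rhos(p'_d,q'_d)\circ\cdots\circ\rhos(p'_1,q'_1)\circ\mu=\pm\mu_d$. Thus this linear permutation, obtained from $\mu$ by $d$ linear reversals, agrees in signs with $\iotas\oplus h$ or with $-(\iotas\oplus h)$, so $\min\{\ds(\mu,\iotas\oplus h),\ds(\mu,-(\iotas\oplus h))\}\le d$.

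It remains to convert this into a bound on $\ds(\sigmas)$ for a representative $\sigmas\in\pis^c$. Applying Lemma~\ref{lem:rotate} with the shift chosen to undo the rotation by $h$ yields $\min\{\ds(\mu,\iotas\oplus h),\ds(\mu,-(\iotas\oplus h))\}=\min\{\ds(\mu\oplus(n-h),\iotas),\ds(\mu\oplus(n-h),-\iotas)\}$, and $\ds(\mu\oplus(n-h),-\iotas)=\ds(-(\mu\oplus(n-h)),\iotas)$ because negating an entire sorting sequence turns one toward $-\iotas$ into one toward $\iotas$. Since $\mu\oplus(n-h)$ and $-(\mu\oplus(n-h))$ are both linear representatives of $\pis^c$, we obtain $m\le d=\ds^c(\pis^c)$, completing the proof.

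I expect the main obstacle to be the bookkeeping around the negation operation: one has to verify directly from the two defining formulas of $\rhos^c$ that a ``wrapping'' circular reversal ($i>j$) equals the negation composed with the non-wrapping linear reversal of the complementary interval, then check that the negation commutes past linear reversals with the indicated relabeling of endpoints so that all of its occurrences can be collected at the end, and finally invoke Lemma~\ref{lem:rotate} with exactly the shift that carries $\iotas\oplus h$ back to $\iotas$. (The same computation can be organized instead as an induction on $d$ that peels off $R_1$ and applies the induction hypothesis to $R_1\circ\pis^c$.)
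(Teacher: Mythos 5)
Your proposal is correct and follows essentially the same route as the paper: the trivial inequality $\ds^c(\pis^c)\le\ds(\sigmas)$, then the key auxiliary claim that $\min\{\ds(\mu,\iotas\oplus h),\ds(\mu,-(\iotas\oplus h))\}\le \ds^c(\pis^c)$ for some shift $h$ (the paper proves it by induction on $d$ peeling off the first circular reversal and treating a wrapping reversal as the negation of the complementary linear reversal, exactly the reorganization you mention in your closing parenthesis), finished off by Lemma~\ref{lem:rotate} and choosing $\sigmas$ to be the rotated linearization or its negation. The only cosmetic difference is that you push all negations to the end in one computation instead of inducting, and the degenerate whole-circle reversal ($i=j+1$), which the paper excludes via optimality, is harmless in your bookkeeping since it contributes a negation and no linear reversal.
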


\begin{proof}
Clearly, for every $\sigmas\in\pis^c$, $\ds^c(\pis^c)\leq \ds(\sigmas)$. Therefore, we have to show that there is $\sigmas\in\pis^c$ such that $\ds^c(\pis^c)\geq \ds(\sigmas)$. 
Let $\pis^c=(\langle \pi_1,s_1\rangle,\ldots,\langle \pi_n,s_n\rangle)$ and let $d=\ds^c(\pis^c)$.

We claim that there is an integer $h$ such that for the partially signed linear permutation $\pis=(\langle \pi_1,s_1\rangle,\ldots,\langle \pi_n,s_n\rangle)$,
$\min\{\ds(\pis,\iotas\oplus h),\ds(\pis,-(\iotas\oplus h))\}\leq d$.

The proof is by the induction on $d$. The claim is trivial if $d=0$. Let $d\geq 1$ and assume that the claim holds for all partially signed circular permutations $\pis'$ with $\ds^c(\pis')\leq d-1$. 
Consider an optimal sorting sequence for $\pis^c$ and let $\rho^c(i,j)$ be the first reversal in the sequence. Let $\pis'^c=\rhos^c(i,j)\circ\pis^c$.

Suppose that $i\leq j$. Let  $\pis'=\rhos(i,j)\circ\pis$. By the inductive assumption, there is $h$ such that 
$\min\{\ds(\pis',\iotas\oplus h),\ds(\pis',-(\iotas\oplus h))\}\leq d-1$. Therefore, $\min\{\ds(\pis,\iotas\oplus h),\ds(\pis,-(\iotas\oplus h))\}\leq d$.

Let $i>j$. If $(j+1)-i=0 \mod n$, that is, the indices $j$ and $i$ are consecutive in the cycle ordering, then $\rhos^c(i,j)$ just reflects $\pis^c$ contradicting the optimality of the chosen sorting sequence. Hence, for $i'=j+1$ and $j'=i-1$, we have that $i'\leq j'$. Let $\pis'=\rhos(i',j')\circ\pis$.
By induction, there is an integer $h$ such that 
$\min\{\ds(-\pis',\iotas\oplus h),\ds(-\pis',-(\iotas\oplus h))\}\leq d-1$. Clearly, 
$\min\{\ds(-\pis',\iotas\oplus h),\ds(-\pis',-(\iotas\oplus h))\}=\min\{\ds(\pis',\iotas\oplus h),\ds(\pis',-(\iotas\oplus h))\}$ and, therefore, 
$\min\{\ds(\pis,\iotas\oplus h),\ds(\pis,-(\iotas\oplus h))\}\leq d$. This competes the proof of the auxiliary claim.

To prove the lemma, observe that by Lemma~\ref{lem:rotate}, we obtain that 
$$
\min\{\ds(\pis\oplus(-h),\iotas),\ds(\pis\oplus(-h),-\iotas)\}=\min\{\ds(\pis,\iotas\oplus h),\ds(\pis,-(\iotas\oplus h))\}\leq d
$$
If $\ds(\pis\oplus(-h),\iotas)\leq\ds(\pis\oplus(-h),-\iotas)$, we set $\sigmas=\pis\oplus(-h)$ and  $\sigmas=-(\pis\oplus(-h))$ otherwise. It is straightforward to see that $\sigmas\in\pis^c$ and this completes the proof.
\end{proof}

The notion of signed strips can be extended for partially signed circular permutations in a natural way. More formally, this is done as follows.
 Let $\pis^c=(\langle \pi_1,s_1\rangle,\ldots,\langle \pi_n,s_n\rangle)$ be a partially signed circular permutation. 
For $1\leq i\leq j\leq n$, we say that $(\langle \pi_i,s_i\rangle,\ldots,\langle\pi_j,s_j\rangle)$ and 
$(\langle \pi_{j+1},s_{j+1}\rangle,\ldots,\langle\pi_n,s_n\rangle,\langle \pi_{1},s_{1}\rangle,\ldots,\langle\pi_i,s_i\rangle)$
are \emph{intervals} of $\pis^c$. 
An interval is a \emph{signed block} if it either has size one or  for every two consecutive elements $\langle \pi_{i-1},s_{i-1}\rangle,\langle \pi_i,s_i\rangle$, 
$|\pi_{i-1}-\pi_i|\leq 1$ and, moreover, if the elements of the interval are in the increasing order, then all the signs $s_i\in\{0,+1\}$, and if they are in the 
the decreasing order, then all the signs $s_i\in\{0,-1\}$. 
A \emph{signed strip} is an inclusion maximal signed block. 
A reversal $\rhos^c(p,q)$ \emph{cuts} an interval if the reversed part includes at least one element of the interval and excludes at least one element of the interval.

We conjecture that the result of Hannenhalli and Pevzner~\cite{HannenhalliP96} can be extended  for partially signed circular permutations in the same way as for the linear case in Lemma~\ref{lem:strip-cut-linear}. However, it seems that for this, the variant of Lemma~\ref{lem:signed} for circular permutations should be proved. This can be done by following and adjusting the  arguments from~\cite{HannenhalliP96}. The proof of Lemma~\ref{lem:signed} is nontrivial and is based on the deep duality theorem of Hannenhalli and Pevzner~\cite{HannenhalliP99} that is also is sated for linear permutations. Hence, proving the circular analog of Lemma~\ref{lem:signed} would demand a lot of technical work and this goes beyond of the scope of our paper. Therefore, we show the simplified claim that can be derived from Lemma~\ref{lem:strip-cut-linear} and is sufficient for our purposes.

\begin{lemma}\label{lem:strip-cut-circ}
For a signed circular permutation $\pis^c$, there is an optimal sorting sequence such that no reversal in the sequence cuts the interval formed by a signed strip of $\pis^c$ of length at least $5$.
\end{lemma}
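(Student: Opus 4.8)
The statement asks to show that for a \emph{signed} circular permutation $\pis^c$ there is an optimal sorting sequence avoiding cuts of signed strips of length at least $5$. The natural strategy is to reduce the circular problem to the linear one, invoke Lemma~\ref{lem:strip-cut-linear}, and then argue that passing back from the linear permutation to the circular one can only ``merge'' strips, so a strip of length at least $5$ in the circular setting comes from a strip of length at least $3$ (in fact $\geq 4$) in some linear representative. First I would use Lemma~\ref{lem:linearize} to pick a linear permutation $\sigmas\in\pis^c$ with $\ds(\sigmas)=\ds^c(\pis^c)$; concretely, by the proof of Lemma~\ref{lem:linearize} one can even choose the ``cut point'' of the linearization, i.e. the rotation $h$, and I would choose it to fall \emph{inside} a longest signed strip of $\pis^c$ (or at an arbitrary breakpoint if $\pis^c$ has no long strip, in which case there is nothing to prove). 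With such a choice, every circular signed strip of length $\ell$ either survives intact as a linear signed strip, or is split by the cut point into two linear strips whose lengths sum to $\ell$; in the latter case, if $\ell\geq 5$ then at least one of the two pieces has length $\geq 3$.

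Next I would apply Lemma~\ref{lem:strip-cut-linear} to $\sigmas$ to obtain an optimal linear sorting sequence $\rhos(i_1,j_1),\ldots,\rhos(i_d,j_d)$ that never cuts a signed strip of $\sigmas$ of length $\geq 3$. Each linear reversal $\rhos(i_t,j_t)$ is, in particular, a circular reversal $\rhos^c(i_t,j_t)$ acting on $\pis^c$ (since $i_t\leq j_t$), and the whole sequence sorts $\pis^c$ using $d=\ds^c(\pis^c)$ reversals, hence is optimal for the circular problem. It remains to check that this sequence does not cut any circular signed strip of $\pis^c$ of length $\geq 5$. Take such a strip $\tau$. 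If the linearization cut point lies outside $\tau$, then $\tau$ is a linear signed strip of $\sigmas$ of length $\geq 5\geq 3$, so no reversal in the sequence cuts it — and ``cutting'' in the circular sense coincides with ``cutting'' in the linear sense here because none of the reversals is a reflection and the strip sits entirely within one linear block. If the cut point lies inside $\tau$, write $\tau$ as the concatenation $\tau = \tau_2\tau_1$ where $\tau_1$ is a prefix-interval and $\tau_2$ a suffix-interval of $\sigmas$; both are linear signed blocks, and since $|\tau_1|+|\tau_2|\geq 5$, WLOG $|\tau_1|\geq 3$, so $\tau_1$ is contained in a linear signed strip of length $\geq 3$ and is never cut; the shorter piece $\tau_2$ might in principle be cut, but a reversal that cuts $\tau$ circularly but not the long piece $\tau_1$ must, by the definition of circular cut, separate an element of $\tau_2$ from the rest while keeping $\tau_1$ on one side — a short case analysis on the three ways a linear reversal $\rhos(p,q)$ can interact with a prefix/suffix interval shows this forces the reversal to cut $\tau_1$ as well, a contradiction. (The only subtlety is a reversal with $i_t=1$ or $j_t=n$ touching the wrap-around point; such a reversal reverses a prefix or suffix of $\sigmas$ and, because it does not cut the long strip $\tau_1$, it either contains all of $\tau$ or is disjoint from all of $\tau$.)

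The main obstacle is precisely this last bookkeeping step: making rigorous the claim that a cut of the \emph{circular} strip $\tau$ that does not cut its long \emph{linear} sub-strip $\tau_1$ is impossible. This requires carefully unwinding the definition of ``a reversal cuts an interval'' for circular permutations (the reversed part must contain at least one element of the interval and omit at least one) and checking it against the few positional cases for $(p,q)$ relative to the endpoints of $\tau_1$ and $\tau_2$ and the wrap-around point; the argument is elementary but fiddly, and choosing the linearization point inside a longest strip is the trick that keeps the number of cases small. Everything else — optimality transfer between the circular and linear problems, and the existence of the good linear representative — is immediate from Lemma~\ref{lem:linearize} and Lemma~\ref{lem:strip-cut-linear}.
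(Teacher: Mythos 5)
There is a genuine gap, in fact two. First, your opening move assumes that you can pick the linearization point freely: you claim that ``by the proof of Lemma~\ref{lem:linearize} one can even choose the cut point'' and then place it inside a longest strip while still having $\ds(\sigmas)=\ds^c(\pis^c)$. Lemma~\ref{lem:linearize} only asserts that \emph{some} representative $\sigmas\in\pis^c$ attains the circular distance; the rotation $h$ in its proof is dictated by the optimal circular sorting, not chosen by you. The two requirements are in general incompatible: for the circular identity (one strip of length $n$, $\ds^c=0$), every linearization that cuts inside the strip, e.g.\ $(3,4,\ldots,n,1,2)$, has strictly positive linear distance. Moreover, you only consider the positional wrap-around; a circular strip can also fail to be a linear strip because the \emph{values} $n$ and $1$ are circularly consecutive but form a linear breakpoint, and no choice of cut point removes that obstruction. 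This value-wrap situation is exactly Case~2 of the paper's proof and cannot be dismissed by relocating the cut.

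Second, the ``fiddly but elementary'' bookkeeping step is not just fiddly, it is false. Suppose the circular strip $\tau$ of length $5$ is split into a long piece $\tau_1$ of length $3$ and a short piece $\tau_2$ of length $2$. Then $\tau_2$ is itself a maximal linear strip of length $2$ (its linear neighbour outside $\tau$ is a breakpoint), so Lemma~\ref{lem:strip-cut-linear} gives it no protection: an optimal linear sorting may contain a reversal $\rhos(p,q)$ whose endpoint falls between the two elements of $\tau_2$ and which is disjoint from $\tau_1$; this cuts $\tau$ circularly without cutting $\tau_1$, contradicting your claimed case analysis. This is precisely why the paper does not argue that the unmodified linear sequence already works: in its Case~2 it \emph{surgically modifies} the sequence, excluding the elements of the short piece $\omega''$ from every reversal and carrying them along glued to the long piece $\omega'$, and it uses the slack that the circular target is the whole class $\iotas^c$, so the modified sequence may legitimately terminate at a rotation $\iotas\oplus h$ of the identity rather than at $\iotas$ itself; its Case~3 then reduces the positional wrap-around to Cases~1--2 via the rotation identity of Lemma~\ref{lem:rotate}. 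Your proposal is missing both the sequence-modification idea and the use of the rotated target, and these are the actual content of the proof.
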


Notice that we do not claim that no reversal cuts a strip of length at least 5 that is obtained by performing the previous reversals; only the long strips of the initial permutation $\pis^c$ are not cut by any reversal in the sorting sequence.  

\begin{proof}
Let $\pis^c$ be a partially signed circular permutation.  By Lemma~\ref{lem:linearize}, there is a partially signed linear permutation $\sigmas\in\pis^c$ such that 
$d=\ds(\sigma)=\ds^c(\pis^c)$. Let 
$\sigmas=(\langle \sigma_1,s_1\rangle,\ldots,\langle \sigma_n,s_n\rangle)$. Note that, by definition, we can write that
$\pis^c=(\langle \sigma_1,s_1\rangle,\ldots,\langle \sigma_n,s_n\rangle)$. We assume that $\ds^c(\pis^c)\geq 1$. We consider 3 cases.

\medskip
\noindent
{\bf Case~1.} Every signed strip of length at least 5 of $\pis^c$ is a signed strip of the linear permutation $\sigmas$. Consider an optimal sorting sequence for $\sigmas$ that does not cut strips of length at least 5 that exists by Lemma~\ref{lem:strip-cut-linear}. Clearly, this sequence is an optimal sorting sequence for $\pis^c$ satisfying the claim.

\medskip
\noindent
{\bf Case~2.}
There is a unique signed strip $\omega=(\langle \sigma_i,s_i\rangle,\ldots,\langle \sigma_j,s_j\rangle )$ for $i\leq i<j\leq n$ 
of $\pis^c$ with length at least 5 that is not a signed strip of $\sigma$. Then 
\begin{equation}\label{eq:omega}
\omega=(\langle p,s_i\rangle,\ldots,\langle n,s_{n-p}\rangle,\langle 1,s_{n-p+1}\rangle,\ldots,\langle ((p+j-i)\mod n),s_j\rangle)
\end{equation}
 for some  $p\geq n-(j-i)+1$ 
or, symmetrically, 
$$\omega=(\langle p, s_i\rangle,\ldots,\langle 1,s_{n-p}\rangle,\langle n,s_{n-p+1}\rangle,\ldots,\langle n-(j-i)+p,s_j\rangle).$$
Using symmetry, we assume without loss of generality that $\omega$ is of form (\ref{eq:omega}) and 
 write that $\omega=\omega'\omega''$, where 
$\omega'= (\langle p,s_i\rangle,\ldots,\langle n,s_{n-p}\rangle)$ and $\omega''=(\langle 1,s_{n-p+1}\rangle\ldots,\langle ((p+j-i)\mod n),s_j\rangle)$
Since $j-i\geq 4$, either $|\omega'|\geq 3$ or $|\omega''|\geq 3$. 
Assume that $|\omega'|\geq 3$ as the other case is completely symmetric. 
By Lemma~\ref{lem:strip-cut-linear}, there is an optimal sorting sequence $\mathcal{S}$ for $\sigmas$ that does not cut strips of length at least 3. In particular, $\omega'$ is not cut by any reversal in the sequence. We modify $\mathcal{S}$ as follows for every reversal:
\begin{itemize}
\item exclude the elements of $\omega''$ from the reversed interval and its complement,
\item if  the reversed interval includes either $w'$ or $-w'$, then replace $w'$ by $w$,
\item if the complement of the reversed interval contains either $w'$ or $-w'$, then replace $w'$ by $w$.
\end{itemize}
In other words, whenever we reverse $w'$, we reverse it together with $w''$, and if we do not reverse $w'$, we keep $w''$ together with $w'$ and do not reverse the elements of $w''$. Let $\mathcal{S}'$ be the obtained sequence. It is straightforward to verify that every step of $\mathcal{S}'$ is indeed a reversal and no reversal cuts a strip of $\pis^c$ of length at least 5. Moreover, 
after performing all the reversals of $\mathcal{S}'$ we obtain the partially signed permutation that agrees in signs with $\iotas\oplus ((p+j-i)\mod n)$ that is in $\iotas^c$. This means that $\mathcal{S}'$ is a sorting sequence for $\pis^c$ of length  $d$.

\medskip
\noindent
{\bf Case~3.}  Three are $1\leq i<j\leq n$ such that 
$$\omega=(\langle \sigma_j, s_j\rangle,\ldots,\langle \sigma_n,s_n\rangle,\langle \sigma_1,s_1\rangle,\ldots,\langle \sigma_i,s_i\rangle)$$
  is a strip of $\pis^c$. Let $\sigmas'=\sigma\oplus(-i)=(\langle \sigma_1',s_1'\rangle,\ldots,\langle \sigma_n',s_n'\rangle)$. By Lemma~\ref{lem:rotate}, 
$$d=\min\{\ds(\sigmas',\iotas\oplus (-i)),\ds(\sigmas,-(\iotas\oplus (-i))\}.$$ Since the cases are symmetric, assume without loss of generality that 
$\ds(\sigmas',\iotas\oplus (-i))=d$. Consider $\sigmas''=\sigma\oplus(-i)=(\langle \sigma_1'',s_1'\rangle,\ldots,\langle \sigma_n'',s_n'\rangle)$, where 
$\sigma_i''=(\sigma_i'+i) \mod n$ (assuming that $n\mod n=n$). We have that $\ds(\sigmas'')=d$ and sorting of $\sigmas''$ is equivalent to computing the minimum sequence of reversals needed to transform $\sigma'$ to a partially signed permutation that agrees in signs with $\iotas\oplus (-i)$. Note that sorting of the circular partially signed permutation 
$\sigmas''^c$ is equivalent to sorting $\rhos^c$ and $\sigmas''^c$ has no strips including $\langle \sigma'',s_n'\rangle$ and $\langle \sigma_1'',s_1\rangle$.
 Finally, observe that either  every signed strip of length at least 5 of $\sigmas''^c$ is a signed strip of the linear permutation $\sigmas''$ and we are in Case~1 or
there is a unique signed strip $\omega''=(\langle \sigma_i'',s_i'\rangle,\ldots,\langle \sigma_j'',s_j''\rangle )$ for $i\leq i<j\leq n$ 
of $\sigmas''^c$ with length at least 5 that is not a signed strip of $\sigma''$ and we are in Case~2.
\end{proof}

We conclude the section by observing that if the elements of a partially signed circular permutation are ordered, then the sorting can be done easily. We say that the reversal $\rhos^c(i,j)$ is \emph{trivial} %if ether 
$i=j$. 

\begin{lemma}\label{lem:signs}
Let $\pis^c=(\langle 1,s_1\rangle,\ldots,\langle n,s_n\rangle)$. Then $\ds^c(\pis^c)=|I|$, where $I=\{i\mid 1\leq i\leq n,s_i=-1\}$ and the reversals $\rhos(i,i)$ for $i\in I$ compose an optimal sorting sequence.
\end{lemma}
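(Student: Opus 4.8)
The plan is to establish the two inequalities $\ds^c(\pis^c)\le|I|$ and $\ds^c(\pis^c)\ge|I|$ separately; the first is an explicit construction that simultaneously exhibits the claimed optimal sequence, and the second is the substantial part.

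For the upper bound, the key observation is that a trivial reversal $\rhos(i,i)$ applied to a partially signed circular permutation only replaces $\langle\pi_i,s_i\rangle$ by $\langle\pi_i,-s_i\rangle$ and changes nothing else: neither the cyclic order of the elements nor any other sign. Hence, starting from $\pis^c=(\langle 1,s_1\rangle,\ldots,\langle n,s_n\rangle)$ and performing the $|I|$ reversals $\rhos(i,i)$ for $i\in I$ in any order, we reach the partially signed circular permutation $(\langle 1,t_1\rangle,\ldots,\langle n,t_n\rangle)$ in which $t_i=+1$ for $i\in I$, $t_i=s_i\in\{0,+1\}$ otherwise. Every sign of the result lies in $\{0,+1\}$, so it agrees in signs with $\iotas^c$; therefore $\ds^c(\pis^c)\le|I|$ and this particular list of reversals is a sorting sequence.

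For the lower bound I would argue $\ds^c(\pis^c)\ge|I|$ via the breakpoint-graph lower bound for sorting signed permutations by reversals. By Lemma~\ref{lem:linearize}, $\ds^c(\pis^c)=\min\{\ds(\sigmas)\mid\sigmas\in\pis^c\}$, so it suffices to show $\ds(\sigmas)\ge|I|$ for every linear permutation $\sigmas$ in the class $\pis^c$, i.e.\ for every rotation $(\langle 1,s_1\rangle,\ldots,\langle n,s_n\rangle)\oplus h$ and every negation of such a rotation. For a signed linear permutation, each reversal changes the number of cycles of its breakpoint graph by at most one and the sorted permutation maximises this number, which gives the Hannenhalli--Pevzner bound $\ds(\sigmas)\ge (n+1)-c(\sigmas)$, and more precisely $\ds(\sigmas)=(n+1)-c(\sigmas)+h(\sigmas)+f(\sigmas)$ with $h(\sigmas),f(\sigmas)\ge 0$. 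The clean case is the canonical linearization $\sigmas_0=(\langle 1,s_1\rangle,\ldots,\langle n,s_n\rangle)$, whose underlying permutation is the identity: here the breakpoint graph decomposes so that each maximal run of $I$ collapses its incident gaps into a single oriented cycle while every other gap is its own two-cycle, so $c(\sigmas_0)=n+1-|I|$ and $h(\sigmas_0)=f(\sigmas_0)=0$, whence $\ds(\sigmas_0)=|I|$.

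The main obstacle is handling the remaining linearizations. For a nontrivial rotation or for a reflected linearization the underlying permutation is no longer the identity, and one must show it is never strictly easier: either its breakpoint graph again has $n+1-|I|$ cycles and no hurdles, or the extra cycles it gains are exactly compensated by hurdles (or a fortress) in the HP formula. In particular, for the reflected linearizations — such as the fully reversed all-$+$ permutation that appears when $\pis^c$ is all-negative — the naive bound $(n+1)-c(\sigmas)$ is genuinely too weak, and one has to invoke the hurdle/fortress term of the HP formula; the point where this remains manageable is precisely that the underlying permutation of $\pis^c$ is the identity, so the possible breakpoint-graph configurations over all rotations and reflections form a finite, well-structured family that can be analysed directly. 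Combining this with the upper bound yields $\ds^c(\pis^c)=|I|$, and the trivial reversals $\rhos(i,i)$, $i\in I$, form an optimal sorting sequence.
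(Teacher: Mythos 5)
Your upper bound is correct and is essentially what the paper treats as immediate: the trivial reversals $\rhos^c(i,i)$, $i\in I$, only flip the sign of the $i$-th element, so after performing them every sign lies in $\{0,+1\}$ and the result agrees in signs with $\iotas^c$. The gap is in the lower bound. After invoking Lemma~\ref{lem:linearize} you must prove $\ds(\sigmas)\geq |I|$ for \emph{every} linearization $\sigmas\in\pis^c$, and you actually argue this only for the canonical one $(\langle 1,s_1\rangle,\ldots,\langle n,s_n\rangle)$. For the rotated and for the reflected (sign-negated) linearizations you yourself name the required hurdle/fortress analysis as the main obstacle and then wave it away by asserting that the possible breakpoint-graph configurations form a ``finite, well-structured family''; they do not (the configurations depend on $n$, on the rotation offset $h$, and on the placement and lengths of the runs of $I$), and no analysis is supplied. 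Your own example shows why this cannot be skipped: when $\pis^c$ is all-negative, the reflected linearization is $(+n,\ldots,+1)$, for which the cycle bound $(n{+}1)-c$ is strictly smaller than $|I|$ and the inequality holds only because of hurdles. So the part of the proof that actually carries the content of the lemma is missing. A second, smaller gap: the Hannenhalli--Pevzner duality $d=(n{+}1)-c+h+f$ is a theorem about fully signed permutations, whereas $\sigmas$ is only partially signed; by Observation~\ref{obs:sign-lin} the bound must be verified for every completion of the zero signs. For the canonical linearization this is harmless (any completion only enlarges the negative set), but for the other linearizations it is yet another case split you never address.

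For comparison, the paper's proof avoids the HP machinery altogether and works directly on the circular permutation: for each $i\in I$ the two adjacencies formed by $\langle i-1,s_{i-1}\rangle,\langle i,s_i\rangle$ and $\langle i,s_i\rangle,\langle i+1,s_{i+1}\rangle$ must be split by some reversal (otherwise the parity constraints on order and sign of the pair conflict), and a short amortized argument for consecutive elements of $I$ (a split that preserves the neighbour's sign creates a new pair that must be split again) shows that an optimal sequence contains at least $|I|$ reversals. If you want to keep your breakpoint-graph route, you would have to either carry out the hurdle analysis uniformly over all rotations, reflections and sign completions, or find an argument showing that the canonical linearization minimizes the linear distance within the class $\pis^c$; neither is present in the proposal.
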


\begin{proof}
Let $i\in I$. Let $\mathcal{S}$ be an optimal sorting sequence. We assume that $\mathcal{S}$ does not contain reversals $\rhos^c(i-2,i)$ for $i\in\{1,\ldots,n\}$
(as before, we take the values modulo $n$ assuming that $n\mod n=n$), because they are equivalent to $\rhos^c(i,i)$. 
Observe that if $i\in I$, then the intervals $(\langle (i-1),s_{i-1}\rangle,\langle i,s_i\rangle)$ and $(\langle i,s_i\rangle,\langle (i+1),s_{i+1}\rangle)$ should be split by some reversals from $\mathcal{S}$. Moreover, we can observe the following for  $i-1,i\in I$. Assume that $\rhos^c(p,q)$ is the first reversal that splits $(\langle (i-1),s_{i-1}\rangle,\langle i,s_i\rangle)$ and assume that $i-1$ keeps its sign $s_{i-1}$. Let $\sigma=(\langle (i-1),s_{i-1}\rangle,\langle j,s_j\rangle)$ be the interval composed by 
$(\langle (i-1),s_{i-1}\rangle$ and the next element after applying $\rhos^c(p,q)$. If the reversal is trivial, then $\sigma=(\langle (i-1),s_{i-1}\rangle,\langle i,-s_i\rangle)$ and $\sigma$ should be split again. If $j\neq i$, then we have to split $\sigma$, because $j\neq i-2$ and, therefore, $|j-(i-1)|>1$. These observations imply that $\mathcal{S}$ contains at least $|I|$ reversals.
Therefore, the sorting sequence formed by the reversals $\rhos(i,i)$ for $i\in I$ is optimal.
\end{proof}

\section{Tutte decomposition and 2-isomorphisms}\label{sec:techn}
 In this section we provide a number of auxiliary results about 2-isomorphisms and Tutte decompositions.

Recall that for two $n$-vertex 2-connected graphs $G$ and $H$, a bijective mapping $\varphi\colon E(G)\rightarrow E(H)$ is a 2-isomorphism if  $\varphi$ and $\varphi^{-1}$ preserve cycles.  
We also say that an isomorphism $\psi\colon V(G)\rightarrow V(H)$ is a $\varphi$-isomorphism if for every edge $uv\in E(G)$, $\varphi(uv)=\psi(u)\psi(v)$, and $G$ and $H$ are $\varphi$-isomorphic if there is an isomorphism $G$ to $H$ that is a $\varphi$-isomorphism. We need the following folklore observation about $\varphi$-isomorphisms that we prove for completeness.
For this, we extend $\varphi$ on sets of edges in standard way, that is, $\varphi(A)=\{\varphi(e)\mid e\in A\}$ and $\varphi(\emptyset)=\emptyset$. 

\begin{lemma}\label{lem:folk} 
Let $G$ and $H$ be $n$-vertex $2$-connected $2$-isomorphic graphs with a $2$-isomorphism $\varphi$. Then $G$ and $H$ are $\varphi$-isomorphic if and only if 
there is a bijective mapping $\psi\colon V(G)\rightarrow V(H)$ such that 
for every $v\in V(G)$, $\varphi(E_G(v))=E_{H}(\psi(v))$. Moreover, $G$ and $H$ are  $\varphi$-isomorphic if and only if $\varphi$ bijectively maps the family of the sets of edges
$\{E_G(v)\mid v\in V(G)\}$ to the family $\{E_H(v)\mid v\in V(H)\}$, and this property can be checked in polynomial time.
\end{lemma}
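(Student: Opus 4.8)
The plan is to prove the "if and only if" by translating the definition of $\varphi$-isomorphism into the language of incident-edge sets, and then to observe that the family condition is checkable in polynomial time.

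First I would prove the forward direction: if $G$ and $H$ are $\varphi$-isomorphic, let $\psi$ be the witnessing isomorphism, so that $\varphi(uv)=\psi(u)\psi(v)$ for every edge $uv\in E(G)$. Fix $v\in V(G)$. For any edge $e=vw\in E_G(v)$ we have $\varphi(e)=\psi(v)\psi(w)\in E_H(\psi(v))$, so $\varphi(E_G(v))\subseteq E_H(\psi(v))$. Conversely, given $f\in E_H(\psi(v))$, write $f=\psi(v)x$; since $\psi$ is a bijection $x=\psi(w)$ for some $w\in V(G)$, and since $\psi$ is an isomorphism $vw\in E(G)$, whence $\varphi(vw)=\psi(v)\psi(w)=f$. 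Thus $\varphi(E_G(v))=E_H(\psi(v))$, giving the required bijective $\psi$.

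Next I would prove the reverse direction: suppose $\psi\colon V(G)\to V(H)$ is a bijection with $\varphi(E_G(v))=E_H(\psi(v))$ for all $v$. I want to show $\psi$ is an isomorphism realizing $\varphi$. Take an edge $uv\in E(G)$. Then $\varphi(uv)\in\varphi(E_G(u))\cap\varphi(E_G(v))=E_H(\psi(u))\cap E_H(\psi(v))$, so $\varphi(uv)$ is an edge of $H$ incident to both $\psi(u)$ and $\psi(v)$; since $H$ has no loops or multiple edges, this forces $\varphi(uv)=\psi(u)\psi(v)$, and in particular $\psi(u)\psi(v)\in E(H)$. Conversely, if $\psi(u)\psi(v)\in E(H)$, then this edge lies in $E_H(\psi(u))\cap E_H(\psi(v))=\varphi(E_G(u))\cap\varphi(E_G(v))$, so $\varphi^{-1}(\psi(u)\psi(v))$ is an edge of $G$ incident to both $u$ and $v$, hence equals $uv$; thus $uv\in E(G)$. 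Therefore $\psi$ is an isomorphism and $\varphi(uv)=\psi(u)\psi(v)$ for every $uv\in E(G)$, i.e.\ $\psi$ is a $\varphi$-isomorphism. The only subtlety here is the use of simplicity of $G$ and $H$ (no multiple edges) to pin down the edge from its two endpoints; since $2$-connected graphs on $n\ge 3$ vertices have no loops either, this is fine, and I would just note it explicitly.

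Finally, for the "moreover" part, I would argue that the existence of such a $\psi$ is equivalent to $\varphi$ mapping the family $\{E_G(v)\mid v\in V(G)\}$ bijectively onto $\{E_H(v)\mid v\in V(H)\}$. If such a $\psi$ exists, then $A\mapsto\varphi(A)$ sends $E_G(v)$ to $E_H(\psi(v))$, and since $\psi$ is a bijection this is a bijection of families. Conversely, if $\varphi$ maps the first family bijectively onto the second, then for each $v\in V(G)$ there is a unique $u\in V(H)$ with $\varphi(E_G(v))=E_H(u)$ (uniqueness because in a $2$-connected, hence loopless graph with $\delta\ge 2$ the sets $E_H(u)$ are pairwise distinct — two distinct vertices cannot have identical edge-incidence sets, as that would require the graph to be a single edge); defining $\psi(v)=u$ gives the desired bijection. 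Polynomial-time checkability is then immediate: compute the two families $\{E_G(v)\}$ and $\{E_H(v)\}$, apply $\varphi$ to each set in the first family, and test whether the resulting family of edge-sets equals the second family (e.g.\ by sorting the sets). I expect the main thing to be careful about is the edge-cases in the family-distinctness claim, i.e.\ ruling out $n\le 2$ and ensuring minimum degree at least $2$, both of which follow from $2$-connectivity; everything else is a routine unwinding of definitions.
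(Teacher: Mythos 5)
Your proof is correct and follows essentially the same route as the paper: translate $\varphi$-isomorphism into the condition $\varphi(E_G(v))=E_H(\psi(v))$ and use intersections of incident-edge sets (together with simplicity of the graphs) to recover adjacency, with the family version and polynomial-time check following directly. The only difference is that you spell out the ``moreover'' part (pairwise distinctness of the sets $E_H(u)$ via $2$-connectivity) which the paper dismisses as immediate; this is a harmless elaboration, not a different argument.
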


\begin{proof}
If $\psi$ is an $\varphi$-isomorphism of $G$ to $H$, then $\varphi(E_G(v))=E_{H}(\psi(v))$ for all $v\in V(G)$ by the definition. For the opposite direction, assume that $\psi\colon V(G)\rightarrow V(H)$ is a bijection such that $\varphi(E_G(v))=E_{H}(\psi(v))$ for every $v\in V(G)$.  Suppose that $u$ and $v$ are distinct vertices of $G$. We claim that $u$ and $v$ are adjacent in $G$ if and only if $\psi(u)$ and $\psi(v)$ are adjacent in $H$. 
Suppose that $u$ and $v$ are adjacent in $G$. Then $E_G(u)\cap E_G(v)=\{uv\}$. Therefore,  
$E_H(\psi(u))\cap E_H(\psi(v))=\varphi(E_G(u))\cap \varphi(E_G(v))=\{\varphi(uv)\}$. This means that $\psi(u)$ and $\psi(v)$ are adjacent in $H$ and $\psi(u)\psi(v)=\varphi(uv)$. If $u$ and $v$ are not adjacent, then  $E_G(u)\cap E_G(v)=\emptyset$ and  $E_H(\psi(u))\cap E_H(\psi(v))=\varphi(E_G(u))\cap \varphi(E_G(v))=\emptyset$, that is, $\psi(u)$ and $\psi(v)$ are not adjacent in $H$.

The second claim of the lemma immediately follows from the first. 
\end{proof}

By Lemma~\ref{lem:folk}, we can restate the task of \probWS and ask whether it is possible to obtain a graph $G'$ by performing at most $k$ Whitney switches starting from $G$ with the property 
 that the extension of $\varphi$ to  the family of sets $\{E_{G'}(v)\mid v\in V(G')\}$ bijectively maps this family to $\{E_H(v)\mid v\in V(H)\}$.
 
We use Whitney's theorem~\cite{Whitney33}(see also~\cite{Truemper80}).

\begin{proposition}[\cite{Whitney33}]\label{prop:whitney}
Let $G$ and $H$ be $n$-vertex graphs and let $\varphi$ be a $2$-isomorphism of $G$ to $H$. Then  there is a finite sequence of Whitney switches such that the graph $G'$ obtained from $G$ by these switches is $\varphi$-isomorphic to $H$.
\end{proposition}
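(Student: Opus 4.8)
The statement above is Whitney's classical theorem, rephrased in the form we need, so rather than reprove it from scratch I would reorganize Truemper's argument~\cite{Truemper80} around the Tutte decomposition we have just introduced, and prove it by induction on the number of bags of the Tutte decomposition $\mathcal{T}=(T,\{X_t\}_{t\in V(T)})$ of $G$. The two external ingredients I would lean on are: \textbf{(a)} whether a pair of vertices is a Whitney separation of a $2$-connected graph is determined by its cycle matroid (a vertex Whitney separation corresponds to a matroid $2$-separation), so together with the uniqueness of the Tutte decomposition (Proposition~\ref{prop:tutte}) the $2$-isomorphism $\varphi$ induces an isomorphism between the decomposition trees of $G$ and $H$ that matches $3$-connected torsos with $3$-connected torsos and cycle-torsos with cycle-torsos; and \textbf{(b)} for a $3$-connected graph a $2$-isomorphism is already a $\varphi$-isomorphism~\cite[Lemma~1]{Truemper80}.

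\textbf{Base case.} If the decomposition has a single bag of size at least $3$, then $G$ is either $3$-connected or a cycle. In the $3$-connected case, fact~(a) gives that $H$ is $3$-connected and $2$-isomorphic to $G$, so by~(b) $G$ and $H$ are already $\varphi$-isomorphic and no switch is needed. In the cycle case, $H$ is a cycle of the same length, and after identifying $E(G)$ and $E(H)$ with $\{1,\dots,n\}$ along their cyclic orders, $\varphi$ becomes a circular permutation of the edges. As spelled out in Section~\ref{sec:reversals}, a Whitney switch at a pair of nonadjacent cycle-vertices realizes a circular reversal; fixing a linear representative and sorting it by adjacent transpositions $\rho^c(i,i+1)$ (each of which is a legal Whitney switch once $n\geq 4$, and which together generate the full symmetric group) transforms $G$ into a cycle that is $\varphi$-isomorphic to $H$.

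\textbf{Inductive step.} With at least two bags, I would take a leaf of the decomposition tree, i.e.\ a Whitney separation $(A_1,A_2)$ of $G$ with $A_1\cap A_2=\{u,v\}$ such that the torso on $A_1$ (namely $G[A_1]$ plus the virtual edge $uv$) is a leaf torso and $G[A_2]$ carries all remaining bags. By~(a), $\varphi$ maps this to a Whitney separation $(B_1,B_2)$ of $H$ with $\varphi(E(G[A_1]))=E(H[B_1])$, $\varphi(E(G[A_2]))=E(H[B_2])$, and the virtual edge on $\{u,v\}$ matched to the one on $B_1\cap B_2$. Applying the base case inside $A_1$ — using that the relevant switches are either internal to a cycle-torso on $A_1$ or the switch at the leaf adhesion set $\{u,v\}$ itself, all of which are Whitney separations of all of $G$ — and choosing the orientation of that last switch appropriately, I can make the torso on $A_1$ $\varphi$-isomorphic to $H[B_1]$ with $u,v$ correctly matched, without disturbing $G[A_2]$. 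Then $\varphi$ still restricts to a $2$-isomorphism of the graph on $A_2$ (with its virtual edge, one fewer bag) to the graph on $B_2$, and the induction hypothesis, whose switches are internal to $A_2$, completes the transformation; the result is a graph $G'$ that is $\varphi$-isomorphic to $H$.

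I expect the genuine difficulty to lie entirely in fact~(a) and its bookkeeping: making precise that vertex Whitney separations correspond to matroid $2$-separations and are therefore $\varphi$-invariant, tracking the virtual edge on $\{u,v\}$ consistently on both sides of the separation, and verifying that ``being a leaf torso'' is preserved under the induced tree isomorphism. This is precisely the technical content of~\cite{Tutte66,Tutte84,Truemper80}, while fact~(b) is quoted verbatim. Since this machinery is standard and orthogonal to the kernelization that is the subject of the paper, I would — as we do — invoke it as Proposition~\ref{prop:whitney} with a reference rather than reproduce it in full.
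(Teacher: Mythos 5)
The paper does not prove this statement at all: Proposition~\ref{prop:whitney} is Whitney's theorem, imported verbatim with a citation to~\cite{Whitney33} (and, for the decomposition-based proof, to~\cite{Truemper80}), so there is no in-paper argument to measure you against, and your final decision to invoke it by reference is exactly what the paper does. Your sketch of how one would prove it is the standard Truemper route and is sound in outline: the cycle base case via switches at pairs of nonadjacent cycle vertices (your ``adjacent transpositions'' are indeed legitimate Whitney separations once $n\geq 4$, and finiteness rather than optimality is all that is needed), the $3$-connected base case via~\cite[Lemma~1]{Truemper80} (Proposition~\ref{prop:three-conn}), and the induction peeling off a leaf torso with one corrective switch at the adhesion pair. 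One caution on non-circularity: your fact~(a) is essentially the paper's Lemma~\ref{lem:isom-decomp}, but the paper \emph{derives} that lemma from Proposition~\ref{prop:whitney}, so you cannot borrow it; you must obtain it, as you indicate, from the matroid side --- a $2$-isomorphism is an isomorphism of cycle matroids, Whitney separations correspond to matroid $2$-separations, and the Tutte decomposition is the canonical decomposition of the matroid into $3$-connected pieces and circuits, unique by Proposition~\ref{prop:tutte}. That, together with the gluing bookkeeping you gloss over (the two recursive $\varphi$-isomorphisms may match $\{u,v\}$ to $\{u',v'\}$ in opposite orders, fixed by one extra switch at the adhesion pair, and the virtual edge must be tracked through the recursion on $A_2$), is where the real work sits, which is precisely why citing~\cite{Whitney33,Truemper80} is the right call here.
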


We also use the property of 3-connected graphs explicitly given by Truemper~\cite{Truemper80}. It also can be derived  from Proposition~\ref{prop:whitney}.

\begin{proposition}[\cite{Truemper80}]\label{prop:three-conn}
Let $G$ and $H$ be 3-connected $n$-vertex graphs and let $\varphi$ be a $2$-isomorphism of $G$ to $H$. Then $G$ and $H$ are $\varphi$-isomorphic. 
\end{proposition}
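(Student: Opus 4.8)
I would derive the statement directly from Whitney's theorem (Proposition~\ref{prop:whitney}), using the simple observation that a $3$-connected graph admits no Whitney switch whatsoever. Concretely: apply Proposition~\ref{prop:whitney} to $G$, $H$ and $\varphi$ to obtain a finite sequence of Whitney switches transforming $G$ into a graph $G'$ that is $\varphi$-isomorphic to $H$ (here I rely on the stated convention that a switch leaves the edge set, and hence the map $\varphi$, unchanged, and leaves the vertex set unchanged as well). The whole proof then reduces to showing that this sequence is empty, so that $G'=G$ and therefore $G$ itself is $\varphi$-isomorphic to $H$.

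\textbf{Key step.} To see that the guaranteed sequence has length $0$, recall that each Whitney switch is performed with respect to a Whitney separation of the \emph{current} graph, i.e.\ a separation $(A,B)$ of order exactly $2$. Such a separation has $|A\cap B|=2$ with $A\setminus B\neq\emptyset$, $B\setminus A\neq\emptyset$ and no edge between $A\setminus B$ and $B\setminus A$; hence $A\cap B$ is a separator of the graph of size $2$. Since $G$ is $3$-connected, it has no separator of size at most $2$, so $G$ has no Whitney separation at all. Consequently the \emph{first} switch in any sequence starting from $G$ — and thus the only possible sequence — cannot be applied, so the sequence produced by Proposition~\ref{prop:whitney} must be empty. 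Therefore $G'=G$, and $G$ is $\varphi$-isomorphic to $H$, as claimed.

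\textbf{Where the difficulty lies.} There is essentially no combinatorial obstacle: all the content is packed into Proposition~\ref{prop:whitney}. The only things to check carefully are bookkeeping points — that "the graph obtained from $G$ by the empty sequence of switches" is literally $G$ (immediate under the stated conventions), and that Proposition~\ref{prop:whitney} indeed permits a sequence of length $0$ when no switch is available (it must, since otherwise the theorem would assert the existence of an impossible object). If one wished to avoid invoking Whitney's theorem, the natural alternative would be to use Lemma~\ref{lem:folk} and reduce the claim to showing that $\varphi$ maps the family $\{E_G(v)\}_{v\in V(G)}$ onto $\{E_H(v)\}_{v\in V(H)}$, then characterize which cocircuits of the cycle matroid are vertex stars in a $3$-connected graph; but this amounts to re-proving the $3$-connected case of Whitney's theorem and is substantially heavier, so I would not take that route.
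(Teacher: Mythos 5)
Your derivation is correct and matches what the paper does: the paper gives no proof of Proposition~\ref{prop:three-conn} beyond citing Truemper and remarking that it "can be derived from Proposition~\ref{prop:whitney}," and your argument is exactly that derivation, resting on the valid observation that a $3$-connected graph (having no separator of size $2$) admits no Whitney separation, so the switch sequence guaranteed by Whitney's theorem must be empty and $G$ itself is $\varphi$-isomorphic to $H$.
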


Throughout this section we assume that  $G$ and $H$ are $n$-vertex 2-connected graphs and let $\varphi$ be a $2$-isomorphism of $G$ to $H$. Let also $\mathcal{T}^{(1)}=(T^{(1)},\{X_t^{(1)}\}_{t\in V(T^{(1)})})$
and  $\mathcal{T}^{(2)}=(T^{(2)},\{X_t^{(2)}\}_{t\in V(T^{(2)})})$ be the Tutte decompositions of $G$ and $H$, respectively, and denote by $(W_2^{(h)},W_{\geq 3}^{(h)})$ the partition of $V(T^{(h)})$ satisfying (T4)--(T8) for $h=1,2$.

The following lemma is crucial for us.

\begin{lemma}\label{lem:isom-decomp}
There is  an isomorphism $\alpha$ of $T^{(1)}$ to $T^{(2)}$ such that  
\begin{itemize}
\item[(i)] for every $t\in V(T^{(1)})$, $|X_t^{(1)}|=|X_{\alpha(t)}^{(2)}|$, in particular, $t\in W_2^{(1)}$ ($t\in W_{\geq 3}^{(1)}$, respectively)  if and only if $\alpha(t)\in W_2^{(2)}$  ($\alpha(t)\in W_{\geq 3}^{(2)}$, respectively),
\item[(ii)] for every $t\in W_{\geq 3}^{(1)}$, the torso of $X_t^{(1)}$ is a 3-connected graph (a cycle, respectively) if and only if the torso of $X_{\alpha(t)}^{(2)}$ is a 3-connected graph (a cycle, respectively),
\item[(iii)] for every $t\in V(T^{(1)})$, 
$\varphi(E(G[X_t^{(1)}])=E(H[X_{\alpha(t)}^{(2)}])$.
\end{itemize}
\end{lemma}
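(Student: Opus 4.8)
The plan is to construct the isomorphism $\alpha$ by exploiting the uniqueness of the Tutte decomposition (Proposition~\ref{prop:tutte}) together with the fact that a $2$-isomorphism $\varphi$ must respect the structure of size-two separators. First I would record the key structural dictionary: by (T4)--(T8) the nodes of $W_2$ are in bijection with the size-two separators $\{u,v\}$ of $G$ that ``split off'' a genuine piece (i.e.\ correspond to a tree edge), while the remaining size-two separators sit inside cycle-bags as pairs of nonadjacent vertices. A $2$-isomorphism $\varphi$ preserves cycles, hence preserves the lattice of ``$2$-separations of the cycle matroid'': a partition of $E(G)$ into $(A,B)$ with $|A|,|B|\ge 2$ such that every cycle lies entirely in $A$ or in $B$, or uses exactly the two ``boundary'' elements — this is a matroid-theoretic notion, and $\varphi$ maps such $2$-separations of $G$ bijectively to those of $H$. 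The Tutte decomposition is exactly the canonical tree built from these $2$-separations (Cunningham--Edmonds / Tutte), so $\varphi$ induces a bijection between the edge sets of $T^{(1)}$ and of $T^{(2)}$, and this bijection is adjacency-preserving because the nesting/crossing relation among $2$-separations is a matroid invariant.

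Concretely, the steps I would carry out are: (1) Show that for each $t\in W_2^{(1)}$ with $X_t^{(1)}=\{u,v\}$, the components of $G-\{u,v\}$ induce a partition of $E(G)$ that is a matroid $2$-separation, and conversely every matroid $2$-separation of the cycle matroid arises this way from a unique $W_2$-node (using the remark in the preliminaries that a size-two separator is either some $X_t$, $t\in W_2$, or a pair of nonadjacent vertices in a cycle-torso — and in the latter case it still gives a $2$-separation, associated to that cycle-bag). (2) Conclude that $\varphi$ maps matroid $2$-separations of $G$ to those of $H$, hence gives a bijection $\beta$ between the ``branch points'' — that is, between $W_{\ge3}^{(1)}\cup W_2^{(1)}$ organized as the edges/nodes of the decomposition trees — and that $\beta$ preserves the tree structure because two $2$-separations are ``nested'' (one side contained in a side of the other) iff the corresponding separators are, and nesting is defined purely in terms of the cycle matroid. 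This yields the graph isomorphism $\alpha\colon T^{(1)}\to T^{(2)}$. (3) Read off (i): $|X_t^{(1)}|=2$ iff $t\in W_2^{(1)}$ iff the corresponding separation is a ``real'' separator-pair rather than a branch of higher degree, and this property transfers through $\varphi$; for $t\in W_{\ge 3}$, $|X_t|$ equals the number of tree-edges at $t$ plus the number of vertices of the torso not in any adhesion set, again matroid-recoverable. (4) For (ii), distinguish cycle-torsos from $3$-connected torsos: a bag-torso is a cycle iff the corresponding ``brick'' of the matroid is a circuit-and-its-dual (equivalently, the restriction/contraction of the cycle matroid to that brick is a uniform matroid $U_{m-1,m}$), which $\varphi$ preserves; alternatively invoke that $3$-connected matroids have unique graph representations while a cycle matroid restricted to a brick being $2$-connected-but-not-$3$-connected forces the cycle case. (5) For (iii), observe $E(G[X_t^{(1)}])$ is precisely the set of edges of $G$ lying ``at'' the node $t$, i.e.\ the edges in no side of any $2$-separation strictly refining the one at $t$ — a matroid-intrinsic set — so $\varphi$ sends it to the analogous set at $\alpha(t)$, which is $E(H[X_{\alpha(t)}^{(2)}])$ (using that adhesion sets are $2$-cliques or pairs within cycle-torsos, handled by (T8)).

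The main obstacle I expect is step (2)--(4): pinning down, rigorously, that the Tutte decomposition is recoverable from the cycle matroid alone in a way that makes $\varphi$ act on it functorially. The cleanest route is probably to cite matroid connectivity theory — the Tutte decomposition of a $2$-connected graph corresponds to the canonical ``tree decomposition into $3$-connected minors and circuits/cocircuits'' of its cycle matroid, which is an isomorphism invariant of the matroid — so that a matroid isomorphism (which is exactly what $\varphi$ is, between $M(G)$ and $M(H)$) transports one decomposition to the other. One must be a little careful because $\varphi$ is given as a cycle-preserving bijection and one should first note this is the same as a cycle-matroid isomorphism (circuits determine the matroid). A secondary subtlety is the handling of cycle-bags, where the ``extra'' size-two separators (nonadjacent vertex pairs) are not $W_2$-nodes; here I would use that within a circuit all such separations are already accounted for by the single cycle-bag, so they do not create spurious tree structure, and (T8) guarantees the decomposition is in canonical reduced form on both sides, making $\alpha$ well-defined and unique.

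Once $\alpha$ is in hand, properties (i)--(iii) are essentially bookkeeping: (i) and (ii) follow from the invariance of the respective features under matroid isomorphism plus the canonical-form conditions (T4)--(T8), and (iii) follows from the description of $E(G[X_t])$ as a matroid-intrinsic set of edges attached to the node. I would write these up briefly, the bulk of the work being the identification of the Tutte tree with a matroid invariant in step (2).
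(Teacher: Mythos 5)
Your proposal is correct in substance but takes a genuinely different route from the paper. The paper never passes through matroid connectivity theory: it invokes Whitney's theorem (Proposition~\ref{prop:whitney}) to obtain a finite sequence of Whitney switches turning $G$ into a graph $\varphi$-isomorphic to $H$, and then proceeds by induction on the length of that sequence, checking explicitly that a single switch with respect to a Whitney separation $(A,B)$ preserves the decomposition tree and the edge content of every bag (only the two cut vertices $u,v$ get exchanged inside the bags lying in $B$, and a switch inside a cycle-bag keeps that bag a cycle); the base case is the uniqueness of the Tutte decomposition (Proposition~\ref{prop:tutte}) combined with Lemma~\ref{lem:folk}. You instead identify the Tutte decomposition with the canonical Cunningham--Edmonds/Tutte decomposition of the cycle matroid into $3$-connected pieces, circuits and cocircuits, observe that a cycle-preserving bijection is precisely a matroid isomorphism, and transport the decomposition; this buys conceptual transparency and generality (the invariance holds for any isomorphism of the cycle matroids), whereas the paper's argument is more elementary and self-contained given the tools it already has, and its explicit account of how one switch acts on the decomposition matches the style of bookkeeping reused later in the kernelization. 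Two points in your sketch would need genuine care if written out: first, your cycle-based description of a $2$-separation (``every cycle \dots uses exactly the two boundary elements'') is wrong as stated --- the boundary of a graph $2$-separation is a pair of \emph{vertices}, and a crossing cycle splits into two $u$--$v$ paths; the correct and sufficient remark is simply that the circuits determine the matroid, hence its connectivity function and its $2$-separations. Second, the paper's tree is not literally the matroid decomposition tree: the $W_2$ nodes subdivide the $2$-sum edges, an edge whose endpoints form a separator lives in a bond piece rather than in a torso piece, and (T8) governs when a degree-two $W_2$ node is present; so constructing $\alpha$ and verifying (iii) --- where $E(G[X_t^{(1)}])$ also contains adhesion edges --- requires an explicit dictionary between the two decompositions rather than a bare citation. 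These are repairable bookkeeping issues, not gaps in the idea.
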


\begin{proof}
By Proposition~\ref{prop:whitney}, there is a finite sequence of Whitney switches such that the graph $G'$ obtained from $G$ by these switches is $\varphi$-isomorphic to $H$. We prove the lemma by induction on the number of switches. The claim is straightforward if this number is zero, because $G$ and $H$ are $\varphi$-isomorphic. 
It is sufficient to observe that the Tutte decomposition is unique by Proposition~\ref{prop:tutte} and then use Lemma~\ref{lem:folk}. Assume that the sequence has length $\ell\geq 1$ and  the claim of the lemma holds  for the sequences of length at most $\ell-1$. 

Let $(A,B)$ be a Whitney separation of $G$ such that the first switch is done with respect to $(A,B)$. Denote by $\{u,v\}=A\cap B$.
Denote by $G'$ the graph obtained from $G$ by the Whitney switch with respect to $(A,B)$. Recall that $G'$ is constructed by replacing each edge $ux\in E(G)$ for $x\in B\setminus A$ by $vx$ and by replacing each edge $vx$ for $x\in B\setminus A$ by $ux$.  Recall also that we denote by $\sigma_{(A,B)}$ the mapping of the edges of $G$ to the edges of the graph $G'$ obtained from $G$ by the Whitney switch with respect to $(A,B)$ that corresponds to the switch. We have that  $\sigma_{(A,B)}$ is a 2-isomorphism of $G$ to $G'$ and, by our convention, the set of edges remains the same and we only modify incidences of some of them, that is, $\sigma_{(A,B)}$ is the identity mapping.
Since $H$ is obtained from $G'$ by $\ell-1$ switches, it is sufficient to show the claim for $G'$. We do it by constructing the Tutte decomposition of $G'$ from the decomposition of $G$.

Suppose first that $A\cap B=X_t^{(1)}$ for some $t\in W_2^{(1)}$. By the definition of the Tutte decomposition, for each $s\in V(T)$, either $X_{s}^{(1)}\subseteq A$ or  $X_{s}^{(1)}\subseteq B$. We construct the tree decomposition $\mathcal{T}'=(T^{(1)},\{X_{s}'\}_{s\in V(T^{(1)})})$. For every $s\in V(T^{(1)})$ such that $X_s^{(1)}\subseteq A$, we define 
$X_s'=X_s^{(1)}$. Similarly, 
if $X_s^{(1)}\subseteq B$ and $u,v\notin X_s^{(1)}$, $X_s'=X_s^{(1)}$. For all  $s\in V(T^(1))$ such that  $X_s^{(1)}\subseteq B$, $s\neq t$ and $\{u,v\}\cap X_s^{(1)}\neq\emptyset$, we construct $X_s'$ from $X_s^{(1)}$ as follows:
\begin{itemize}
\item[a)] replace $u$ by $v$ if $u\in X_s^{(1)}$ and $v\notin X_s^{(1)}$,
\item[b)] replace  $v$ by $u$ if $v\in X_s^{(1)}$ and $u\notin X_s^{(1)}$.
\end{itemize}
It is straightforward to verify that $\mathcal{T}'$ is the Tutte decomposition and $\alpha$ that maps the nodes of $T^{(1)}$ to themselves satisfies (i)--(iii).

Assume now that $A\cap B\neq X_t^{(1)}$ for all $t\in W_2^{(1)}$. By the definition of the Tutte decomposition, this means that $u,v\in X_t^{(1)}$ for some $t\in W_{\geq 3}^{(1)}$ such that the torso of $X_t^{(1)}$ is a cycle $C$ and $u,v$ are nonadjacent vertices of $C$.  Notice that $Z_A=X_t^{(1)}\cap A$ and $Z_B=X_t^{(1)}\cap B$ induce distinct $(u,v)$-paths in $C$. 
We again construct the tree decomposition $\mathcal{T}'=(T^{(1)},\{X_{s}'\}_{s\in V(T^{(1)})})$. Notice that for each $s\in V(T^{(1)})$ such that $s\neq t$, either $X_{s}^{(1)}\subseteq A$ or  $X_{s}^{(1)}\subseteq B$. For all such $s$, we define $X_s'$ in exactly the same way as in the previous case. We define $X_t'=X_t^{(1)}$. 
It is straightforward to verify that $\mathcal{T}'$ is a tree decomposition of $G'$. 
Since the torso of $X_t^{(1)}$ is a cycle composed by the paths with the vertices $Z_A$ and $Z_B$, the torso of $X_s'$ in $\mathcal{T}'$ is a cycle as well. This implies that  
$\mathcal{T}'$ is the Tutte decomposition of $G'$. Then $\alpha$ that maps the nodes of $T^{(1)}$ to themselves satisfies (i)--(iii).
\end{proof}

Let $F$ be a 2-connected graph. Let also $\mathcal{T}=(T,\{X_t\}_{t\in V(T)})$ be the Tutte decomposition of $F$ and let $(W_2,W_{\geq 3})$ be the partition of $V(T)$ satisfying (T4)--(T8). We denote by $\widehat{F}$ the graph obtained from $F$ by making the vertices of $X_t$ adjacent for every $t\in W_2$. We say that $\widehat{F}$ is the \emph{enhancement} of $F$. Note that $\mathcal{T}$ is the Tutte decomposition of $\widehat{F}$ and the torso of each bag $X_t$ is $\widehat{F}[X_t]$.  Notice also that $(A,B)$ is a Whitney separation of $F$ if and only if $(A,B)$ is a  Whitney separation of $\widehat{F}$. We also say that $F$ is \emph{enhanced} if $F=\widehat{F}$.

To simplify the arguments in our proofs, it is convenient for us to switch from 2-isomorphisms of graphs to 2-isomorphisms of their enhancements. 
 By Lemma~\ref{lem:isom-decomp}, there is an isomorphism $\alpha$ of $T^{(1)}$ to $T^{(2)}$ satisfying conditions (i)---(ii) of the lemma. We define the \emph{enhanced} mapping $\widehat{\varphi}\colon E(\widehat{G})\rightarrow E(\widehat{H})$ such that $\widehat{\varphi}(e)=\varphi(e)$ for $e\in E(G)$, and for each $e\in E(\widehat{G})\setminus E(G)$ with its end-vertices in $X_t^{(1)}$ for some $t\in W_2^{(1)}$, we define $\widehat{\varphi}(e)$ be the edge with the end-vertices in $X_{\alpha(t)}^{(2)}$.   

\begin{lemma}\label{lem:enhanced}
The mapping
$\widehat{\varphi}$ is a $2$-isomorphism of $\widehat{G}$ to $\widehat{H}$. Moreover, a sequence of Whitney switches makes $G$ $\varphi$-isomorphic to $H$ if and only if the same sequence makes $\widehat{G}$ $\widehat{\varphi}$-isomorphic to $\widehat{H}$. 
\end{lemma}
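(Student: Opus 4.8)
The plan is to establish the two assertions separately, both by reducing to properties already recorded in the excerpt. For the first assertion, that $\widehat{\varphi}$ is a $2$-isomorphism of $\widehat{G}$ to $\widehat{H}$, I would use Lemma~\ref{lem:isom-decomp} together with the structure of the Tutte decomposition. Recall that $\widehat{G}$ is obtained from $G$ by adding, for each $t\in W_2^{(1)}$, the edge between the two vertices of the separator $X_t^{(1)}$ (if it is not already present), and similarly for $\widehat{H}$. Since $\mathcal{T}^{(1)}$ is also the Tutte decomposition of $\widehat{G}$ (and likewise on the $H$ side), the torso of each bag $X_t^{(1)}$ in $\widehat{G}$ is exactly $\widehat{G}[X_t^{(1)}]$. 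The first step is to check that a subset $C\subseteq E(\widehat{G})$ is a cycle of $\widehat{G}$ if and only if $\widehat{\varphi}(C)$ is a cycle of $\widehat{H}$. One works through the cycle structure via the decomposition tree: every cycle of $\widehat{G}$ is obtained by gluing paths that live inside the torsos along the adhesion sets, and a new edge $uv$ with $\{u,v\}=X_t^{(1)}$ can only be used in place of a $(u,v)$-path through the corresponding subtree. Using clause (iii) of Lemma~\ref{lem:isom-decomp}, $\varphi$ maps $E(G[X_t^{(1)}])$ onto $E(H[X_{\alpha(t)}^{(2)}])$ for every $t$, and $\widehat{\varphi}$ additionally sends the new separator-edge of $X_t^{(1)}$ to the new separator-edge of $X_{\alpha(t)}^{(2)}$; hence $\widehat{\varphi}$ maps the torso of each bag isomorphically (as a $2$-isomorphism) and respects the adhesion sets, so it carries cycles to cycles. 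The symmetric argument for $\widehat{\varphi}^{-1}$ is identical with the roles of $G$ and $H$ swapped.

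For the second assertion I would argue by a direct correspondence between switch sequences. The key observation, noted already in the excerpt, is that $(A,B)$ is a Whitney separation of $F$ if and only if it is a Whitney separation of $\widehat{F}$; moreover performing the Whitney switch with respect to $(A,B)$ commutes with enhancement, i.e. $\widehat{G'}=\widehat{G}'$ where $G'$ is the switch of $G$ and $\widehat{G}'$ is the switch of $\widehat{G}$ (the switch only relocates incidences of edges at $u$ and $v$ among the vertices of $B\setminus A$, and the newly added separator-edges have both endpoints in the separator, hence are untouched and remain consistent with the enhancement of the result). Thus applying a sequence of switches $\mathcal S$ to $G$ and then enhancing yields the same graph as applying $\mathcal S$ to $\widehat G$. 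It remains to check that the resulting graph $G'$ is $\varphi$-isomorphic to $H$ exactly when $\widehat{G'}$ is $\widehat{\varphi}$-isomorphic to $\widehat{H}$. By Lemma~\ref{lem:folk}, $\varphi$-isomorphism amounts to $\varphi$ bijectively mapping $\{E_{G'}(v)\mid v\in V(G')\}$ onto $\{E_H(v)\mid v\in V(H)\}$; since enhancement adds to each vertex of a separator $X_t$ exactly the edges determined by the bags it lies in, and $\widehat{\varphi}$ extends $\varphi$ precisely by the bijection on separator-edges induced by $\alpha$, the edge-set-at-a-vertex condition for $\widehat{\varphi}$ on $\widehat{G'}$ and $\widehat H$ is equivalent to the same condition for $\varphi$ on $G'$ and $H$. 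Applying Lemma~\ref{lem:folk} in both directions finishes the equivalence.

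I expect the main obstacle to be the bookkeeping in the first assertion: one must argue carefully that adding the separator-edges does not create cycles in $\widehat{G}$ whose $\widehat{\varphi}$-images fail to be cycles in $\widehat{H}$, and this requires a clean description of how cycles of a $2$-connected graph decompose along its Tutte decomposition. The cleanest route is probably to avoid re-deriving cycle structure from scratch and instead invoke Lemma~\ref{lem:isom-decomp} node by node: show that $\widehat{\varphi}$ restricted to any bag's torso is a $2$-isomorphism between the torsos (this is immediate from (iii) plus the definition of $\widehat{\varphi}$ on separator-edges and Proposition~\ref{prop:tutte} uniqueness), and then use the standard fact that a bijection on edges that is a $2$-isomorphism on every torso and agrees on adhesion sets is a $2$-isomorphism of the whole graph. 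Once this local-to-global principle is stated, both the cycle-preservation of $\widehat{\varphi}$ and the second assertion follow with only routine verification.
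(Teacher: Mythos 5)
Your proof of the second assertion is essentially the paper's argument: both rely on the fact that $G$ and $\widehat{G}$ (respectively $H$ and $\widehat{H}$) have exactly the same order-two separators, so any switch sequence transfers verbatim, and both then translate $\varphi$-isomorphism into the vertex-wise edge-set condition of Lemma~\ref{lem:folk} together with the bag correspondence of Lemma~\ref{lem:isom-decomp}. Where you genuinely diverge is the first assertion. You prove directly that $\widehat{\varphi}$ preserves cycles by a local-to-global analysis along the Tutte decomposition, invoking a ``standard fact'' that an edge bijection which is a $2$-isomorphism on every torso and respects the adhesion structure is a $2$-isomorphism of the whole graph. That fact is true (it is the statement that the cycle matroid is the iterated $2$-sum of the torso matroids along the decomposition tree), but it is exactly the nontrivial expand-and-contract bookkeeping you yourself flag as the main obstacle, and you do not supply it; as written, this is the one step of your argument that is asserted rather than proved. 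The paper avoids this entirely with a short trick: it observes that the first claim follows from the second, because by Whitney's theorem (Proposition~\ref{prop:whitney}) some switch sequence makes $G$ $\varphi$-isomorphic to $H$, the second claim transfers that sequence to $\widehat{G}$ and $\widehat{H}$, and since each switch is itself a $2$-isomorphism, the existence of a switch sequence ending in a $\widehat{\varphi}$-isomorphism already certifies that $\widehat{\varphi}$ preserves cycles. So your route buys a self-contained, decomposition-level explanation of why $\widehat{\varphi}$ is a $2$-isomorphism (at the cost of needing the $2$-sum gluing lemma, which you should either prove or cite explicitly), while the paper's route buys brevity by leaning on Proposition~\ref{prop:whitney} and the ordering of the two claims; if you fill in or properly reference the gluing fact, your version is a valid, and arguably more informative, alternative.
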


\begin{proof}
Note that by Lemma~\ref{lem:isom-decomp}, $\widehat{\varphi}$ is a bijection. It is sufficient to show the second claim of the lemma, because if a sequence of Whitney switches makes $G$ $\varphi$-isomorphic to $H$, then $\varphi$ is a 2-isomorphism of $G$ to $H$. 
 Notice that $\widehat{G}$ and $\widehat{H}$ have the same separators of size 2 as $G$ and $H$, respectively, by the definition of the Tutte decomposition. Therefore, given a sequence of Whitney switches of $G$, the same sequence can be performed on $\widehat{G}$. Then Lemmas~\ref{lem:folk} and \ref{lem:isom-decomp} imply that if 
 a sequence of Whitney switches makes $G$ $\varphi$-isomorphic to $H$, then the same sequence makes $\widehat{G}$ $\widehat{\varphi}$-isomorphic to $\widehat{H}$.
Since it is straightforward to see that if  a sequence of Whitney switches makes $\widehat{G}$ $\widehat{\varphi}$-isomorphic to $\widehat{H}$, then the same sequence makes $G$ $\varphi$-isomorphic to $H$, the second claim holds.
\end{proof}

Lemma~\ref{lem:enhanced} allows us to consider enhanced graph and this is useful, because we can strengthen the claim of Lemma~\ref{lem:isom-decomp}.

\begin{lemma}\label{lem:isom-decomp-enh}
Let $G$ and $H$ be enhanced graphs.
Then there is  an isomorphism $\alpha$ of $T^{(1)}$ to $T^{(2)}$ such that conditions (i)--(iii) of Lemma~\ref{lem:isom-decomp} are fulfilled and, moreover, 
\begin{itemize}
\item[(iv)]  for every $t\in V(T^{(1)})$, $G[X_t^{(1)}]$ is isomorphic to $H[X_{\alpha(t)}^{(2)}]$.
\end{itemize}
Moreover, if $G[X_t^{(1)}]$ is 3-connected, then $G[X_t^{(1)}]$ is $\varphi$-isomorphic to $H[X_{\alpha(t)}^{(2)}]$.
\end{lemma}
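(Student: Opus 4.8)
The plan is to take the tree isomorphism $\alpha$ furnished by Lemma~\ref{lem:isom-decomp} and to verify that, once $G$ and $H$ are enhanced, this very $\alpha$ already satisfies~(iv) as well as the concluding $\varphi$-isomorphism claim. The advantage of working with enhanced graphs is that now the torso of every bag $X_t$ equals the induced subgraph on that bag (as noted right after the definition of the enhancement), so conditions~(ii) and~(iii) of Lemma~\ref{lem:isom-decomp} become statements directly about $G[X_t^{(1)}]$ and $H[X_{\alpha(t)}^{(2)}]$: by~(ii) these two graphs are simultaneously $3$-connected or simultaneously cycles, by~(i) they have equally many vertices, and by~(iii) $\varphi$ restricts to a bijection from $E(G[X_t^{(1)}])$ onto $E(H[X_{\alpha(t)}^{(2)}])$.

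The main intermediate step I would isolate is that, for $t\in W_{\geq 3}^{(1)}$, the restriction $\varphi_t$ of $\varphi$ to $E(G[X_t^{(1)}])$ is itself a $2$-isomorphism of $G[X_t^{(1)}]$ to $H[X_{\alpha(t)}^{(2)}]$. Indeed $\varphi_t$ is a bijection between these edge sets by~(iii); and since $G[X_t^{(1)}]$ is an induced subgraph of $G$, every cycle of $G[X_t^{(1)}]$ is a cycle of $G$, so the cycle-preserving property of $\varphi$ together with~(iii) sends it to a cycle of $H$ whose edges all lie in $H[X_{\alpha(t)}^{(2)}]$, i.e.\ to a cycle of $H[X_{\alpha(t)}^{(2)}]$; the symmetric argument for $\varphi^{-1}$ completes it. This is precisely where~(iii) is needed: it prevents a cycle of a bag from being broken apart by $\varphi$.

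It then remains to distinguish cases according to $t$. If $t\in W_2^{(1)}$, then $\alpha(t)\in W_2^{(2)}$ by~(i), and since $G$ and $H$ are enhanced both $G[X_t^{(1)}]$ and $H[X_{\alpha(t)}^{(2)}]$ are single edges, so~(iv) is trivial. If $t\in W_{\geq 3}^{(1)}$ and $G[X_t^{(1)}]$ is a cycle, then $H[X_{\alpha(t)}^{(2)}]$ is a cycle by~(ii) with the same number of vertices by~(i), hence isomorphic to $G[X_t^{(1)}]$, which gives~(iv). If $t\in W_{\geq 3}^{(1)}$ and $G[X_t^{(1)}]$ is $3$-connected, then $H[X_{\alpha(t)}^{(2)}]$ is $3$-connected by~(ii) with the same number of vertices by~(i), and applying Proposition~\ref{prop:three-conn} to the $2$-isomorphism $\varphi_t$ yields that $G[X_t^{(1)}]$ and $H[X_{\alpha(t)}^{(2)}]$ are $\varphi_t$-isomorphic, equivalently $\varphi$-isomorphic (as $\varphi$ and $\varphi_t$ agree on $E(G[X_t^{(1)}])$); in particular they are isomorphic, establishing~(iv) and the last sentence of the lemma.

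I do not anticipate a genuine obstacle here; the statement is in essence a bookkeeping refinement of Lemma~\ref{lem:isom-decomp} that exploits the enhancement to make bags and torsos coincide. The one point deserving care is that for cycle-bags one cannot upgrade ``isomorphic'' to ``$\varphi$-isomorphic''---a $2$-isomorphism between two cycles need not be realized by any vertex bijection---so in~(iv) the cycle-bags contribute only plain isomorphism, whereas the $\varphi$-isomorphism strengthening is available exactly on the $3$-connected bags, where Proposition~\ref{prop:three-conn} applies.
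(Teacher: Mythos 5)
Your proposal is correct and follows essentially the same route as the paper: use the $\alpha$ from Lemma~\ref{lem:isom-decomp}, observe that enhancement makes bags induce their torsos (so $W_2$-bags are edges and cycle-bags are plainly isomorphic by (i)--(ii)), and for $3$-connected bags use (iii) to see that $\varphi$ restricts to a $2$-isomorphism of $G[X_t^{(1)}]$ to $H[X_{\alpha(t)}^{(2)}]$ and invoke Proposition~\ref{prop:three-conn}. Your spelled-out verification that the restricted map is a $2$-isomorphism is just an expansion of the paper's one-line ``this implies'' step, so there is no substantive difference.
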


\begin{proof}
We have that $G[X_t^{(1)}]$ is isomorphic to $H[X_{\alpha(t)}^{(2)}]$ for $t\in W_2^{(1)}$, because $G$ and $H$ are enhanced graphs.
Let $t\in W_{\geq 3}^{(1)}$. By conditions (i) and (ii) of Lemma~\ref{lem:isom-decomp}, $|X_t^{(1)}|=|X_{\alpha(t)}^{(2)}|$ and  $G[X_t^{(1)}]$ is a 3-connected graph (a cycle, respectively) if and only if $H[X_{\alpha(t)}^{(2)}]$ is a 3-connected graph (a cycle, respectively). If  $G[X_t^{(1)}]$ and $H[X_{\alpha(t)}^{(2)}]$ are cycles, then they are isomorphic. Assume that $G[X_t^{(1)}]$ and $H[X_{\alpha(t)}^{(2)}]$ are 3-connected. By (iii), $\varphi(E(G[X_t^{(1)}])=E(H[X_{\alpha(t)}^{(2)}])$. This implies that $\varphi$ is a 2-isomorphism of $G[X_t^{(1)}]$ to $H[X_{\alpha(t)}^{(2)}]$. By Proposition~\ref{prop:three-conn},  $G[X_t^{(1)}]$ are $H[X_{\alpha(t)}^{(2)}]$ isomorphic and, moreover, $\varphi$-isomorphic.
\end{proof}

For the remaining part of the sections, we assume that $G$ and $H$ are enhanced graphs and  $\alpha$ is the isomorphism of $T^{(1)}$ to $T^{(2)}$ satisfying conditions (i)--(iv) of Lemmas~\ref{lem:isom-decomp} and \ref{lem:isom-decomp-enh}. 

Our next aim is to investigate properties of the sequences of  Whitney switches that are used  in solutions for \probWS. 
For  a sequence $\mathcal{S}$ of Whitney switches such that the graph $G'$ obtained from $G$ by applying this sequence is $\phi$-isomorphic to $H$, we say that $\mathcal{S}$ is an \emph{$H$-sequence}. We also say that $\mathcal{S}$ is \emph{minimum} if $\mathcal{S}$ has minimum length. 

Recall our assumption that Whitney switches do not change the sets of vertices and edges but modify incidences of some edges. Using this assumption, we observe that we can rearrange sequences of switches in a certain way.

\begin{lemma}\label{lem:rearrange}
Let $\mathcal{S}$ be an $H$-sequence of Whitney switches and let $\mathcal{S}'$ is the sequence that differs from  $\mathcal{S}$ by the order of switches such that the following holds:
\begin{itemize}
\item[(i)] for every $t\in W_{2}^{(1)}$, the order of the switches  with respect to a Whitney separations $(A,B)$  with $A\cap B=X_t^{(1)}$ is the same as in $\mathcal{S}$,
\item[(ii)] for every $t\in W_{\geq 3}^{(1)}$,
the order of the switches with respect to Whitney separations $(A,B)$ such that  $A\cap B\subseteq X_t$ and  $A\cap B\neq X_{t'}^{(1)}$ for all $t'\in W_2^{(1)}$ 
 is the same as in $\mathcal{S}$.
\end{itemize}
Then $\mathcal{S}'$ is an $H$-sequence.
\end{lemma}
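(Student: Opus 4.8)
The plan is to show that switches "belonging to different parts of the Tutte decomposition" commute, so that any reordering respecting the two monotonicity conditions~(i) and~(ii) produces the same final graph up to $\widehat\varphi$-isomorphism. First I would set up the bookkeeping: thanks to our convention, every Whitney switch acts on the fixed vertex set $V(G)$ and fixed edge set $E(G)$, merely toggling, for a Whitney separator $\{u,v\}=A\cap B$, the incidences of the edges with exactly one endpoint in $\{u,v\}$ that lie on the $B$-side. Thus a switch with respect to $(A,B)$ is an involution on the set of incidence functions, and the whole effect of a sequence $\mathcal S$ is a composition of such involutions. The key structural fact, coming from the uniqueness of the Tutte decomposition (Proposition~\ref{prop:tutte}) together with Lemma~\ref{lem:isom-decomp}, is that every Whitney separator of $G$ is either some $X_t^{(1)}$ with $t\in W_2^{(1)}$, or a pair of nonadjacent vertices inside a cycle-bag $X_t^{(1)}$ with $t\in W_{\ge 3}^{(1)}$; moreover performing a switch does not change the underlying tree $T^{(1)}$ nor which bags are cycle-bags (this is exactly what the proof of Lemma~\ref{lem:isom-decomp} establishes, by exhibiting the Tutte decomposition of the switched graph on the same tree). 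So throughout a sequence $\mathcal S$ the "address" of each admissible switch — the node $t$ of $T^{(1)}$ it is attached to — is well defined and stable.

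Next I would prove the commutation claim: if two consecutive switches in a sequence are addressed to nodes $t_1\ne t_2$ of $T^{(1)}$, then swapping them yields the same graph. The point is that a switch addressed to $t$ only alters incidences of edges in $G[X_s^{(1)}]$ for bags $X_s^{(1)}$ on one side of the separator and incident to it — and the separator used is contained in $X_t^{(1)}$. Two switches at different nodes $t_1,t_2$ either have separators lying in disjoint "branches" of the tree (in which case they touch disjoint sets of edges and trivially commute), or one separator, say the $t_1$-one, is "deeper" than the other; but then reversing the order just relabels which of $A,B$ is the reversed side, and a direct check on incidence functions shows the composition is unchanged. The cleanest way to organize this is to observe that a switch at node $t$ of $T^{(1)}$ only ever modifies incidences of edges $e$ for which one endpoint of $e$ lies in $X_t^{(1)}$; two switches at distinct nodes can simultaneously modify the incidence of an edge $e$ only if $e$ joins two vertices each lying in a different relevant bag, and in that situation a short case analysis (using that the separators are nested along the tree path between $t_1$ and $t_2$) shows the two toggles are independent. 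Hence any two switches addressed to distinct tree nodes commute as involutions.

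From commutation the lemma follows by a standard bubble-sort argument. Given $\mathcal S$, repeatedly swap adjacent switches addressed to distinct nodes; each such swap leaves the resulting graph literally unchanged, hence leaves the property of being $\widehat\varphi$-isomorphic to $H$ (equivalently, by Lemma~\ref{lem:folk}, the property that $\varphi$ maps $\{E_{G'}(v)\}$ onto $\{E_H(v)\}$) unchanged. By performing the right sequence of adjacent transpositions we can bring $\mathcal S$ into any order that preserves, for each node $t$, the relative order of the switches addressed to $t$ — which is precisely what conditions (i) and (ii) demand (switches at a $W_2^{(1)}$-node correspond to a fixed $X_t^{(1)}$; switches at a $W_{\ge 3}^{(1)}$-node are exactly those with separator inside $X_t$ that is not some $X_{t'}^{(1)}$). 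Therefore the reordered sequence $\mathcal S'$ produces exactly the same graph $G'$ as $\mathcal S$, so $\mathcal S'$ is again an $H$-sequence.

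The main obstacle I anticipate is the commutation step: one has to be careful that two switches addressed to the \emph{same} node need not commute (indeed reversals inside a cycle-bag do not commute in general, which is why condition (ii) forbids reordering them), and to verify rigorously, at the level of incidence functions, that switches at \emph{distinct} nodes never interfere — in particular handling the edges of the separators themselves and the case where the two separators share a vertex. Making this case analysis clean, by phrasing a switch as a toggle supported on a set of (edge, endpoint) incidences determined by the node's bag, is the crux; once that is in place the rest is the routine permutation argument above.
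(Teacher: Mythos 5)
Your proposal is correct and takes essentially the same route as the paper: the paper likewise attaches each switch to a node of the Tutte tree, observes that switches at distinct nodes have nested separations (either $A\subseteq A'$ or $A\subseteq B'$, using that switches with respect to $(A,B)$ and $(B,A)$ are equivalent), notes that such nested switches commute, and obtains $\mathcal{S}'$ from $\mathcal{S}$ by a series of adjacent swaps. The commutation verification you defer (in particular the case of separators sharing a vertex) is precisely the step the paper also dispatches as a ``straightforward observation.''
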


\begin{proof}
Let $t\in W_2^{(1)}$ and let $(A,B)$ be a Whitney separation such that $A\cap B=X_t^{(1)}$. Consider a Whitney separation $(A',B')$ such that $A'\cap B'\neq X_t^{(1)}$. Then by the definition of the Tutte decomposition, either $A\subseteq A'$ or, symmetrically, $A\subseteq B'$. 

Let $t\in W_{\geq 3}^{(1)}$ and assume that $(A,B)$ is a Whitney separation such that $A\cap B\subseteq X_t^{(1)}$ but $A\cap B\neq X_{t'}^{(1)}$ for all $t'\in W_2^{(1)}$. 
Let also $(A',B')$ be a Whitney separation such that either $(A',B')=X_{t'}^{(1)}$ for some $t'\in W_2^{(1)}$ or $|(A'\cap B')\cap X_t^{(1)}|\leq 1$, i.e., $A'\cap B'$ doe not separate $X_t^{(1)}$.
Then again we have the same property that either $A\subseteq A'$ or, symmetrically, $A\subseteq B'$. 

These observations immediately imply that $\mathcal{S}'$ can be obtained from $\mathcal{S}$ by a series of swaps of consecutive Whitney switches $(A,B)$ and $(A',B')$ such that 
 either $A\subseteq A'$ or, symmetrically, $A\subseteq B'$. Then to show the lemma, we use the following straightforward observation.
Let $F$ be a 2-connected graph and let $(A,B)$ and $(A',B')$ be Whitney separations of $F$ such that $A\subseteq A'$. Then the graphs obtained by executing the two Whitney switches with respect to $(A,B)$ and $(A',B')$ in any order are identical. Recall also that switches with respect to $(A',B')$ and $(B',A')$  are equivalent. This implies that $\mathcal{S}'$ is an $H$-sequence.
\end{proof}

We show that we can restrict the set of considered Whitney switches. 

For $t\in W_{\geq 3}^{(1)}$, we say that $X_t^{(1)}$ is \emph{$\varphi$-good} if $G[X_t^{(1)}]$ is $\varphi$-isomorphic to $H[X_{\alpha(t)}^{(2)}]$, and $X_t^{(1)}$ is \emph{$\varphi$-bad} otherwise. Notice that if $G[X_t^{(1)}]$ is 3-connected, then $X_t^{(1)}$ is $\varphi$-good but this not always so if $G[X_t^{(1)}]$ is a cycle.  

\begin{figure}[ht]
\centering
\scalebox{0.6}{
\input{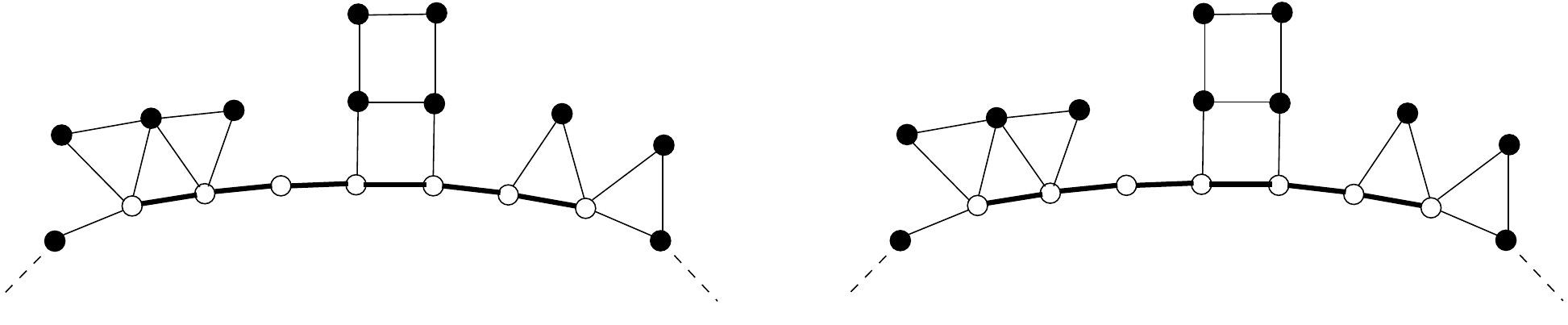_t}}
\caption{An example of a $\varphi$-good segment; $\varphi(e_i)=e_i'$ for $i\in\{1,\ldots,18\}$, the vertices of the segment  are white.} 
\label{fig:good-segm}
\end{figure}

Let $t\in W_{\geq 3}^{(1)}$ such that  $X_t^{(1)}$ is  $\varphi$-bad. Clearly, $G[X_t^{(1)}]$ is a cycle. Let $\{t_1,\ldots,t_s\}=N_{T^{(1)}}^2(t)$ and denote
$G_t=G[X_t^{(1)}\cup \bigcup_{i=1}^sX_{t_i}^{(1)}]$ and $H_{\alpha(t)}=H[X_{\alpha(t)}^{(2)}\cup\bigcup_{i=1}^sX_{\alpha(t_i)}^{(2)}]$.
Let $P=v_0\cdots v_r$ be a path in $G[X_t^{(1)}]$ and $e_i=v_{i-1}v_i$  for $i\in\{1,\ldots,r\}$. We say that $P$ a \emph{$\varphi$-good segment} of $X_t^{(1)}$ if the following holds (see Fig.~\ref{fig:good-segm} for an example):
\begin{itemize}
\item[(i)] the length of $P$ is at least 5,
\item[(ii)]  there is a path $P'=u_0\cdots u_r$ in $H[X_{\alpha(t)}^{(2)}]$ such that $u_{i-1}u_i=\varphi(e_i)$ for all $i\in\{1,\ldots,r\}$,   
\item[(iii)] for every $i\in \{1,\ldots,r\}$ and 
for every $t'\in W_{\geq 3}^{(1)}$ such that $X_t^{(1)}\cap X_{t'}^{(1)}=\{v_{i-1},v_i\}$, $X_{t'}^{(1)}$ is $\varphi$-good,
\item[(iv)] for every $i\in\{1,\ldots,r-1\}$, $\varphi(E_{G_t}(v_i))=E_{H_{\alpha(t)}}(u_i)$. 
\end{itemize}

\begin{figure}[ht]
\centering
\scalebox{0.6}{
\input{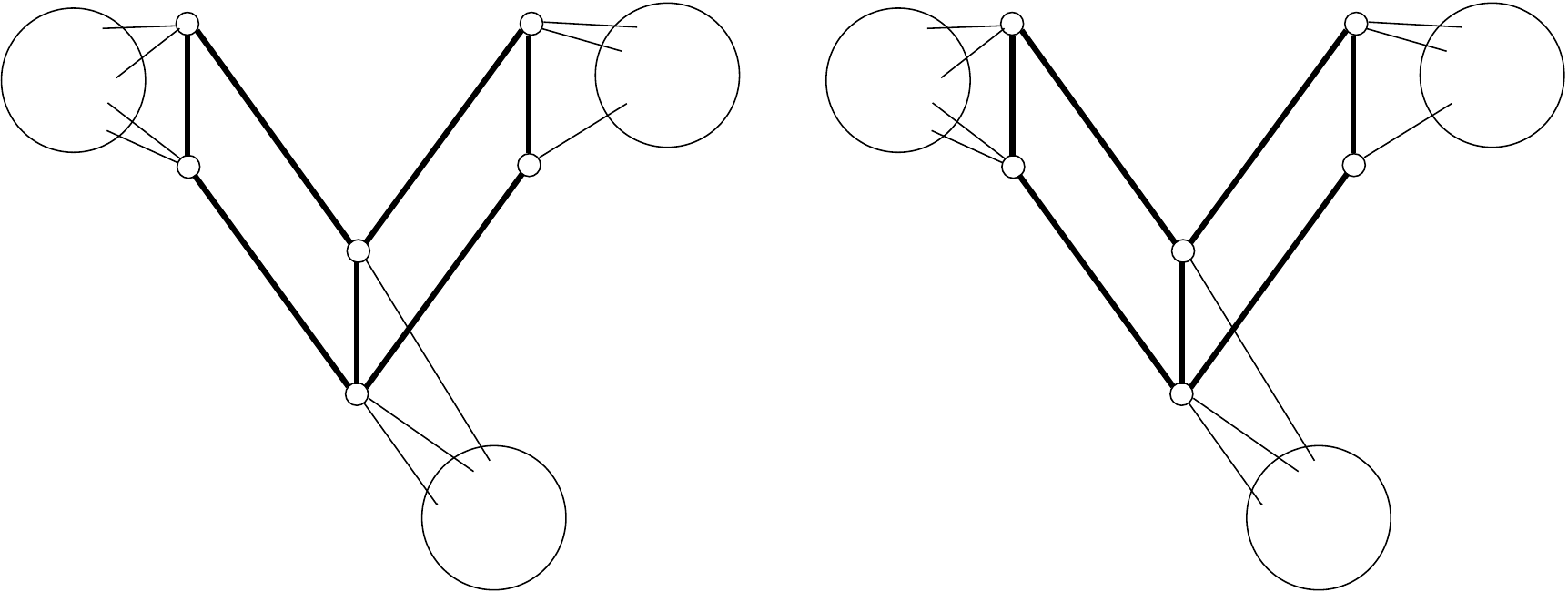_t}}
\caption{Mutually $\varphi$-good bags; $\varphi(e_i)=e_i'$ for $i\in\{1,\ldots,7\}$, the vertices of the mutually $\varphi$-good bags of $G$ and the corresponding bags of $H$ are white.} 
\label{fig:mutually-good}
\end{figure}

For distinct $t_1,t_2\in W_{\geq 3}^{(1)}$ with a common neighbor in $T^{(1)}$, we say that $X_{t_1}^{(1)}$ and $X_{t_2}^{(1)}$ are \emph{mutually $\varphi$-good} (see Fig.~\ref{fig:mutually-good}) if they are $\varphi$-good
and 
$G[X_{t_1}^{(1)}\cup X_{t_2}^{(1)} ]$ is $\varphi$-isomorphic to $H[X_{\alpha(t_1)}^{(2)}\cup X_{\alpha(t_2)}^{(2)}]$.

We say that an $H$-sequence is \emph{$\varphi$-good} if no Whitney switch of $\mathcal{S}$ splits (mutually) $\varphi$-good bags and segments. Formally, for every switch with respect to some Whitney separation $(A,B)$  in $\mathcal{S}$, 
\begin{itemize}
\item[(i)] $X_t^{(1)}\subseteq A$ or $X_t^{(1)}\subseteq B$ for every $\varphi$-good bag $X_t^{(1)}$, 
\item[(ii)] $V(P)\subseteq A$ or $V(P)\subseteq B$ for every $\varphi$-good segment $P$,
\item[(iii)] $X_{t_1}^{(1)}\cup X_{t_2}^{(1)}\subseteq A$ or $X_{t_1}^{(1)}\cup X_{t_2}^{(1)}\subseteq B$
for every two distinct mutually $\varphi$-good bags  $X_{t_1}^{(1)}$ and $X_{t_2}^{(1)}$. 
\end{itemize}

We prove that it is sufficient to consider $\varphi$-good $H$-sequences.

\begin{lemma}\label{lem:good}
There is a minimum $H$-sequence of Whitney switches $\mathcal{S}$ that is $\varphi$-good.
\end{lemma}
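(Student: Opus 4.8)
The plan is to start from an arbitrary minimum $H$-sequence $\mathcal{S}$ and show that it can be transformed, without increasing its length, into a $\varphi$-good one. The key observation is that the three "no-split" conditions are local: condition (i) concerns a single $\varphi$-good bag, condition (ii) a $\varphi$-good segment inside one cycle bag, and condition (iii) two adjacent mutually $\varphi$-good bags. By Lemma~\ref{lem:rearrange}, switches of distinct "local types" can be freely commuted past one another, so it suffices to argue, type by type, that the switches violating a given condition can be eliminated or replaced. First I would invoke Lemma~\ref{lem:enhanced} to assume $G$ and $H$ are enhanced and fix the tree isomorphism $\alpha$ from Lemmas~\ref{lem:isom-decomp} and~\ref{lem:isom-decomp-enh}.

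For the $\varphi$-good bags (condition (i)): if $X_t^{(1)}$ is $\varphi$-good, then $G[X_t^{(1)}]$ is $\varphi$-isomorphic to $H[X_{\alpha(t)}^{(2)}]$ already, so a Whitney switch whose separator $\{a,b\}$ lies inside $X_t^{(1)}$ and properly splits it (this can only happen when the torso is a cycle and $a,b$ are nonadjacent on it) is "wasteful" locally. The idea is to push all switches with separator inside $X_t^{(1)}$ to a consecutive block (using Lemma~\ref{lem:rearrange}), and then observe that within that block the net effect on $X_t^{(1)}$ is a sequence of circular reversals sorting a permutation that is already sorted; by Lemma~\ref{lem:signs} (or rather the trivial case of it) such a block can be shortened unless empty, contradicting minimality — hence the block is empty. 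The same rearrange-into-a-block argument handles condition (iii): two mutually $\varphi$-good adjacent bags together with their common size-2 separator node form a region that is already $\varphi$-isomorphic to its image, so switches supported strictly inside that region can be commuted together and then removed by minimality.

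The real obstacle is condition (ii), the $\varphi$-good \emph{segments}, and this is where the reversal-sorting machinery of Section~\ref{sec:reversals} is needed. A $\varphi$-good segment $P=v_0\cdots v_r$ of a cycle bag $X_t^{(1)}$ corresponds, via the identification of Whitney switches on a cycle with circular reversals (as in Fig.~\ref{fig:reverse}), to a signed strip of the relevant partially signed circular permutation: the sign at each internal vertex $v_i$ records whether the neighbors of $v_i$ in $G_t$ and in $H_{\alpha(t)}$ agree (sign $+1$) or are swapped (sign $-1$), via condition (iv); since $P$ is a $\varphi$-good segment, all internal signs are consistent with $P$ being canonically signed, and the length condition (i) gives $|P|\geq 5$. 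Now Lemma~\ref{lem:strip-cut-circ} says there is an optimal sorting sequence of the circular permutation that never cuts a signed strip of length at least $5$. Translating this optimal non-strip-cutting sorting sequence back into Whitney switches (again via the cycle–permutation dictionary, and using Lemma~\ref{lem:linearize} to relate circular and linear distances if necessary) yields an $H$-sequence — minimum, since the correspondence preserves the number of operations — in which no switch splits $P$. One must also check compatibility: the switches produced for different cycle bags $t$, and the switches at the $W_2$ separators, can be interleaved with the segment-respecting switches inside each cycle bag without interference, which is exactly the commutation guaranteed by Lemma~\ref{lem:rearrange}. Finally I would run the three reductions in order — first fix segments bag by bag using Lemma~\ref{lem:strip-cut-circ}, then eliminate the (now-consecutive) wasteful switches inside $\varphi$-good bags, then those inside mutually $\varphi$-good pairs — verifying at each stage that the earlier-established properties are preserved, and conclude that the resulting minimum $H$-sequence is $\varphi$-good.
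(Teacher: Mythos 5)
Your overall strategy (rearrange with Lemma~\ref{lem:rearrange}, translate the switches inside a cycle bag into reversals of a partially signed circular permutation, and invoke Lemma~\ref{lem:strip-cut-circ} for the segments) is the paper's strategy, and your treatment of condition (ii) is essentially correct. But your handling of condition (i) has a genuine gap. After pushing the offending switches into a block at the end, the permutation of the cycle edges of a $\varphi$-good bag is indeed the identity, \emph{but the signs are not}: each subtree $G_i$ hanging at a $W_2$-separator $\{v_{j_i-1},v_{j_i}\}\subseteq X_t^{(1)}$ may be attached with flipped orientation, i.e.\ $\varphi(E_{G_i}(v_{j_i}))=E_{H_i}(u_{j_i-1})$ and vice versa. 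Fixing these flips requires switches whose separators lie inside $X_t^{(1)}$ (the trivial reversals $\rhos^c(j_i,j_i)$, i.e.\ switches at the adjacent pairs), so the block is in general \emph{not} empty, and Lemma~\ref{lem:signs} does not let you shorten it: it only says that a block of the same length $|I|$ consisting of \emph{non-splitting} switches achieves the same effect. Consequently you cannot derive a contradiction from length-minimality alone. The paper instead fixes a minimum $H$-sequence that, among all minimum ones, has the fewest switches splitting $\varphi$-good bags, and then uses Lemma~\ref{lem:signs} to \emph{replace} the splitting block by an equally long non-splitting one, contradicting this secondary extremal choice. Without such a secondary criterion (here and, analogously, in your segment step), the "eliminate or replace" plan does not terminate in a contradiction.

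Condition (iii) is also mishandled: the switches that can separate two mutually $\varphi$-good bags $X_{t_1}^{(1)},X_{t_2}^{(1)}$ are precisely switches at their common $W_2$-separator, and these cannot simply be "commuted together and removed by minimality" — when that separator node has degree at least $3$, or when some neighbouring bags are \emph{not} pairwise mutually good, a switch at this separator may be genuinely necessary (to reorient other subtrees hanging there). The paper's argument is different: it shows that any minimum $H$-sequence already satisfying (i) and (ii) performs at most one switch at each such separator, namely the one realizing the partition $(N_1,N_2)$ of the neighbouring bags into mutual-goodness classes in the intermediate graph, and that this switch never places two bags that were mutually $\varphi$-good in the original $G$ on opposite sides; so (iii) holds automatically rather than by deletion. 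You would need an argument of this kind (and a check that your reordering of the three steps preserves the earlier conditions, since the paper establishes (i) before (ii) and uses it there) to close the proof.
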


\begin{proof}
First, we show that there is a minimum $H$-sequence of Whitney switches $\mathcal{S}$ such that for every switch with respect to some Whitney separation $(A,B)$  in $\mathcal{S}$, condition (i)
of the definition of $\varphi$-good sequences 
 is fulfilled.

Suppose that $\mathcal{S}$ is a minimum $H$-sequence of Whitney switches such that the number of switches with respect to Whitney separations $(A,B)$ that split $\varphi$-good bags is minimum. We show that $\mathcal{S}$ satisfies (i). The proof is by contradiction. Assume that there is $t\in W_{\geq 3}^{(1)}$ such that $X_t^{(1)}$ is $\varphi$-good 
and
there is a switch in $\mathcal{S}$ with respect to a Whitney separation $(A,B)$
with the property that $X_t^{(1)}\setminus A\neq\emptyset$ and $X_t^{(1)}\setminus B\neq\emptyset$.  By Lemma~\ref{lem:rearrange}, we can assume that all the switches with respect to Whitney separations $(A,B)$ such that $A\cap B\subseteq X_t^{(1)}$ 
splitting $X_t^{(1)}$
are in the end of $\mathcal{S}$ and denote by $\mathcal{S}'$ the subsequence of these switches.
Denote by $G'$ the graph obtained from $G$ by performing the switches prior $\mathcal{S}'$.  Then the graph $G''$ isomorphic to $H$ is obtained by performing $\mathcal{S}'$. 

Clearly, $G'[X_t^{(1)}]$ is a cycle of length at least 4. Denote by $v_1,\ldots,v_r$ the vertices of the cycle (in the cycle order) and let $e_i=v_{i-1}v_i$ for $i\in\{1,\ldots,r\}$ assuming that $v_0=v_r$ (i.e., the indices are taken modulo $r$). Notice that for each neighbor $t'$ of $t$ in $T^{(1)}$, $t'\in W_2^{(1)}$ and $X_{t'}^{(1)}=\{v_{i-1},v_i\}$ for some $i\in\{1,\ldots,r\}$. Assume that $N_T(t)=\{t_1,\ldots,t_s\}$ where $X_{t_i}^{(1)}=\{v_{j_i-1},v_{j_i}\}$ for  $1\leq j_1<\ldots<j_s\leq r$.  Denote by $T_1^{(1)},\ldots,T_s^{(1)}$ the subtrees of $T^{(1)}-t$ containing $t_1,\ldots,t_s$ respectively, and let $G_i$ be the subgraph of $G'$ induced by the vertices of $\bigcup_{h\in V(T^{(1)}_i)}X_h^{(1)}$ for $i\in\{1,\ldots,s\}$. For $i\in \{1,\ldots,s\}$, let $T_1^{(2)},\ldots,T_s^{(2)}$ be the subtrees of $T^{(2)}-\alpha(t)$ that contain $\alpha(t_1),\ldots,\alpha(t_s)$ respectively, and let $H_i$ be the subgraph of $H$ induced by the vertices of $\bigcup_{h\in V(T^{(2)}_i)}X_h^{(2)}$ for $i\in\{1,\ldots,s\}$. 

Since $X_t^{(1)}$ is $\varphi$-good, $\varphi(e_1),\ldots,\varphi(e_r)$ form a cycle of $H$ in the given order. Assume that $\varphi(e_i)=u_{i-1}u_i$ for $i\in\{1,\ldots,r\}$ for $u_1,\ldots,u_r$ forming $X_{\alpha(t)}^{(2)}$ (assuming that $u_0=u_r$).
Observe that the graph $\varphi$-isomorphic to $H$ is obtained from $G'$ by Whitney switches with respect to Whitney separations $(A,B)$ such that $A\cap B=\{v_i,v_j\}$ for distinct nonadjacent 
vertices $v_i$ and $v_j$ for some $i,j\in\{1,\ldots,r\}$. This implies that $G_i$ is $\varphi$-isomorphic to $H_i$ for every $i\in\{1,\ldots,s\}$ as the switches do not affect these graphs. 
However, $G'$ and $H$ are not $\varphi$-isomorphic by the minimality of $\mathcal{S}$. 
By Lemma~\ref{lem:folk}, we obtain that there are $i\in\{1,\ldots,r\}$ such that $\varphi(E_{G'}(v_i))\neq 
E_{H}(u_i)$. More precisely, taking into account that every $G_i$ is $\varphi$-isomorphic to $H_i$, we have that there is $i\in\{1,\ldots,s\}$ such that 
$\varphi(E_{G_i}(v_{j_i}))=E_{H_i}(u_{j_i-1})$ and $\varphi(E_{G_i}(v_{j_i-1}))=E_{H_i}(u_{j_i})$. 
Denote by $I\subseteq \{1,\ldots,s\}$ the set of all such indices $i\in\{1,\ldots,s\}$.  

We define the partially signed circular permutation $\pis^c=(\langle 1,s_1\rangle,\ldots,\langle r,s_r\rangle)$ such that 
$s_{j_i}=-1$ for all $i\in I$, $s_{j_i}=+1$ for all $i\in \{1,\ldots,s\}\setminus I$, and $s_j=0$ for $j\in \{1,\ldots,r\}\setminus \{j_1,\ldots,j_s\}$. The crucial observation is that obtaining the graph $\varphi$-isomorphic to $H$ from $G'$ by Whitney switches is equivalent to sorting $\pis^c$ by reversals. By Lemma~\ref{lem:signs}, there is an optimal sorting sequence composed by trivial reversals $\rhos^c(j,j)$ for $s_j=-1$. This corresponds to performing the Whitney switches with respect to Whitney separations $(V(G_i),(V(G')\setminus V(G_i))\cup\{v_{j_i-1}v_{j_i}\} )$ for all $i\in I$. This contradicts the choice of $\mathcal{S}$, because these switches do not split $X_t^{(1)}$.

By the next step, we show  that there is a minimum $H$-sequence of Whitney switches $\mathcal{S}$ such that for every switch with respect to some Whitney separation $(A,B)$  in $\mathcal{S}$, conditions (i) and (ii) of the definition of $\varphi$-good sequences hold. The proof is similar to the first part.

Suppose that $\mathcal{S}$ is a minimum $H$-sequence of Whitney switches that satisfies (i) such that the number of switches with respect to Whitney separations $(A,B)$ that split $\varphi$-good segments is minimum. We claim that $\mathcal{S}$ satisfies (ii). Assume that this is not the case and there is a switch that splits some $\varphi$-good segments $P$. Assume that $P$ is a path of 
$G[X_t^{(1)}]$ for $t\in W_{\geq 3}^{(1)}$. Denote by $P_1,\ldots,P_\ell$ the family of inclusion maximal $\varphi$-good segments in $G[X_t^{(1)}]$; note that $P$ is a subpath of one of these segments.
 By Lemma~\ref{lem:rearrange}, we can assume that all the switches with respect to Whitney separations $(A,B)$ such that (a) $A\cap B\subseteq X_t^{(1)}$, (b) either $(A,B)$ splits $X_t^{(1)}$ or $A\cap B=\{x,y\}$ and $xy$ is an edge of one of the paths $P_1,\ldots,P_\ell$ are in the end of $\mathcal{S}$, and denote by $\mathcal{S}'$ the subsequence of these switches.
Denote by $G'$ the graph obtained from $G$ by performing the switches prior $\mathcal{S}'$.  Then the graph $G''$ isomorphic to $H$ is obtained by performing $\mathcal{S}'$.

Denote by $v_1,\ldots,v_r$ the vertices of the cycle $G'[X_t^{(1)}]$ (in the cycle order) and let $e_i=v_{i-1}v_i$ for $i\in\{1,\ldots,r\}$ assuming that $v_0=v_r$ (i.e., the indices are taken modulo $r$). Notice that for each neighbor $t'$ of $t$ in $T^{(1)}$, $t'\in W_2^{(1)}$ and $X_{t'}{(1)}=\{v_{i-1},v_i\}$ for some $i\in\{1,\ldots,r\}$. Assume that $N_T(t)=\{t_1,\ldots,t_s\}$ where $X_{t_i}^{(1)}=\{v_{j_i-1},v_{j_i}\}$ for  $1\leq j_1<\ldots<j_s\leq r$.  Denote by $T_1^{(1)},\ldots,T_s^{(1)}$ the subtrees of $T^{(1)}-t$ containing $t_1,\ldots,t_s$ respectively, and let $G_i$ be the subgraph of $G'$ induced by the vertices of $\bigcup_{h\in V(T^{(1)}_i)}X_h^{(1)}$ for $i\in\{1,\ldots,s\}$. For $i\in \{1,\ldots,s\}$, let $T_1^{(2)},\ldots,T_s^{(2)}$ be the subtrees of $T^{(2)}-\alpha(t)$ that contain $\alpha(t_1),\ldots,\alpha(t_s)$ respectively, and let $H_i$ be the subgraph of $H$ induced by the vertices of $\bigcup_{h\in V(T^{(2)}_i)}X_h^{(2)}$ for $i\in\{1,\ldots,s\}$.

Notice that $G_i$ is $\varphi$-isomorphic to $H_i$ for every $i\in\{1,\ldots,s\}$. If $e_{j_i}$ is an edge of one of the paths $P_1,\ldots,P_\ell$, then this follows from the conditions (iii) and (iv) of the definition of $\varphi$-good segments. Otherwise, $\mathcal{S}'$ does not contain a switch with respect to a separation $(A,B)$ with $A\cap B=\{v_{j_i-1},v_{j_i}\}$, that, $G_i$ is not affected by switches in $\mathcal{S}'$. 

 Denote by $e_1',\ldots,e_r'$ the edges of $H[X_{\alpha(t)}^{(2)}]$ taken in the cycle order and denote by $u_1,\ldots,u_r$ the vertices of this cycle such that $e_i'=u_{i-1}u_i$ (assuming that $u_0=u_r$). For $i\in\{1,\ldots,s\}$, let $\varphi(e_{j_i})=e_{j_i'}$ for $j_1',\ldots,j_s'\in\{1,\ldots,r\}$. Since each $G_i$ is $\varphi$-isomorphic to $H_i$, we have that, by Lemma~\ref{lem:folk}, for every $i\in\{1,\ldots,s\}$, either $\varphi(E_{G_i}(v_{j_i-1}))=E_{G_i}(u_{j_i'-1})$ and $\varphi(E_{G_i}(v_{j_i}))=E_{G_i}(u_{j_i'})$ or, symmetrically,  $\varphi(E_{G_i}(v_{j_i-1}))=E_{G_i}(u_{j_i'})$ and $\varphi(E_{G_i}(v_{j_i}))=E_{G_i}(u_{j_i'-1})$. Let $I=\{i\mid 1\leq i\leq s, \varphi(E_{G_i}(v_{j_i-1}))=E_{G_i}(u_{j_i'-1})\text{ and }\varphi(E_{G_i}(v_{j_i}))=E_{G_i}(u_{j_i'})\}$, and let $\bar{I}=\{1,\ldots,r\}\setminus I$.
 
 We construct the following partially signed circular permutation $\pis^c=(\langle \pi_1,s_1\rangle,\ldots,\langle \pi_r,s_r\rangle)$ such that for every $i\in \{1,\ldots,r\}$, $e_{\pi_i}=\varphi^{-1}(e_i')$. For $i\in \{1,\ldots,s\}$, we define $s_{j_i}=+1$ if $i\in I$ and $s_{j_i}=-1$ if $i\in\bar{I}$. The other sign are zeros, that is, $s_j=0$ if $j\notin \{i_1,\ldots,i_s\}$. 
Notice that by the definition of $\varphi$-good segments, $(\langle \pi_i,s_i\rangle,\ldots,\langle \pi_j,s_j\rangle )$ (with indices taken modulo $r$) is a signed block of $\pis^c$ of length at least 5 if and only if the edges $e_{\pi_1},\ldots,e_{\pi_j}$ form a $\varphi$-good segment.  

Similarly to the first part of the proof, we have  that obtaining the graph $\varphi$-isomorphic to $H$ from $G'$ by Whitney switches is equivalent to sorting $\pis^c$ by reversals. 
By Lemma~\ref{lem:strip-cut-circ}, there is an optimal sorting sequence that does not cut strips of $\pis^c$ of length at least 5. This implies, that there is a sequence of Whitney switches $\mathcal{S}''$ of the same length as $\mathcal{S}'$ such that applying $\mathcal{S}''$ to $G'$ creates a graph $G''$ isomorphic to $H$ and for every switch $\mathcal{S}''$ with with respect a separation $(A,B)$, either $V(P_i)\subseteq A$ or $V(P_i)\subseteq B$  for every $i\in\{1,\ldots,\ell\}$. This contradicts the choice of $\mathcal{S}$ and, therefore, proves the claim.

Finally, we show that every minimum $H$-sequence of Whitney switches $\mathcal{S}$ satisfying conditions (i) and (ii) of the definition of $\varphi$-good sequences satisfies (iii) as well.

Let $t\in W_2^{(1)}$.  By Lemma~\ref{lem:rearrange}, we can assume that all the switches with respect to Whitney separations $(A,B)$ such that $A\cap B=X_t^{(1)}$ are executed in the end of sequence. Denote by $\mathcal{S}'$ the subsequence of $\mathcal{S}$ formed by these Whitney switches. In the same way as before, let $G'$ be the graph obtained from $G$ by performing the switches prior $\mathcal{S}'$.  Then the graph $G''$ isomorphic to $H$ is obtained by performing $\mathcal{S}'$. Notice that for every neighbor $t'$ of $t$ in $T^{(1)}$, the bag $X_{t'}^{(1)}$ is $\varphi$-good in $G'$. Moreover, because $\mathcal{S}$ satisfies (i), for every distinct $t_1,t_2\in N_{T^{(1)}}(t)$ such that $X_{t_1}^{(1)}$ and $X_{t_2}^{(1)}$ are mutually $\varphi$-good for the initial graph $G$, these bags are mutually $\varphi$-good for $G'$. This implies that if $\mathcal{S}'$ is empty, then the claim hold. Assume that $\mathcal{S}'$ contains at least one  switch. 

Since every  $X_{t'}^{(1)}$ is $\varphi$-good in $G'$ for $t'\in N_{T^{(1)}}(t)$, there is the partition $(N_1,N_2)$ of $N_{T^{(1)}}(t)$ such that $X_{t_1}^{(1)}$ and $X_{t_2}^{(1)}$ are mutually $\varphi$-good in $G'$ for distinct $t_1,t_2\in N_{T^{(1)}}(t)$ if and only if either $t_1,t_2\in N_1$ or $t_1,t_2\in N_2$. This implies that $\mathcal{S}'$ consist of the single Whitney switch for the unique Whitney separation $(A,B)$ such that 
$\bigcup_{t'\in N_1}X_{t'}^{(1)}\subseteq A$, $\bigcup_{t'\in N_2}X_{t'}^{(1)}\subseteq B$ and $A\cap B=X_{t}^{(1)}$. Clearly, this separation does not split any 
$X_{t_1}^{(1)}$ and $X_{t_2}^{(1)}$ that are mutually $\varphi$-good for the initial graph $G$. This means that (iii) is fulfilled.
\end{proof}

Let $t\in W_{\geq 3}^{(1)}$ be such that $X_t^{(1)}$ is  $\varphi$-bad.
 Denote by  $t_1,\ldots,t_s\neq t$ the nodes of $N_{T^{(1)}}^2(t)$. Let 
$G_t=G[X_t^{(1)}\cup\bigcup_{i=1}^s X_{t_i}^{(1)}]$ and $H_{\alpha(t)}=G[X_{\alpha(t)}^{(2)}\cup\bigcup_{i=1}^s X_{\alpha(t_i)}^{(2)}]$. In words, $G_t$ is the subgraphs of $G$ induced by the vertices of $X_t^{(1)}$ and the vertices of the bags at distance two in $T^{(1)}$ from $t$, and $H_{\alpha(t)}$ the subgraph of $H$ induced by the vertices of the bags that are images of the bags composing $G_t$ according to $\alpha$. 

We say that a vertex $v\in X_t^{(1)}$ is a \emph{crucial breakpoint} if $\varphi(E_{G_t}(v))\neq E_{H_{\alpha(t)}}(u)$ for every $u\in V(H_{\alpha(t)})$. We denote by $b(G)$ the total number of crucial breakpoints in the $\varphi$-bad bags and say that $b(G)$ is the \emph{breakpoint number} of $G$.
Recall that by our convention, $G$ and $H$ are enhanced graphs, but we extend this definition for the general case needed in the next section. For (not necessarily enhanced) 2-isomorphic graphs $G$ and $H$, and a 2-isomorphism $\varphi$, we construct 
their enhancements $\widehat{G}$ and $\widehat{H}$, and consider the enhanced mapping $\widehat{\varphi}$. Then $b(G)$ is defined as $b(\widehat{G})$. 

Observe that if $G$ and $H$ are $\varphi$-isomorphic, then $b(t)=0$ by Lemma~\ref{lem:folk}, but not the other way around. 

We conclude the section by giving a lower bound for the length of an $H$-sequence.

\begin{lemma}\label{lem:lower-break}
Let $\mathcal{S}$ be an $H$-sequence of Whitney switches. Then $b(G)/2\leq |\mathcal{S}|$.
\end{lemma}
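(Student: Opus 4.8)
The plan is to prove, by induction on $|\mathcal{S}|$, the sharper statement that a single Whitney switch changes the breakpoint number by at most $2$. Since applying the whole $H$-sequence produces a graph that is $\varphi$-isomorphic to $H$, and such a graph has breakpoint number $0$ (every bag is $\varphi$-good, so there are no crucial breakpoints, by Lemma~\ref{lem:folk}), iterating the single-switch bound along $G=G_0,G_1,\dots,G_{|\mathcal{S}|}$ then yields $b(G)\le 2|\mathcal{S}|$. Moreover, since $b(G)$ is defined through enhancements and a sequence of switches acts on $G$ and on $\widehat{G}$ in the same way (Lemma~\ref{lem:enhanced}), we may assume $G$ and $H$ are enhanced; and since $|\mathcal{S}|\ge|\mathcal{S}^{*}|$ for a minimum $H$-sequence $\mathcal{S}^{*}$, it is enough to treat minimum $H$-sequences, which by Lemma~\ref{lem:good} may be taken to be $\varphi$-good. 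Thus it suffices to show that if $(A,B)$ is the first switch of such a sequence, with $A\cap B=\{x,y\}$, and $G'$ is obtained from $G$ by this switch (keeping $\varphi$, by our convention), then $b(G)\le b(G')+2$.

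To do this I would fix the switch $(A,B)$ and track how the Tutte decomposition, the bags and the crucial breakpoints change. By (the proof of) Lemma~\ref{lem:isom-decomp} we may keep the same decomposition tree and the same isomorphism $\alpha$: the Tutte decomposition of $G'$ is obtained from that of $G$ by interchanging $x$ and $y$ inside the bags lying on one side of the separator, and, in the case where $x$ and $y$ are nonadjacent vertices of a cycle bag $X_{t_0}$, by additionally reversing an arc of that cycle. In particular the partition $(W_2,W_{\ge 3})$, the cycle-versus-$3$-connected type of each bag, and the graphs $H_{\alpha(t)}$ are all preserved. The key observation is that, under the edge-preserving convention, the switch changes the incidence set $E(\cdot)$ only at the two vertices $x$ and $y$: every other vertex keeps exactly the same set of incident edges (some of those edges merely change their other endpoint between $x$ and $y$).

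Combining these two facts, I would check that for every $\varphi$-bad bag $X_t$ different from the switched cycle bag $X_{t_0}$ and every vertex $v\in X_t\setminus\{x,y\}$, the set $E_{G_t}(v)$ is unchanged (each of its edges stays inside $G_t$), that $X_t$ remains $\varphi$-bad, and hence that $v$ is a crucial breakpoint of $X_t$ in $G$ if and only if it is one in $G'$. Consequently the only crucial breakpoints that can be created or destroyed are those located at $x$ or $y$, and those lying in the switched cycle bag $X_{t_0}$. For the latter, the switch acts on the associated partially signed circular permutation of $X_{t_0}$ (Section~\ref{sec:reversals}) as a single circular reversal, and, exactly as in the classical breakpoint lower bound for sorting by reversals, one reversal changes the number of crucial breakpoints of $X_{t_0}$ — which are precisely the positions where $\varphi$ fails to match a consecutive pair of cycle edges together with their hanging subgraphs — by at most $2$.

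The main obstacle is to fuse these two contributions into the single number $2$: when the switch is at a $W_2$-separator $\{x,y\}$, each of $x$ and $y$ may belong to several $\varphi$-bad cycle bags, so one must argue that the switch flips the crucial-breakpoint status of $x$ (and of $y$) in essentially one bag only, or that the flips elsewhere compensate, so that the net change of $b$ is at most $2$. This is the bookkeeping-heavy step, and it is where the fine structure of the Tutte decomposition around the separator must be used carefully — essentially, that a moved edge's far endpoint migrates only within $\{x,y\}$ and that $x$ and $y$ sit in the same collection of subgraphs $G_t$ — together with the fact that when a cycle bag passes from $\varphi$-bad to $\varphi$-good this happens through a reversal that could itself have produced those breakpoints, so at most two of them are lost. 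Once this verification is in place, the single-switch bound, and hence the lemma, follow.
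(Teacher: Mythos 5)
Your overall strategy coincides with the paper's: the graph obtained after applying all of $\mathcal{S}$ is $\varphi$-isomorphic to $H$, hence has breakpoint number $0$ by Lemma~\ref{lem:folk}, so it suffices to show that a single Whitney switch changes the breakpoint number by at most $2$. The problem is that you never actually establish this per-switch bound: your final paragraph explicitly declares the step of ``fusing the two contributions into the single number $2$'' to be an unresolved obstacle (``once this verification is in place \dots the lemma follows''), offering only candidate ways one \emph{might} argue it (the status of $x$ flips in essentially one bag, or flips elsewhere compensate). Since that bound is the entire content of the lemma, this is a genuine gap at the crux of the proof. The surrounding machinery you invoke --- passing to minimum sequences, Lemma~\ref{lem:good}, the case distinction between $W_2$-separators and cycle-bag separators, and the reduction of the cycle-bag case to the breakpoint bound for sorting by reversals --- does not close the gap and is not needed.

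The paper closes it with one observation. Under the stated convention, a switch with respect to $(A,B)$ with $A\cap B=\{u,v\}$ keeps the (labelled) edge set fixed and reassigns, among edges with an endpoint in $B\setminus A$, the endpoint $u$ to $v$ and vice versa; thus $E_{G'}(w)=E_G(w)$ for every vertex $w\notin\{u,v\}$, while $H$, $\varphi$, and (by the proof of Lemma~\ref{lem:isom-decomp}) the decomposition tree, $\alpha$, the bag types and the graphs $H_{\alpha(t)}$ are preserved, the bags changing only by swapping the names $u$ and $v$. Hence the comparison of $\varphi(E_{G_t}(w))$ with the stars of $H_{\alpha(t)}$ is unaffected for all $w\neq u,v$, so only the two separator vertices can gain or lose the status of crucial breakpoint, and one switch changes $b$ by at most $2$; summing over the $|\mathcal{S}|$ switches and using $b=0$ at the end gives $b(G)\leq 2|\mathcal{S}|$. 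The complication you worry about --- a separator vertex lying in several $\varphi$-bad bags --- would only matter if $b(G)$ counted bag--vertex incidences; since crucial breakpoints are counted as vertices, it disappears, and no bookkeeping over the structure of the Tutte decomposition around the separator, nor any reversal-distance argument, is required.
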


\begin{proof}
The claim is trivial if $b(G)=0$. Assume that $b(G)>0$.
Let $\mathcal{S}$ be an $H$-sequence of Whitney switches, that is, the graph $G'$ obtained from $G$ by applying $\mathcal{S}$ is $\varphi$-isomorphic to $H$. By Lemma~\ref{lem:folk}, $b(G')=0$. Hence, $\mathcal{S}$ should contain switches that decrease the breakpoint number. Clearly,  the Whitney switch with respect to a Whitney partition $(A,B)$ reduces $b(G)$ if and only if $A\cap B=\{u,v\}$, where at least one of $u$ or $v$ is a crucial breakpoint. 
 Then the switch decreases $b(G)$ by at most 2 and the claim follows.
\end{proof}

\section{Kernelization for \probWS}\label{sec:kern}
In this section, we show that \probWS parameterized by $k$ admits a polynomial kernel. To do it, we obtain a more general result by proving that the problem has a polynomial kernel when parameterized by the breakpoint  number of the first input graph.

\begin{theorem}\label{thm:kern}
\probWS has a kernel such that each graph in the obtained instance has at most $\max\{52\cdot b-36,3\}$ vertices, where $b$ is the breakpoint number of the input graph. 
\end{theorem}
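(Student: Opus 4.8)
The plan is to build the kernel in two phases: first establish a linear bound on the number of \emph{relevant} vertices (those lying in $\varphi$-bad bags or at their boundary), using the breakpoint number $b$ and the structural results of Section~\ref{sec:techn}; then apply reduction rules that shrink all the $\varphi$-good parts of the graph (good bags, mutually good bags, good segments) into bounded-size gadgets while preserving the reversal distance exactly, and finally collapse the parts of the Tutte tree that do not interact with any bad bag.

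\textbf{Phase 1: bounding the bad part.} By Lemma~\ref{lem:good} we may restrict attention to $\varphi$-good $H$-sequences, so the only bags that can be split by switches are the $\varphi$-bad cycle-bags, and within them only the edges that are not interior edges of $\varphi$-good segments. Fix a $\varphi$-bad bag $X_t^{(1)}$. Using Lemma~\ref{lem:isom-decomp-enh} and the dictionary between Whitney switches of a cycle-bag and circular reversals set up in Section~\ref{sec:reversals}, transforming $G_t$ into something $\varphi$-isomorphic to $H_{\alpha(t)}$ amounts to sorting a partially signed circular permutation $\pis^c$ by reversals, where the $0$-signs sit at the non-separator positions and the $\pm 1$-signs at the separator positions, and where the crucial breakpoints are exactly the breakpoints of $\pis^c$ that are adjacent to a position needing a sign flip. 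The point is that, after contracting each $\varphi$-good segment of length $\ge 5$ to a single block (justified by Lemma~\ref{lem:strip-cut-circ}, which says no optimal sorting cuts such a strip), the number of remaining ``atoms'' in $\pis^c$ is $O(b_t)$, where $b_t$ is the number of crucial breakpoints in $X_t^{(1)}$: between two consecutive crucial breakpoints there is at most a bounded amount of structure that a good sequence never touches, so it can be compressed. Summing over all $\varphi$-bad bags and recalling $\sum_t b_t = b$ gives a set $U\subseteq V(G)$ of size $O(b)$ containing every vertex that can ever serve as a switch-separator endpoint in some $\varphi$-good minimum $H$-sequence. I would track constants here carefully, since the statement commits to $52b-36$.

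\textbf{Phase 2: reduction rules.} With $U$ in hand, everything outside $N_G[U]$ (in the Tutte-tree sense) is rigid: it belongs to $\varphi$-good bags or maximal chains of mutually $\varphi$-good bags that no good sequence ever splits. The reduction rules are then: (R1) replace each $\varphi$-good bag (or maximal block of mutually $\varphi$-good bags) attached to the ``active'' part by a small rigid gadget — e.g.\ a triangle or a single edge carrying the correct incidence pattern — chosen so that it is $\varphi$-good iff the original was, and so that it admits no internal Whitney separation that a good sequence would use; (R2) replace each maximal $\varphi$-good segment of a $\varphi$-bad cycle-bag by a path of bounded length (length $5$ suffices by the strip lemma) with the same endpoints' incidences; (R3) prune subtrees of $T^{(1)}$ hanging off vertices not in $U$ down to constant size. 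Safety of each rule follows because it changes neither the breakpoint number nor $d^c$ of any bad bag's associated permutation: Lemmas~\ref{lem:folk}, \ref{lem:isom-decomp-enh}, and \ref{lem:good} guarantee that the $H$-sequences of the reduced instance are in length-preserving bijection with the $\varphi$-good $H$-sequences of the original. Finally, since by Lemma~\ref{lem:lower-break} any yes-instance with parameter $k$ has $b\le 2k$, we may also reject outright if $b$ is too large, so the $O(b)$ bound yields the claimed kernel; if $b=0$ the graphs are already $\varphi$-isomorphic and we output a trivial instance on $3$ vertices (a triangle), which accounts for the $\max\{\cdot,3\}$.

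\textbf{Main obstacle.} The delicate part is Phase~1: turning the qualitative statement ``a good sequence never touches the interior of a long good segment'' into a quantitative bound of exactly $O(b)$ atoms per bad bag, with the right constant. This requires arguing that \emph{between} two consecutive crucial breakpoints the permutation $\pis^c$ decomposes into a bounded number of pieces each of which is either a single good-segment block or an isolated element pinned by a good neighboring bag — and then checking that the reduction gadgets introduce only a constant blow-up per piece. The interaction between the circular nature of the permutation (where Lemma~\ref{lem:strip-cut-circ} only protects strips of length $\ge 5$, not $\ge 3$) and the bookkeeping of signs at separator positions is where I expect the argument to be most technical, and it is presumably why the bound is $52b-36$ rather than something cleaner.
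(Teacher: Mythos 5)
Your sketch follows the same general spirit as the paper (restrict to $\varphi$-good $H$-sequences via Lemma~\ref{lem:good}, protect long good segments via Lemma~\ref{lem:strip-cut-circ}, shrink the good parts, count against $b$), but two of its load-bearing steps are genuinely missing or wrong. First, the claim that ``if $b=0$ the graphs are already $\varphi$-isomorphic'' is false: the paper explicitly notes that $b(G)=0$ does \emph{not} imply $\varphi$-isomorphism. Two $\varphi$-good bags meeting at a $2$-separator can fail to be mutually $\varphi$-good, in which case a Whitney switch is unavoidable even though no crucial breakpoint exists; this is exactly what the paper's Reduction Rule~\ref{red:switch-good} handles, performing the forced switch, decrementing $k$, and possibly returning the trivial no-instance. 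Your rule (R1), which replaces good bags or blocks of mutually good bags by ``rigid gadgets,'' never accounts for these forced switches, so the asserted length-preserving bijection between $H$-sequences of the original and reduced instances is unsubstantiated precisely at the only place where $k$ changes. Relatedly, the $\max\{\cdot,3\}$ in the bound does not come from outputting a triangle for a trivially-yes instance; it comes from the counting argument applied to the reduced instance when $b=0$ (which may still be a no-instance for small $k$).

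Second, the quantitative heart of the theorem --- the bound $52b-36$ --- is not derived, and your Phase~1/Phase~2 split does not contain the ingredients needed to derive any linear bound. In the paper the bound emerges only after the rules are exhaustively applied, from a chain of structural facts your proposal lacks: every node of $W_2^{(1)}$ has a $\varphi$-bad neighbor (otherwise Rule~\ref{red:mutually-good-bags} or \ref{red:switch-good} applies); a $\varphi$-good non-leaf bag is contained in the union $U$ of bad bags; a $\varphi$-good leaf bag has exactly one vertex outside $U$ (otherwise the simplicial-deletion Rule~\ref{red:delete} applies); at most two good leaf bags hang off any $2$-node (otherwise two of them are mutually good); the number of bad bags is at most $b$ and they share at most $b-1$ edges; and any breakpoint-free segment of a bad bag of length at least $5$ would be a $\varphi$-good segment, contradicting exhaustiveness of Rule~\ref{red:segments}. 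Together these give $|E(G[U])|\leq 13b-9$ and $|V(G)|\leq 4|E(G[U])|$. Your (R3) ``prune subtrees to constant size'' bounds only the size of each attached good piece, not their \emph{number}, and your gadget substitutions are unspecified (the paper instead adds edges to turn good bags, mutually good bags, and good segments into cliques/chorded cycles and then deletes simplicial vertices together with their images in $H$, which keeps $\varphi$ and the Tutte decompositions under control). As written, the proposal is a plausible outline but not a proof of the stated bound.
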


\begin{proof}
Let $(G,H,\varphi,k)$ be an instance of \probWS, where $G$ and $H$ are $n$-vertex 2-connected 2-isomorphic graphs, $\varphi\colon E(G)\rightarrow E(H)$ is a 2-isomorphism, and $k$ is a nonnegative integer.  

First, we use Proposition~\ref{prop:tutte} to construct the Tutte decompositions of $G$ and $H$. Denote by 
$\mathcal{T}^{(1)}=(T^{(1)},\{X_t^{(1)}\}_{t\in V(T^{(1)})})$
and  $\mathcal{T}^{(2)}=(T^{(2)},\{X_t^{(2)}\}_{t\in V(T^{(2)})})$  the constructed Tutte decompositions of $G$ and $H$ respectively, and let $(W_2^{(h)},W_{\geq 3}^{(h)})$ be the partition of $V(T^{(h)})$ satisfying (T4)--(T8) for $h=1,2$.

In the next step, we construct the isomorphism $\alpha\colon V(T^{(1)})\rightarrow V(T^{(2)})$ satisfying conditions (i)--(iii) of Lemma~\ref{lem:isom-decomp}. Recall that Lemma~\ref{lem:isom-decomp} claims that such an isomorphism always exists.
If $T^{(1)}$ and $T^{(2)}$ are single-vertex trees, then the construction is trivial. Assume that this is not the case. 
Let $t$ be a leaf of $T^{(1)}$ and let $t'$ be its unique neighbor. We have that $t\in W_{\geq 3}^{(1)}$ and $E(G[X_t^{(1)}])\setminus E(G[X_{t'}^{(1)}])\neq \emptyset$. By (iii), we have that 
there is a unique leaf $t''$ of $T^{(2)}$ such that $\varphi(E(G[X_t^{(1)}])\setminus E(G[X_{t'}^{(1)}]))\subseteq E(H[X_{t''}^{(2)}])$ and $\alpha(t)=t''$. 
Clearly, $t''$ can be found in polynomial time. This means that we can construct in polynomial time the restriction of $\alpha$ on the leaves of $T^{(1)}$ that maps them bijectively on the leaves of $T^{(2)}$. Since $T^{(1)}$ and $T^{(2)}$ are isomorphic, there is a unique way to extend $\alpha$ from leaves to $V(T^{(1)})$. This can be done by picking a \emph{root} node $r$ of $T^{(1)}$ and computing $\alpha$ bottom-up starting from the leaves. Given that $\alpha$ is already computed for the leaves, the construction of $\alpha$ can be completed in $\Oh(|V(T^{(1)})|)$ time. 

Given $\alpha$, we compute the enhancements $\widehat{G}$ and $\widehat{H}$ of $G$ and $H$ respectively, and then define the enhanced mapping $\widehat{\varphi}\colon E(\widehat{G})\rightarrow E(\widehat{H})$. Clearly, this can be done in polynomial time. Note that $\alpha$ satisfies the conditions of Lemma~\ref{lem:isom-decomp-enh}.
Observe also that we can verify in polynomial time whether a bag $X_t^{(1)}$ for $t\in W_{\geq 3}^{(1)}$ is $\varphi$-good or not. Then  we can compute in polynomial  time  $b(G)=b(\widehat{G})$. 

To simplify notation, let $G:=\widehat{G}$, $H:=\widehat{H}$ and $\varphi:=\hat{\varphi}$.

Now we apply a series of reduction rules that are applied for $G$, $H$, $\varphi$, and the Tutte decompositions of $G$ and $H$.

The aim of the first rule is to decrease the total size of bags that are  $\varphi$-bad (see Fig.~\ref{fig:rule-one} for an example).

\begin{figure}[ht]
\centering
\scalebox{0.6}{
\input{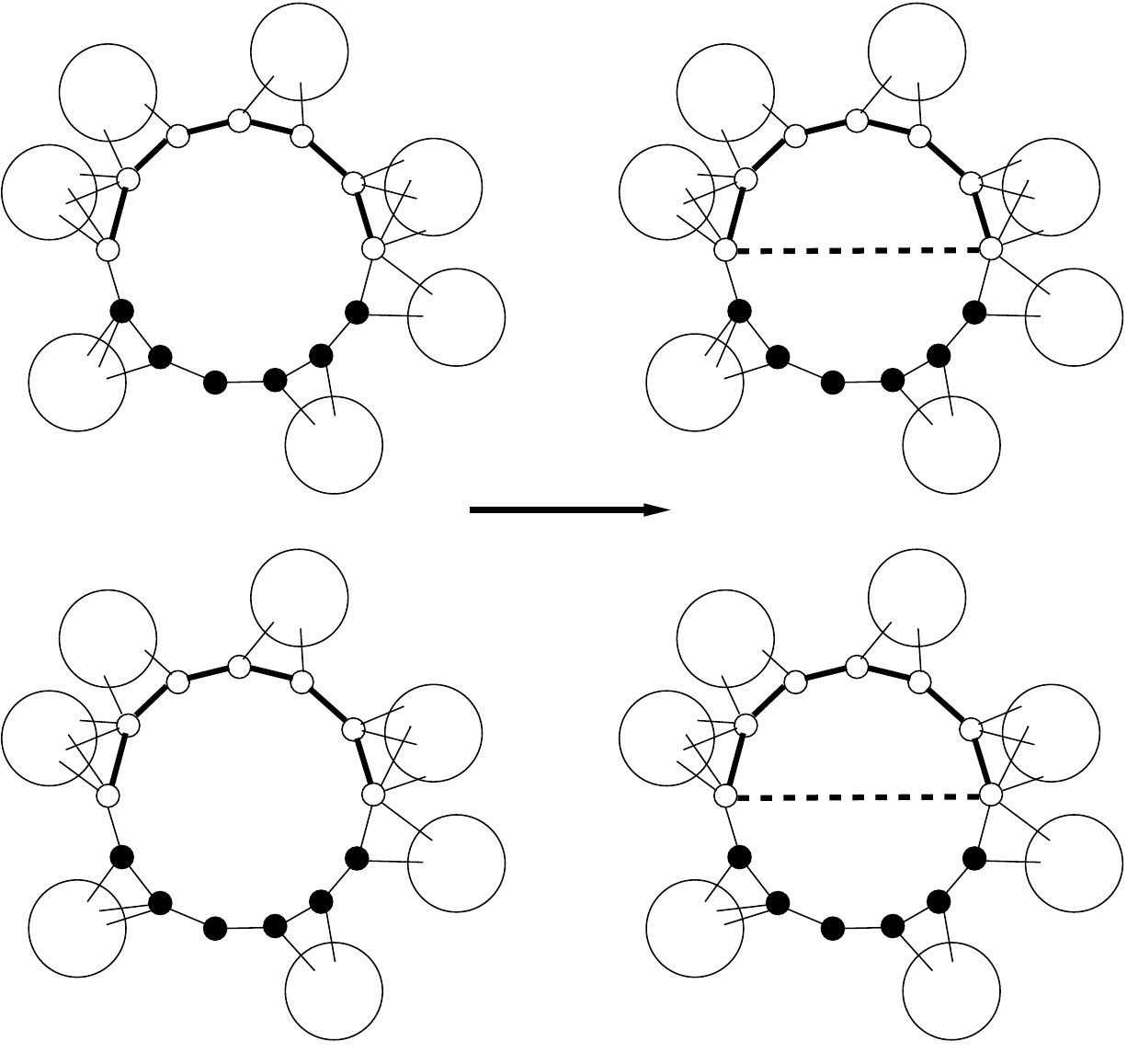_t}}
\caption{An example of an application of Reduction Rule~\ref{red:segments}; $\varphi(e_i)=e_i'$ for $i\in\{1,\ldots,13\}$, the vertices of the $\varphi$-good segment in $G$ and the corresponding segment in $H$ are white, and the added edges are shown by dashed lines.} 
\label{fig:rule-one}
\end{figure}

\begin{redrule}\label{red:segments}
If for $t\in W_{\geq 3}^{(1)}$ such that $X_t^{(1)}$ is  $\varphi$-bad, there is an inclusion maximal $\varphi$-good segment $P=v_0\cdots v_r$, then do the following:
\begin{itemize}
\item find the path $P'=u_0\cdots u_r$ in $H[X_{\alpha(t)}^{(2)}]$ composed by the edges $u_{i-1}u_i=\varphi(v_{i-1}v_i)$ for $i\in\{1,\ldots,r\}$,
\item add the edge $v_0v_r$ to $G$ and $u_0u_r$ to $H$, 
\item extend $\varphi$ by setting $\varphi(v_0u_r)=u_0u_r$,
\item recompute the Tutte decompositions of the obtained graphs and the isomorphism $\alpha$.
\end{itemize}
\end{redrule}

\begin{claim}\label{cl:r-segment}
Reduction Rule~\ref{red:segments} is safe, does not increase the breakpoint number, and can be executed in polynomial time. 
\end{claim}

\begin{proof}[Proof of Claim~\ref{cl:r-segment}]
Denote by $\tilde{G}$ the graph obtained from $G$ by the application of Reduction Rule~\ref{red:segments} for $P=v_0\cdots v_r$. Let also $\tilde{H}$ be the graph obtained from $H$ and denote by $\tilde{\varphi}$ in the extension of $\varphi$. Since $\varphi$ maps the edges of $P$ into the edges of $P'$, we have that $\tilde{\varphi}$ is a 2-isomorphism of $\tilde{G}$ to $\tilde{H}$. 

Suppose that $(G,H,\varphi,k)$ is a yes-instance of \probWS. By Lemma~\ref{lem:good}, there is a $\varphi$-good  $H$-sequence of Whitney switches of length at most $k$ that transforms $G$ into the graph $G'$ that is $\varphi$-isomorphic to $H$. By condition (ii) of the definition of a $\varphi$-good $H$-sequence, for every Whitney switch in $\mathcal{S}$, it is preformed with respect to a Whitney separation $(A,B)$ such that either $v_0,\ldots,v_r\in A$ or $v_0,\ldots,v_r\in B$. This implies that $\mathcal{S}$ can be performed on $\tilde{G}$ and 
transforms $\tilde{G}$ into $\tilde{G}'$ that is $\tilde{S}$-isomorphic to $\tilde{H}$. This means that  $(\tilde{G},\tilde{H},\tilde{\varphi},k)$ is a yes-instance. 

It is straightforward to see that every sequence of Whitney switches transforming $\tilde{G}$ into a graph $\tilde{\varphi}$-isomorphic to $\tilde{H}$ can be applied to $G$ and produces the graph $\varphi$-isomorphic to $H$. Therefore, if  $(\tilde{G},\tilde{H},\tilde{\varphi},k)$ is a yes-instance of \probWS, then $(G,H,\varphi,k)$ is a yes-instance as well.

To show that $b(\tilde{G})=b(G)$, we explain how to recompute the Tutte decompositions. For this, observe that we add a chord to a cycle of $G$ that forms a bag of the Tutte decomposition. This operation splits the bag into two bags of size at least 3 and the bag of size 2 composed by the end-vertices of the chord. Formally, this is done as follows.
We replace $t$ in $T^{(1)}$ by three nodes $t_1$, $t_2$ and $t'$, and define the corresponding bags $X_{t_1}^{(1)}=\{v_0,\ldots,v_r\}$, $X_{t_2}^{(1)}=X_t^{(1)}\setminus \{v_1,\ldots,v_{r-1}\}$, and $X_{t'}^{(1)}=\{v_0,v_r\}$. Notice that for every $t''\in N_{V(T^{(1)})}(t)$, either $X_{t''}^{(1)}\subseteq X_{t_1}^{(1)}$ or  $X_{t''}^{(1)}\subseteq X_{t_2}^{(1)}$. In the first case, we make $t''$ adjacent to $t_1$ and $t''$ is adjacent to $t_2$ in the second case. We modify $T^{(2)}$ and redefine $\alpha$ in similar way. The node $\alpha(t)$ is replaced by three nodes $\alpha(t_1)$, $\alpha(t_2)$ and $\alpha(t')$, with $X_{\alpha(t_1)}^{(2)}=\{u_0,\ldots,u_r\}$, $X_{\alpha(t_2)}^{(2)}=X_{\alpha(t)}^{(2)}\setminus \{u_1,\ldots,u_{r-1}\}$, and $X_{\alpha(t')}^{(2)}=\{u_0,u_r\}$. For every $t''\in N_{V(T^{(2)})}(\alpha(t))$, either $X_{t''}^{(2)}\subseteq X_{\alpha(t_1)}^{(2)}$ or  $X_{t''}^{(2)}\subseteq X_{\alpha(t_2)}^{(2)}$. We make $t''$ adjacent to $\alpha(t_1)$ in the first case and $t''$ is adjacent to $\alpha(t_2)$ in the second case. It is straightforward to verify that we obtain the Tutte decompositions of $\tilde{G}$ and $\tilde{H}$ respectively, and the obtained $\alpha$ is an isomorphism of the modified tree $T^{(1)}$ to the modified tree $T^{(2)}$ satisfying the conditions of Lemma~\ref{lem:isom-decomp-enh}. Notice that the vertices of $X_{t'}^{(1)}$ are adjacent and the same holds for $X_{\alpha(t')}^{(2)}$, that is, $\tilde{G}$ and $\tilde{H}$ are enhanced.

Thus, we obtain two bags $X_{t_1}^{(1)}$ and $X_{t_2}^{(1)}$ of size at least 3 from $X_t^{(1)}$ and both of them induce cycles. Since $P$ is a $\varphi$-good segment and $\tilde{\varphi}(v_0v_r)=u_0u_r$, we have that $X_{t_1}^{(1)}$ is a $\tilde{\varphi}$-good. Moreover, for every $i\in\{0,\ldots,r\}$,
$\tilde{\varphi}(E_{\tilde{G}[X_{t_1}^{(1)}]}(v_i))= E_{\tilde{H}[X_{\alpha(t_1)}^{(2)}]}(u_i)$. This implies that the number of crucial breakpoints does not increase.

To argue that Reduction Rule~\ref{red:segments} can be applied in polynomial time, observe first that inclusion maximal $\varphi$-good segments can be recognized in polynomial time. For each $t\in W_{\geq 3}^{(1)}$, we can verify whether $X_t^{(1)}$ is $\varphi$-good in polynomial time using Lemma~\ref{lem:folk}. Then for each $t\in W_{\geq 2}^{(1)}$ such that $X_t^{(1)}$ is a $\varphi$-bad bag, we consider all at most $n^2$ paths $P$ of the cycle $G[X_t^{(1)}]$ and for each $P$, we verify conditions (i)--(iv) of the definition of a $\varphi$-segment. It is easy to see that each of these conditions can be verified in polynomial time. Further, given an inclusion maximal $\varphi$-good segment $P$, we can apply the rule in polynomial time. 
Note also that we can avoid recomputing the Tutte decompositions of $\tilde{G}$ of $\tilde{H}$ from scratch as the described above computation procedure can be done in polynomial time.
\end{proof}

Reduction Rule~\ref{red:segments} is applied exhaustively while we are able to find $\varphi$-good segments. To simplify notation, we use $G$, $H$ and $\varphi$ to denote the obtained graphs and the obtained 2-isomorphism. We also keep the notation used for the Tutte decompositions.

Our next reduction rule is used to simplify the structure of $\varphi$-good bags by turning them into cliques (see Fig.~\ref{fig:rule-two} for an example). 

\begin{figure}[ht]
\centering
\scalebox{0.6}{
\input{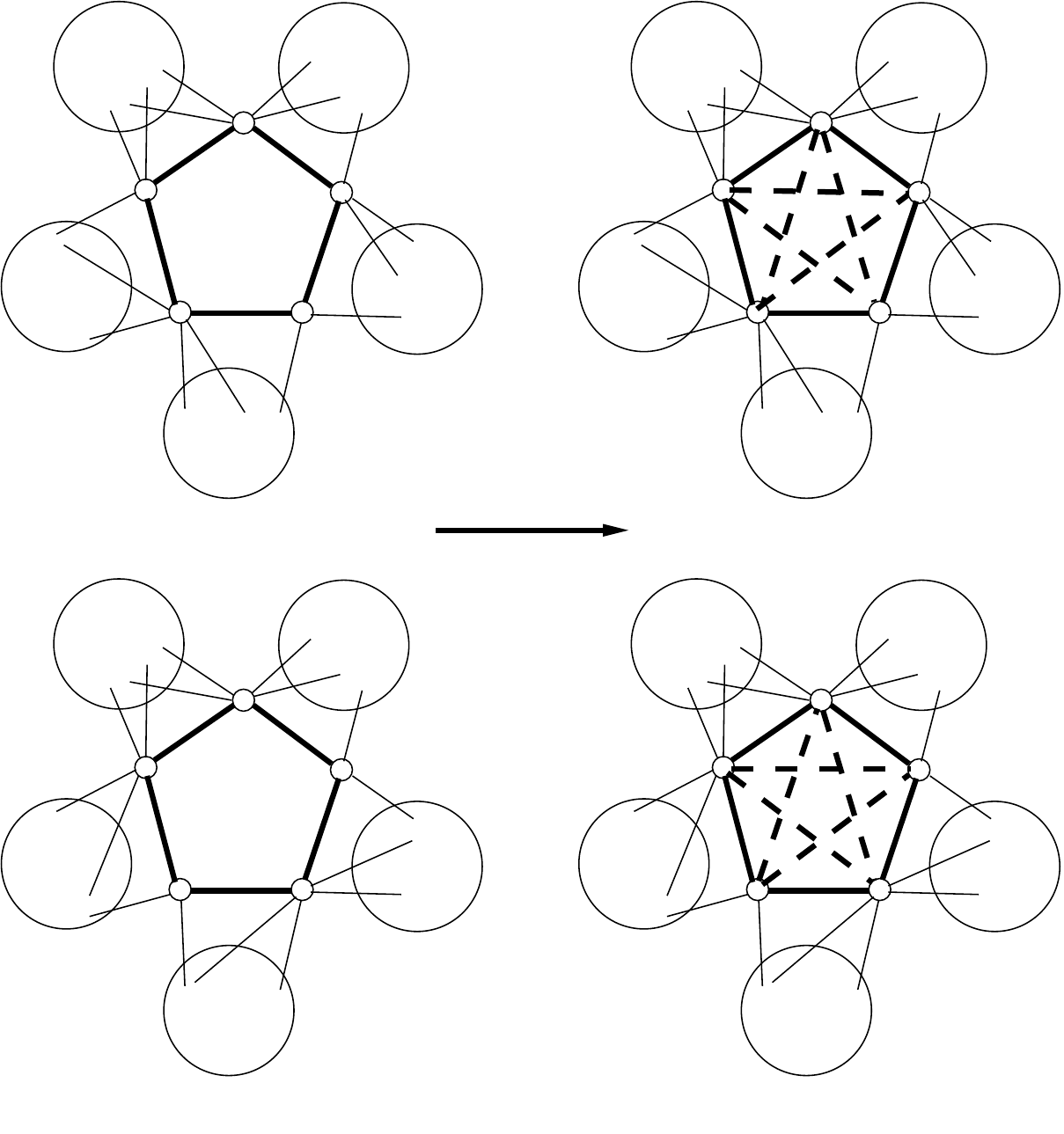_t}}
\caption{An example of an application of Reduction Rule~\ref{red:good-bags}; $\varphi(e_i)=e_i'$ for $i\in\{1,\ldots,5\}$, the vertices of the $\varphi$-good bag in $G$ and the corresponding bag of $H$ are white, and the added edges are shown by dashed lines.} 
\label{fig:rule-two}
\end{figure}

\begin{redrule}\label{red:good-bags}
If for $t\in W_{\geq 3}^{(1)}$ such that $X_t^{(1)}$ is a $\varphi$-good, there are nonadjacent vertices in $X_t^{(1)}$, then compute the $\varphi$-isomorphism $\psi$ of $G[X_t^{(1)}]$ to $H[X_{\alpha(t)}^{(2)}]$ and for every nonadjacent 
$u,v\in X_t^{(1)}$, do the following:
\begin{itemize}
\item add the edge $uv$ to $G$ and $\psi(u)\psi(v)$ to $H$, 
\item extend $\varphi$ by setting $\varphi(uv)=\psi(u)\psi(v)$.
\end{itemize}
\end{redrule}

\begin{claim}\label{cl:r-good-bags}
Reduction Rule~\ref{red:good-bags} is safe, does not change the Tutte decompositions and the breakpoint number, and can be executed in polynomial time. 
\end{claim}

\begin{proof}[Proof of Claim~\ref{cl:r-good-bags}]
Let $t\in W_{\geq 3}^{(1)}$ be such that $X_t^{(1)}$ is $\varphi$-good and there are nonadjacent vertices in $X_t^{(1)}$. 
Recall that $G[X_{t}^{(1)}]$ and $H[X_{\alpha(t)}^{(2)}]$ are $\varphi$-isomorphic by the definition of $\varphi$-good bags. Therefore, there is $\varphi$-isomorphism $\psi$ of $G[X_t^{(1)}]$ to $H[X_{\alpha(t)}^{(2)}]$. 

Denote by $\tilde{G}$ the graph obtained from $G$ by the application of one step of Reduction Rule~\ref{red:good-bags} for two nonadjacent $u,v\in X_t^{(1)}$, that is, $\tilde(G)$ is obtained by adding $uv$ to $G$. Let $\tilde{H}$ be the graph obtained from $H$ by adding $\psi(u)\psi(v)$, and let $\tilde{\varphi}$ be the extension of $\varphi$ on $uv$. Since $\psi$ is a $\varphi$-isomorphism, we conclude that $\tilde{\varphi}$ is a 2-isomorphism of $\tilde{G}$ to $\tilde{H}$. 

 Suppose that $(G,H,\varphi,k)$ is a yes-instance of \probWS. By Lemma~\ref{lem:good}, there is a $\varphi$-good  $H$-sequence of Whitney switches that transforms $G$ into the graph $G'$ that is $\varphi$-isomorphic to $H$. By condition (i) of the definition of a $\varphi$-good $H$-sequence, for every Whitney switch in $\mathcal{S}$, it is preformed with respect to a Whitney separation $(A,B)$ such that either $X_t^{(1)}\subseteq A$ or $X_t^{(1)}\subseteq B$. Therefore, $\mathcal{S}$ can be performed on $\tilde{G}$ and 
transforms $\tilde{G}$ into $\tilde{G}'$ that is $\tilde{S}$-isomorphic to $\tilde{H}$. This means that  $(\tilde{G},\tilde{H},\tilde{\varphi},k)$ is a yes-instance. 
The opposite claim, that if  $(\tilde{G},\tilde{H},\tilde{\varphi},k)$ is a yes-instance of \probWS, then $(G,H,\varphi,k)$ is a yes-instance as well, is straightforward, because 
every sequence of Whitney switches transforming $\tilde{G}$ into a graph $\tilde{\varphi}$-isomorphic to $\tilde{H}$ can be applied to $G$ and produces the graph $\varphi$-isomorphic to $H$. 

This proves that the rule is safe. To show the remaining claims,  observe that the rule transforms $X_t^{(1)}$ and $X_{\alpha(t)}^{(2)}$ into cliques and does not affect other bags. 
Moreover, for every $v\in X_t^{(1)}$, $\tilde{\varphi}(E_{\tilde{G}[X_t^{(1)}]})=E_{\tilde{H}[X_{\alpha(t)}^{(2)}]}(\psi(v))$.
Therefore, the rule
does not change the Tutte decompositions and the breakpoint number.  
For every $t\in W_{\geq 3}^{(1)}$, we can verify whether $X_t^{(1)}$ is $\varphi$-good in polynomial time using Lemma~\ref{lem:folk}. By the same lemma, we can compute $\psi$ in polynomial time. Therefore, Reduction Rule~\ref{red:good-bags} can be applied in polynomial time.
\end{proof}

We apply Reduction Rule~\ref{red:good-bags} for all bags of $G$ that are not cliques. We use the same convention as for the first rule, and keep the old notation for the obtained graphs, their Tutte decompositions, and the obtained 2-isomorphism.
 
The next aim is to reduce the number of mutually $\varphi$-good bags by ``gluing'' them into cliques (see Fig.~\ref{fig:rule-three} for an example). 

\begin{figure}[ht]
\centering
\scalebox{0.6}{
\input{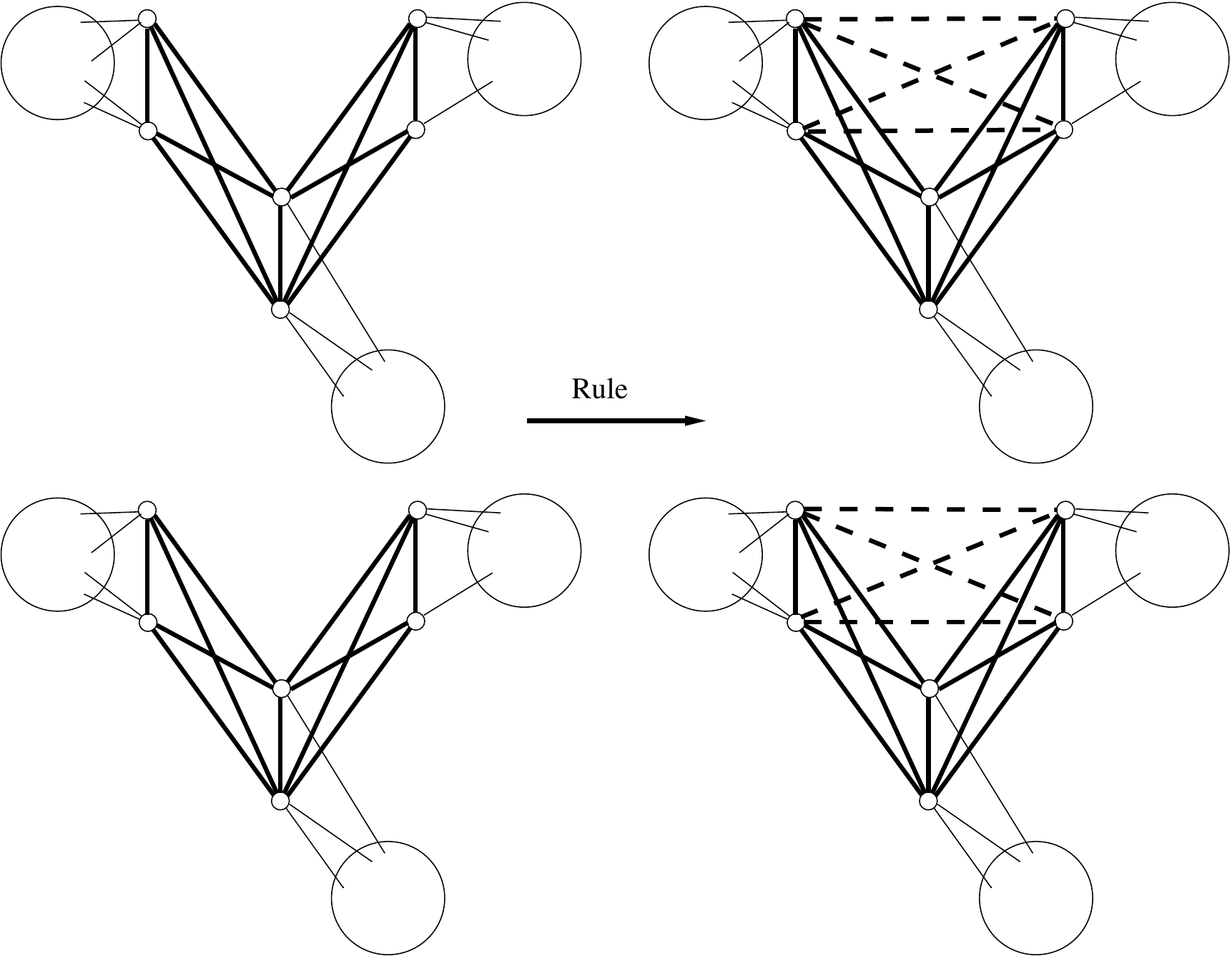_t}}
\caption{An example of an application of Reduction Rule~\ref{red:mutually-good-bags}; $\varphi(e_i)=e_i'$ for $i\in\{1,\ldots,11\}$, the vertices of the mutually $\varphi$-good bags of $G$ and the corresponding bags of $H$ are white, and the added edges are shown by dashed lines.} 
\label{fig:rule-three}
\end{figure}

\begin{redrule}\label{red:mutually-good-bags}
For distinct $t_1,t_2\in W_{\geq 3}^{(1)}$ such that $X_{t_1}^{(1)}$ and $X_{t_2}^{(1)}$ are mutually $\varphi$-good, 
\begin{itemize}
\item compute the $\varphi$-isomorphism $\psi$ of $G[X_{t_1}^{(1)}\cup X_{t_2}^{(1)} ]$ to $H[X_{\alpha(t_1)}^{(2)}\cup X_{\alpha(t_2)}^{(2)} ]$,
\item  for every $u\in X_{t_1}^{(1)}\setminus X_{t_2}^{(1)}$ and every $v\in X_{t_2}^{(1)}\setminus X_{t_1}^{(1)}$, do the following:
\begin{itemize}
\item add the edge $uv$ to $G$ and $\psi(u)\psi(v)$ to $H$, 
\item extend $\varphi$ by setting $\varphi(uv)=\psi(u)\psi(v)$,
\end{itemize}
\item recompute the Tutte decompositions of the obtained graphs and the isomorphism $\alpha$.
\end{itemize}
\end{redrule}

\begin{claim}\label{cl:r-mutually-good-bags}
Reduction Rule~\ref{red:mutually-good-bags} is safe, does not change the breakpoint number, and can be executed in polynomial time. 
\end{claim}

\begin{proof}[Proof of Claim~\ref{cl:r-mutually-good-bags}]
The proof of safeness essentially repeats the proof for Reduction Rule~\ref{red:good-bags}. 

Let  $t_1,t_2\in W_{\geq 3}^{(1)}$ such that $X_{t_1}^{(1)}$ and $X_{t_2}^{(1)}$ are mutually $\varphi$-good. By the definition of mutually $\varphi$-good bags, 
$G[X_{t_1}^{(1)}\cup X_{t_2}^{(1)} ]$ is $\varphi$-isomorphic to $H[X_{\alpha(t_1)}^{(2)}\cup X_{\alpha(t_2)}^{(2)} ]$ and, therefore, $\psi$ exists. 

Denote by $\tilde{G}$ the graph obtained from $G$ by the addition of one edge $uv$, and denote by $\tilde{H}$ the graph obtained from $H$ by adding $\psi(u)\psi(v)$. Let also $\tilde{\varphi}$ be the extension of $\varphi$ on $uv$.  Because $\psi$ is a $\varphi$ isomorphism of $G[X_{t_1}^{(1)}\cup X_{t_2}^{(1)} ]$ to $H[X_{\alpha(t_1)}^{(2)}\cup X_{\alpha(t_2)}^{(2)} ]$, $\tilde{\varphi}$ is a 2-isomorphism of $\tilde{G}$ to $\tilde{H}$. 

Suppose that $(G,H,\varphi,k)$ is a yes-instance of \probWS. By Lemma~\ref{lem:good}, there is a $\varphi$-good  $H$-sequence of Whitney switches of length at most $k$ that transforms $G$ into the graph $G'$ that is $\varphi$-isomorphic to $H$. By condition (iii) of the definition of a $\varphi$-good $H$-sequence, for every Whitney switch in $\mathcal{S}$, it is preformed with respect to a Whitney separation $(A,B)$ such that either $X_{t_1}^{(1)}\cup X_{t_2}^{(1)} \subseteq A$ or $X_{t_1}^{(1)}\cup X_{t_2}^{(1)}\subseteq B$. Therefore, $\mathcal{S}$ can be performed on $\tilde{G}$ and 
transforms $\tilde{G}$ into $\tilde{G}'$ that is $\tilde{S}$-isomorphic to $\tilde{H}$. Hence,  $(\tilde{G},\tilde{H},\tilde{\varphi},k)$ is a yes-instance. 
The opposite implication is straightforward. We conclude that the rule is safe.

To recompute the Tutte decompositions, observe that by Reduction Rule~\ref{red:good-bags}, $X_{t_1}^{(1)}$, $X_{t_2}^{(1)}$,  $X_{\alpha(t_1)}^{(2)}$ and $X_{\alpha(t_2)}^{(2)}$  are cliques and, therefore, 
Reduction Rule~\ref{red:mutually-good-bags} makes cliques from  $X_{t_1}^{(1)}\cup X_{t_2}^{(1)}$ and $X_{\alpha(t_1)}^{(2)}\cup X_{\alpha(t_2)}^{(2)}$. Hence, to recompute the Tutte decompositions of $G$ and $H$, we have to identify the nodes $t_1$ and $t_2$ of $T^{(1)}$ and the nodes $\alpha(t_1)$ and $\alpha(t_2)$ of $T^{(2)}$, respectively. Every neighbor of $t_1$ or $t_2$ in $T^{(1)}$ (every neighbor of  $\alpha(t_1)$ or $\alpha(t_2)$, respectively) distinct from these nodes becomes the neighbor of the obtained node $t$ ($\alpha(t)$, respectively). Recall that there is $t'\in W_2^{(1)}$ such that $X_{t_1}^{(1)}\cap X_{t_2}^{(1)}=X_{t'}^{(1)}$. If $X_{t'}^{(1)}$ is not a separator of the graph constructed by the rule (i.e., if $t'$ has exactly two neighbors $t_1$ and $t_2$ in the original tree $T^{(1)}$), then we delete $t'$ and $\alpha(t')$ from the trees obtained from $T^{(1)}$ and $T^{(2)}$, respectively.  
It is straightforward to verify that this procedure, indeed, recomputes the Tutte decompositions and $\alpha$.

Since Reduction Rule~\ref{red:good-bags} does not affect the bags that are  $\varphi$-bad and 
 for every $v\in X_{t'}^{(1)}$, $\tilde{\varphi}(E_{\tilde{G}[X_{t}^{(1)}]})=E_{\tilde{H}[X_{\alpha(t)}^{(2)}]}(\psi(v))$, we have that the breakpoint number remains the same. 

To show that the rule can be executed in polynomial time, note that  we can verify whether $X_{t_1}^{(1)}$ and $X_{t_2}^{(1)}$ are mutually $\varphi$-good
and then compute
$\psi$ in polynomial time using Lemma~\ref{lem:folk}. Clearly, recomputing the Tutte decomposition can be done in polynomial time. Then the total running time is polynomial.
\end{proof}

Reduction Rule~\ref{red:mutually-good-bags} is applied exhaustively whenever it is possible. As before, we do not change the notation   for the obtained graphs, their Tutte decompositions, and the obtained 2-isomorphism.

Our next rule is used to perform the Whitney switches that are unavoidable. To state the rule, we define the following auxiliary instance of \probWS. Let $C^{(1)}$ and $C^{(2)}$ be copies of $C_4$ with the edges $e_1,e_2,e_3,e_4$ and $e_1',e_2',e_3',e_4'$, respectively, taken in the cycle order. We define $\chi(e_1)=e_1$, $\chi(e_2)=e_4'$, $\chi(e_3)=e_3'$ and $\chi(e_4)=e_2'$. Clearly, $\chi$ is a 2-isomorphism of $C^{(1)}$ to $C^{(2)}$, but $C^{(1)}$ and $C^{(2)}$ are not $\chi$-isomorphic. This means that $I=(C^{(1)},C^{(2)},\chi,0)$ is a no-instance of \probWS. We call this instance the \emph{trivial} no-instance. Notice that for each no-instance, the input graphs should have at least 4 vertices each. Therefore, $I$ is a no-instance of minimum size. 

\begin{figure}[ht]
\centering
\scalebox{0.6}{
\input{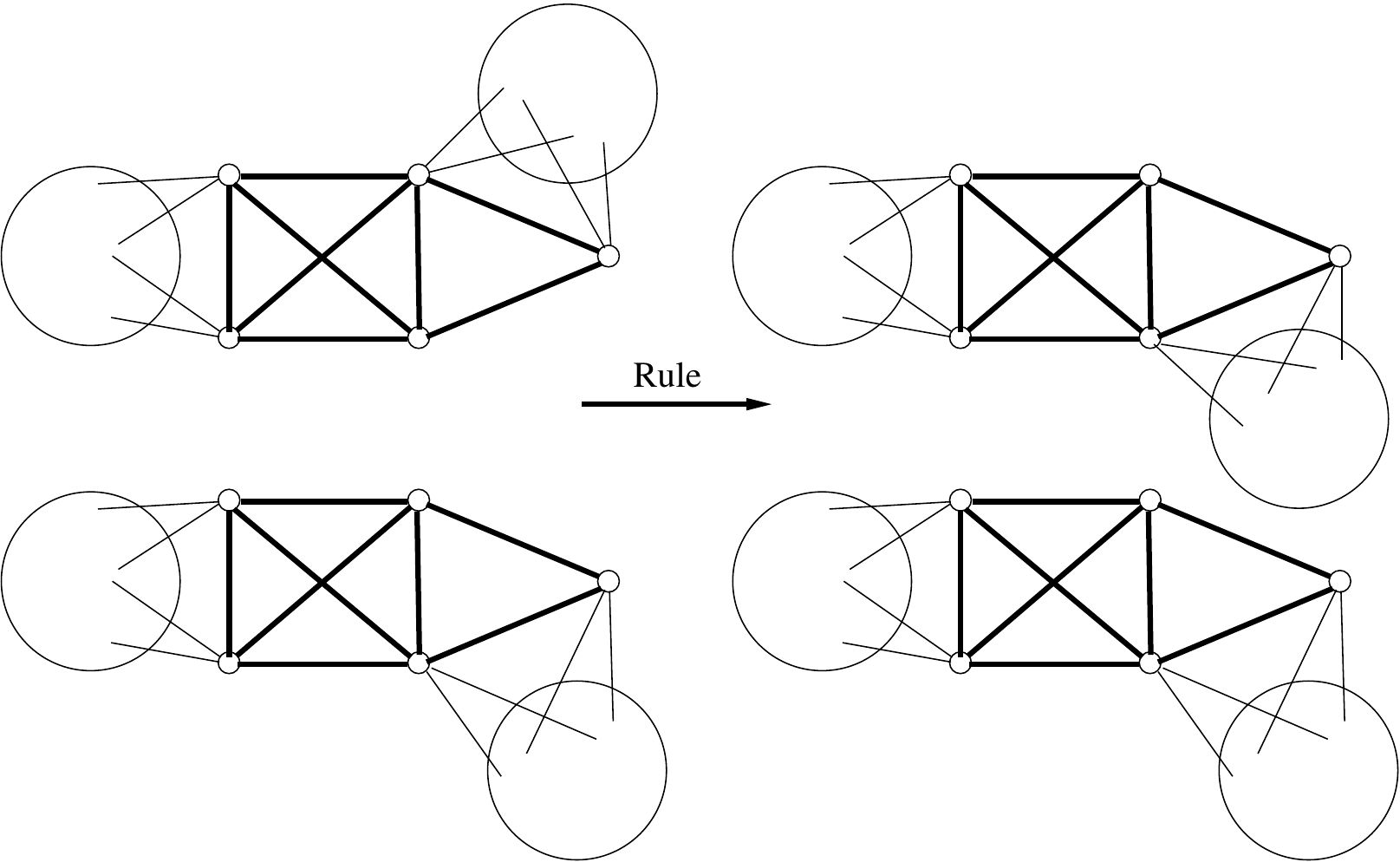_t}}
\caption{An example of an application of Reduction Rule~\ref{red:switch-good}; $\varphi(e_i)=e_i'$ for $i\in\{1,\ldots,8\}$, the vertices of the switched $\varphi$-good bags in $G$ and the corresponding bags of $H$ are white.} 
\label{fig:rule-four}
\end{figure}

\begin{redrule}\label{red:switch-good}
If there is $t\in W_2^{(1)}$ such that $d_{T^{(1)}}(t)=2$ and for the neighbors $t_1$ and $t_2$ of $t$, it holds that $X_{t_1}^{(1)}$ and $X_{t_2}^{(2)}$ are $\varphi$-good but not mutually $\varphi$-good, then do the following:
\begin{itemize}
\item find the connected components $T_1$ and $T_2$ of $T^{(1)}-t$, and construct $A=\bigcup_{t'\in V(T_1)}X_{t'}^{(1)}$ and $B=\bigcup_{t'\in V(T_2)}X_{t'}^{(1)}$,
\item perform the Whitney switch with respect to the separation $(A,B)$,
\item set $k:=k-1$, and if $k<0$, then return the trivial no-instance and stop.
\end{itemize}
\end{redrule}

An example is shown in Fig.~\ref{fig:rule-four}.

\begin{claim}\label{cl:r-switch-good}
Reduction Rule~\ref{red:switch-good} is safe, does not change the Tutte decompositions and the breakpoint number, and can be executed in polynomial time. 
\end{claim}

\begin{proof}[Proof of Claim~\ref{cl:r-switch-good}]
Suppose that $(G,H,\varphi,k)$ is a yes-instance of \probWS. By Lemma~\ref{lem:good}, there is a $\varphi$-good  $H$-sequence of Whitney switches of length at most $k$ that transforms $G$ into the graph $G'$ that is $\varphi$-isomorphic to $H$. 

We claim that  $\mathcal{S}$ contains the Whitney switch with respect to $(A,B)$. Suppose that this is not the case. For every Whitney separation $(A',B')$ in $\mathcal{S}$, by condition (ii) of the definition of $\varphi$-food $H$-sequences, 
 we have that  either $X_{t_1}^{(1)}\subseteq A'$ or $X_{t_1}^{(1)}\subseteq B'$ and either $X_{t_2}^{(1)}\subseteq A'$ or $X_{t_2}^{(1)}\subseteq B'$. There is no separation $(A'',B'')$ in $\mathcal{S}$ with $A''\cap B''=X_{t_1}^{(1)}\cap X_{t_1}^{(2)}$, because $(A,B)$ is the unique separation with this property. It follows that for 
every Whitney separation $(A',B')$ in $\mathcal{S}$, either $X_{t_1}^{(1)}\cup X_{t_2}^{(1)}\subseteq A'$ or $X_{t_1}^{(1)}\cup X_{t_2}^{(2)}\subseteq B'$. However, because 
$X_{t_1}^{(1)}$ and $X_{t_2}^{(2)}$ are not mutually $\varphi$-good, $G[X_{t_1}^{(1)}\cup X_{t_2}^{(1)}]$ is not $\varphi$-isomorphic to $H[X_{\alpha(t_1)}^{2)}\cup X_{\alpha(t_2)}^{(2)}]$. This  contradicts that $\mathcal{H}$ is an $H$-sequence. 

Since $\mathcal{S}$ contains the Whitney switch with respect to $(A,B)$, then by Lemma~\ref{lem:rearrange}, we can assume that this switch is first in $\mathcal{S}$. Let $\tilde{G}$ be the graph obtained from $G$ by performing this switch. Then $(\tilde{G},H,\varphi,k-1)$ is a yes-instance of \probWS. Clearly, $k\geq 1$ in this case and we not stop by Reduction Rule~\ref{red:switch-good}.

Let $\tilde{G}$ be the graph obtained from $G$ by performing the Whitney switch with respect to $(A,B)$. Trivially, if $(\tilde{G},H,\varphi,k-1)$ is a yes-instance of \probWS, then 
$(G,H,\varphi,k)$ is a yes-instance. This completes the safeness proof.

Clearly, the Whitney switch with respect to $(A,B)$ does not change the Tutte decomposition. Also, the switch does not affect the bags that are  $\varphi$-bad and, moreover, 
if $X_t^{(1)}$ is  $\varphi$-bad and $t'$ is at distance two in $T^{(1)}$ from $t$, then $G[X_t^{(1)}\cup X_{t'}^{(1)}]$ is not modified.
 Therefore, the breakpoint number remains the same.
 
We can verify for every $t\in W_2^{(1)}$ such that $d_{T^{(1)}}(t)=2$, whether  for the neighbors $t_1$ and $t_2$ of $t$, it holds that $X_{t_1}^{(1)}$ and $X_{t_2}^{(2)}$ are $\varphi$-good but not mutually $\varphi$-good in polynomial time by Lemma~\ref{lem:folk}.
\end{proof}

Reduction Rule~\ref{red:switch-good} is applied exhaustively whenever it is possible. Note that after applying this rule, we are able to apply   
Reduction Rule~\ref{red:mutually-good-bags} and we do it. 

Suppose that the algorithm did not stop while executing Reduction Rule~\ref{red:switch-good}.
In the same way as with previous rules, we maintain the initial notation  for the obtained graphs, their Tutte decompositions, and the obtained 2-isomorphism.

Our final rule deletes simplicial vertices of degree at least 3.

\begin{redrule}\label{red:delete}
If there is a simplicial vertex $v\in V(G)$ with $d_G(v)\geq 3$, then do the following:
\begin{itemize}
\item find the vertex $u\in V(H)$ such that $E_H(u)=\varphi(E_G(v))$,
\item set $G:=G-v$ and $H:=H-u$, 
\item set $\varphi:=\varphi|_{E(G)\setminus E_G(v)}$.
\end{itemize}
\end{redrule}

\begin{claim}\label{cl:r-delete}
Reduction Rule~\ref{red:delete} is safe  and can be executed in polynomial time. Moreover, the rule does not increase the breakpoint number and 
 the Tutte decompositions of the obtained by the rule graphs are constructed by the deletions of $v$ and $u$ from the bags of the Tutte decompositions of $G$ and $H$, respectively.
\end{claim}

\begin{proof}[Proof of Claim~\ref{cl:r-delete}]
Let $v$ be a simplicial vertex of $G$ of degree at least 3. 

Because $N_G[v]$ is a clique of size at least 4, three is a unique $t\in W_{\geq 3}^{(1)}$ such that $v$ is a simplicial vertex of $G[X_t^{(1)}]$ and $v\notin X_{t'}^{(1)}$ for every $t'\in V(T^{(1)})$ distinct from $t$. Recall that after the exhaustive application of Reduction Rules~\ref{red:good-bags}--\ref{red:switch-good}, the bags of the Tutte decompositions of  $G$ and $H$, respectively, are cliques. In particular, this means that 
$G[X_t^{(1)}]$ and $G[X_{\alpha(t)}^{(2)}]$ are 3-connected and, therefore, $\varphi$-isomorphic by Lemma~\ref{lem:isom-decomp-enh}. Then there is $u\in X_{\alpha(t)}^{(1)}$ such that 
$\varphi(E_{G[X_t^{(1)}]}(v))=E_{G[X_{\alpha(t)}^{(2)}]}(u)$. Moreover, $u$ does not belong to any other bag of the Tutte decomposition of $H$ except $X_{\alpha(t)}^{(2)}$. This implies that $u$ is a slimplicial vertex of $H$ and $E_H(u)=\varphi(E_G(v))$. This means that, given $v$, there is unique $u\in V(H)$ such that $E_H(u)=\varphi(E_G(v))$.

Let $\tilde{G}=G-v$ and $\tilde{H}=H-u$.
Since $X_{t}^{(1)}$ and $X_{\alpha(t)}^{(2)}$ are cliques and $v$ and $u$ do not belong to any separator of size 2 of $G$ and $H$, respectively, $\tilde{G}$ and $\tilde{H}$ are 2-connected. Since 
 $E_H(u)=\varphi(E_G(v))$, we have that  $\tilde{\varphi}=\varphi|_{E(G)\setminus E_G(v)}$ is a 2-isomorphism of $\tilde{G}$ to $\tilde{H}$. Observe also that the Tutte decompositions of $\tilde{G}$
and $\tilde{H}$ are obtained by the deletion of $v$ and $u$ from $X_t^{(1)}$ and $X_{\alpha(t)}^{(2)}$, respectively, and this proves the last part of the claim. 
Since vertex deletion can only decrease the breakpoint number, $b(\tilde{G})\leq b(G)$.

Now we show that $(G,H,\varphi,k)$ is yes-instance of \probWS if and only if $(\tilde{G},\tilde{H},\tilde{\varphi},k)$ is a yes-instance.

Since $X_t^{(1)}$ is a clique, for every Whitney separation $(A,B)$ of $G$, $v\notin A\cap B$ and
ether $X_t^{(1)}\subseteq A$ or $X_t^{(1)}\subseteq B$. Let $\mathcal{S}$ be an $H$-sequence.
 We modify this sequence as follows. For every Whitney separation $(A,B)$ used in $\mathcal{S}$, we replace it by the separation $(A\setminus \{v\},B\setminus\{v\})$.
Denote by $\tilde{\mathcal{S}}$ the obtained sequence. Then $\tilde{\mathcal{S}}$ is an $\tilde{H}$-sequence.
This means that if $(G,H,\varphi,k)$ is yes-instance of \probWS, then $(\tilde{G},\tilde{H},\tilde{\varphi},k)$ is a yes-instance.

For the opposite direction, notice that for every Whitney separation $(A,B)$ of $\tilde{G}$, 
ether $X_t^{(1)}\setminus \{v\}\subseteq A$ or $X_t^{(1)}\setminus\{v\}\subseteq B$.
Let $\tilde{\mathcal{S}}$ be an $\tilde{H}$-sequence. For every Whitney separation $(A,B)$ used in $\tilde{\mathcal{S}}$, we replace it by the separation $(A\cup \{v\},B)$ if $X_t^{(1)}\setminus \{v\}\subseteq A$ and by $(A,B \cup \{v\})$ otherwise. Then we have that the obtained sequence $\mathcal{S}$ is an $H$-sequence. Therefore, 
if $(\tilde{G},\tilde{H},\tilde{\varphi},k)$ is a yes-instance, then $(G,H,\varphi,k)$ is yes-instance. 

To complete the proof, it remains to observe that a simplicial vertex $v$ can be recognized in polynomial time, and we can find the corresponding vertex $u$ by checking whether 
$E_H(u)=\varphi(E_G(v))$ in polynomial time. Then the rule can be applied in polynomial time.
\end{proof}

Reduction Rule~\ref{red:delete} is applied exhaustively. Let $G$, $H$ and $\varphi$ be the resulting graphs. We also keep the same notation for the Tutte decompositions of $G$ and $H$ and the isomorphism $\alpha$ following the previous convention. This completes the description of our kernelization algorithm as the graphs $G$ and $H$ have bounded size. More precisely, we show the following claim.

\begin{claim}\label{cl:bound} 
$|V(G)|=|V(H)|\leq \max\{52\cdot b(G)-36,3\}$.
\end{claim}

\begin{proof}[Proof of Claim~\ref{cl:bound}]
Clearly, $|V(G)|=|V(H)|$ and it is sufficient to show that $|V(G)|\leq \max\{52\cdot b(G)-36,3\}$.

First, we show that if $W_2^{(1)}\neq\emptyset$, then for every $t\in W_2^{(1)}$, there is a neighbor $t'$ in $T^{(1)}$ such that $X_{t'}^{(1)}$ is $\varphi$-bad.
Suppose that this is not the case and there is  $t\in W_2^{(1)}$ with the neighbors $t_1,\ldots,t_s$ such that $X_{t_i}^{(1)}$ is $\varphi$-good for every $i\in \{1,\ldots,s\}$.
Note that $s\geq 2$ by the definition of the Tutte decomposition. Since Reduction Rule~\ref{red:mutually-good-bags} is not applicable, 
for every distinct $i,j\in\{1,\ldots,s\}$, $X_{t_i}^{(1)}$ and $X_{t_j}^{(1)}$ are not mutually $\varphi$-good. This implies that $s=2$, but then we are able to apply Reduction Rule~\ref{red:switch-good}, a contradiction. 

Suppose that $b(G)=0$. We show that $|V(G)|=3$.
For this, we observe that $|V(T^{(1)})|=1$, because otherwise,  there is $W_2^{(1)}\neq\emptyset$, and  for $t\in W_2^{(1)}$, we have that $X_{t'}^{1}$ is $\varphi$-good for every neighbor $t'$ of $t$ in $T^{(1)}$. If $|X_t^{(1)}|\geq 4$, we would be able to apply Reduction Rule~\ref{red:delete}. We conclude that $|X_t^{(1)}|=3$ and $|V(G)|=3$.

Assume from now that $b(G)\geq 1$, that is, the Tutte decomposition has bags that are $\varphi$-bad. Let $W'\subseteq W_{\geq 3}^{(1)}$ and $W''\subseteq W_{\geq 3}^{(1)}$ be the sets of $t\in W_{\geq 3}^{(1)}$ such that $X_t^{(1)}$ are $\varphi$-good and $\varphi$-bad, respectively. Note that $W''\neq\emptyset$ but $W'$ may be empty.
Denote $U=\bigcup_{t\in W''}X_t^{(1)}$, that is, $U$ is the set of vertces of the bags that are $\varphi$-bad.

We claim that for every $t\in W'$, ether $t$ is a leaf of $T^{(1)}$ or $X_t^{(1)}\subseteq U$. 
Suppose that $t\in W'$ is not a leaf of $T^{(1)}$. Let $t_1,\ldots,t_s\in W_2^{(1)}$ be the neighbors of $t$ in $T^{(1)}$. Since $t$ is not a leaf, $s\geq 2$. 
Let $Z=\bigcup_{i=1}^sX_{t_i}^{(1)}$. 
Assume that $X_t^{(1)}\setminus Z\neq \emptyset$. Since $s\geq 2$, $|Z|\geq 3$. Therefore, $X_t^{(1)}$ is a clique of size at least 4 and every $v\in X_t^{(1)}\setminus Z$ is a simplicial vertex of $G$. However, in this case, we would be able to apply Reduction Rule~\ref{red:delete}, a contradiction. Therefore $X_t^{(1)}=Z$.
We proved that for every $i\in\{1,\ldots,s\}$, $t_i$ has a neighbor $t_i'$ in $T^{(1)}$ such that $t_i'\in W''$. Since $X_{t_i}^{(1)}\subseteq X_{t_i'}^{(1)}$ for $i\in\{1,\ldots,s\}$,
 $X_t^{(1)}\subseteq Z\subseteq \bigcup_{i=1}^sX_{t_i'}^{(1)}\subseteq U$ as required. 

Let $t\in W'$ be a leaf of $T^{(1)}$. We prove that $|X_t^{(1)}\setminus U|=1$. Since $t$ is a leaf, there is the unique neighbor $t'$ of $t$ in $T^{(1)}$. We proved that $t'\in W_{2}^{(1)}$ has a neighbor $t''$ such that $X_{t''}^{(1)}$ is $\varphi$-bad. Thus, $X_{t'}^{(1)}\subseteq U$. Suppose that $|X_t^{(1)}\setminus X_{t}^{(1)}|\geq 2$. In this case, $X_{t}^{(1)}$ is a clique of size at least 4 and we would be able to apply  Reduction Rule~\ref{red:delete} for $v\in X_t^{(1)}\setminus X_{t}^{(1)}$, because this is a simplicial vertex of $G$.  This cannot happen and we have that $|X_t^{(1)}\setminus X_{t'}^{(1)}|=1$. Therefore, $|X_t^{(1)}\setminus U|=1$.

Suppose that $t\in W_{2}^{(1)}$. We show that $t$ has at most two neighbors $t'$ in $T^{(1)}$ such that $t'$ is a leaf and $X_{t'}^{(1)}$ is $\varphi$-good.
To obtain a contradiction, assume that $t_1,\ldots,t_s$ are the neighbors of $t$  that are leaves of $T^{(1)}$, $X_{t_i}^{(1)}$ is $\varphi$-good  for each $i\in\{1,\ldots,s\}$, and $s\geq 3$. But then there are distinct $i,j\in\{1,\ldots,s\}$ such that $X_{t_i}^{(1)}$ and $X_{t_j}^{(1)}$ are mutually $\varphi$-good. Then we would be able to apply Reduction Rule~\ref{red:mutually-good-bags}, a contradiction.

Summarizing all these observations, we obtain that the number of leaves $t$ of $T^{(1)}$ such that $X_t^{(1)}$ is $\varphi$-good is at most $2|E[U]|$ and $|V(G)\setminus U|\leq 2|E(G[U])|$. 
Since $U$ is composed by cycles formed by $\varphi$-bad bags, $|E(G[U])|\geq |V(G[U])|$. Hence, to upper bound the number of vertices of $G$, it is sufficient to upper bound the number of edges of $G[U]$.
 
Let $T'$ be the forest obtained from $T^{(1)}$ by the deletion of the nodes $t\in W'$ and then the nodes $t'\in W_2^{(1)}$ that became leaves. Consider $t\in V(T')$ that is a leaf of $T'$ or an isolated node. If $t$ is an isolated node, then $X_t^{(1)}$ contains a crucial breakpoint that is not contained in other bags $X_{t'}^{(1)}$ for $t'\in W''$. If $t$ is a leaf, then there is the unique $t'\in W_2^{(1)}$ that is the neighbor of $t$ in $T'$. Since $X_t^{(1)}$ is $\varphi$-bad, $X_t^{(1)}$ has a crucial breakpoint $v$ such that $v\notin X_{t'}^{(1)}$. This means, that $v$ is not in any $X_{t''}^{(1)}$ for $t''\in W''$. Applying these arguments inductively, we obtain that $|W''|\leq b(G)$. Then the 
number of edged $R$ of $G[U]$ that a included in at least two $\varphi$-bad bags is at most $b(G)-1$. 
Moreover, we can observe the following. For each $t\in W''$, $G[X_t^{(1)}]-R$ is either a cycle (if no edge of $R$ is an edge of $G[X_t^{(1)}]$) or a union of vertex disjoint paths. 
Denote by $\mathcal{P}$ the family of such cycles and paths taken over all $t\in W''$. Then we have that $|\mathcal{P}|\leq 2(b(G)-1)$. Let $\mathcal{P}\rq{}$ be the family off all inclusion maximal subpaths of the elements of $\mathcal{P}$ that does not have crucial breakpoint as internal vertices. We obtain that
$|\mathcal{P}\rq{}|\leq |\mathcal{P}|+b(G)\leq 3b(G)-2$. 

We claim that the total length of the paths of $\mathcal{P}\rq{}$ is at most $12b(G)-8$. To obtain a contradiction, assume that the total length is at least $12b(G)-7$. Then by the pigeonhole principle, there is a path $P\in \mathcal{P}\rq{}$ of length at least 5. Let $P=v_0\cdots v_r$ and assume that $P$ is a segment of the cycle $X_t^{(1)}$ for $t\in W\rq{}\rq{}$. 
 Let $\{t_1,\ldots,t_s\}=N_{T^{(1)}}^2(t)$ and denote
$G_t=G[X_t^{(1)}\cup \bigcup_{i=1}^sX_{t_i}^{(1)}]$ and $H_{\alpha(t)}=H[X_{\alpha(t)}^{(2)}\cup\bigcup_{i=1}^sX_{\alpha(t_i)}^{(2)}]$.
Since $P$ does not contain edges of $R$, for every $i\in \{1,\ldots,r\}$ and
for every $t\rq{}\in W_{\geq 3}^{(1)}$ such that $X_t^{(1)}\cap X_{t\rq{}}^{(1)}$, $X_{t\rq{}}^{(1)}$ is $\varphi$-good. Since $v_1,\ldots,v_{r-1}$ are not crucial breakpoints, there is a path $P\rq{}=u_0\cdots u_r$ in $H[X_{\alpha{1}}^{(1)}]$ such that $u_{i-1}u_i=\varphi(v_{i-1}v_i)$ for every $i\in\{1,\ldots,r\}$
and $\varphi(E_{G_t}(v_i))=E_{H_{\alpha(t)}}(u_i)$ for $i\in\{1,\ldots,r-1\}$.
It follows that $P$ is a $\varphi$-good segment of $X_t^{(1)}$. However, this means that we should be able to apply Reduction Rule~\ref{red:segments}, a contradiction. 

Since the total length of paths of $\mathcal{P}\rq{}$ is at most $12\cdot b(G)-8$, we obtain that $G[U]$ has at most $13\cdot b(G)-9$ edges by taking into account the edges of $R$. 
Then $|V(G)|\leq 4|E(G[U])|=52\cdot b(G)-36$ and this completes the proof.
\end{proof}

Recall that Reduction Rules~\ref{red:segments}--\ref{red:delete} do not increase the breakpoint number. Therefore, for the the obtained instance $(G,H,\varphi,k)$ of \probWS, $|V(G)|=|V(H)|\leq \min\{52\cdot b-36,3\}$, where $b$ is the breakpoint number of the initial input graph $G$.

Finally, we have to argue that the kernelization algorithm is polynomial. For this, recall that the intimal construction of the Tutte decompositions of the input graphs and the isomorphism $\alpha$ is done in polynomial time. Further, we apply Reduction Rules~\ref{red:segments}--\ref{red:delete}, and 
we proved that each of then can be done in polynomial time in Claims~\ref{cl:r-segment}--\ref{red:delete}, respectively. By each application of one of Reduction Rules~\ref{red:segments}--\ref{red:switch-good}, we add at least one edge. Therefore, the rules are executed at most $n^2$ times. By Reduction Rule~\ref{red:delete}, we delete one vertex. Then the rule is called at most $n$ times. This implies that the total running time is polynomial.
\end{proof}

Theorem~\ref{thm:kern} implies that \probWS has a polynomial kernel when parameterized by $k$ and we can show Theorem~\ref{thm:main} that we restate.  

\medskip
\noindent
{\bf Theorem~\ref{thm:main}.}
{\it    
\probWS  admits a kernel with $\Oh(k)$ vertices and is solvable in  $2^{\Oh(k\log k)}\cdot n^{\Oh(1)}$ time.
}

\begin{proof}
Let $(G,H,\varphi,k)$ be an instance of \probWS. We compute the breakpoint number $b(G)$. If $b(G)>2k$, then by Lemma~\ref{lem:lower-break}, $(G,H,\varphi,k)$ is a no-instance. In this case, we return the trial no-instance of \probWS (defined in the proof of Theorem~\ref{thm:kern}) and stop.  Otherwise, we use the kernelization algorithm from Theorem~\ref{thm:kern} that returns an instance, where   
each of the graphs has at most $\max\{104\cdot k-36,3\}$ vertices.

Combining the kernelization with the brute-force checking of at most $k$ Whitney switches immediately leads to the algorithm running in $2^{\Oh(k\log k)}\cdot n^{\Oh(1)}$ time.
\end{proof}

In Corollary~\ref{cor:hard}, we proved that \probWS is \classNP-hard when the input graphs are constrained to be cycles. Theorem~\ref{thm:kern} indicates that it is the presence of bags in the Tutte decompositions that are cycles of length at least 4 that makes \probWS difficult, because only such cycles may contain crucial breakpoint. In particular, we can derive the  following straightforward corollary.

\begin{corollary}\label{cor:no-break}
Let $(G,H,\varphi,k)$ be an instance of \probWS such that $b(G)=0$. Then \probWS for this  instance can be solved in polynomial time.
\end{corollary}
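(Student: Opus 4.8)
The statement to prove is Corollary~\ref{cor:no-break}: if $(G,H,\varphi,k)$ is an instance of \probWS\ with $b(G)=0$, then the instance can be solved in polynomial time.

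The plan is to run the kernelization algorithm of Theorem~\ref{thm:kern} on the given instance and observe that it reduces the instance to one of constant size, which can then be solved trivially. More precisely, since $b(G)=0$, Claim~\ref{cl:bound} in the proof of Theorem~\ref{thm:kern} guarantees that the kernel produced has $|V(G)|=|V(H)|\le\max\{52\cdot b(G)-36,\,3\}=3$, i.e., each graph in the reduced instance has at most $3$ vertices. A $2$-connected graph on at most $3$ vertices is either a triangle or (after enhancement) a triangle, and in any case such graphs have a Tutte decomposition consisting of a single bag, so there are no nontrivial Whitney separations at all; hence $G$ and $H$ are already $\varphi$-isomorphic, and the instance is a yes-instance with the empty sequence of switches. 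In fact one can short-circuit this even earlier: by Lemma~\ref{lem:folk}, $b(G)=0$ together with the reduction rules having turned all $\varphi$-good bags into cliques means that every bag is $\varphi$-good and mutually $\varphi$-good with its neighbours, so no switch is ever needed.

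The key steps, in order, would be: first, compute the Tutte decompositions of $G$ and $H$ and the isomorphism $\alpha$ in polynomial time via Proposition~\ref{prop:tutte} and Lemma~\ref{lem:isom-decomp}; second, compute $b(G)$ in polynomial time as described in the proof of Theorem~\ref{thm:kern} (checking $\varphi$-goodness of each bag via Lemma~\ref{lem:folk}); third, since $b(G)=0$, invoke the bound of Claim~\ref{cl:bound} to conclude the kernelized instance has graphs on at most $3$ vertices; fourth, solve this constant-size instance by brute force (there is nothing to do: the instance is a yes-instance). Since the kernelization runs in polynomial time by Theorem~\ref{thm:kern} and the final step is $O(1)$, the whole procedure is polynomial. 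Equivalently, and more directly, one can argue: when $b(G)=0$, Lemma~\ref{lem:folk} applied bag-by-bag through the Tutte decomposition (using Lemma~\ref{lem:isom-decomp-enh} for $3$-connected bags, which are automatically $\varphi$-good, and the definition of $b(G)$ for cycle bags) shows that $\widehat{G}$ is already $\widehat{\varphi}$-isomorphic to $\widehat{H}$, hence by Lemma~\ref{lem:enhanced} that $G$ is $\varphi$-isomorphic to $H$; thus the answer is ``yes'' with zero switches whenever $k\ge 0$.

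There is essentially no obstacle here: this corollary is a direct consequence of the constant-size bound that Claim~\ref{cl:bound} already establishes in the case $b(G)=0$. The only thing requiring a word is the interpretation of the bound $\max\{52\cdot b(G)-36,3\}$ when $b(G)=0$, namely that it equals $3$ and that such tiny instances are trivially decidable; the proof of Claim~\ref{cl:bound} in fact already spells out that $|V(T^{(1)})|=1$ and $|X_t^{(1)}|=3$ in this case, so the corollary is immediate.
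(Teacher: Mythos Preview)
Your primary argument—apply the kernelization of Theorem~\ref{thm:kern}, obtain by Claim~\ref{cl:bound} an instance on at most $\max\{52\cdot 0-36,3\}=3$ vertices (or the trivial no-instance, if Reduction Rule~\ref{red:switch-good} drives $k$ below zero), and decide this constant-size instance—is correct and is precisely what the paper intends; the corollary is stated without proof as a direct consequence of Theorem~\ref{thm:kern}.

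However, your alternative ``more direct'' argument, and the accompanying claim that ``the answer is `yes' with zero switches whenever $k\ge 0$'', is wrong. The paper explicitly warns against this just before Lemma~\ref{lem:lower-break}: $b(G)=0$ does \emph{not} imply that $G$ and $H$ are $\varphi$-isomorphic. The reason is that $b(G)$ counts crucial breakpoints only inside $\varphi$-bad bags; when every bag is $\varphi$-good one has $b(G)=0$, yet two adjacent $\varphi$-good bags need not be \emph{mutually} $\varphi$-good, and then at least one switch is required. This is exactly the situation depicted in Fig.~\ref{fig:rule-four}, and in the kernelization it is handled by Reduction Rule~\ref{red:switch-good}, which performs the unavoidable switch and decrements $k$. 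Hence an instance with $b(G)=0$ can be a no-instance for small $k$. Drop the direct argument and keep only the kernelization route.
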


For example, the condition that $b(G)=0$ holds when $G$ and $H$ have no induced cycles of length at least 4, that is, when $G$ and $H$ are chordal graphs.

\begin{corollary}\label{cor:chordal} 
\probWS can be solved in polynomial time on chordal graphs.
\end{corollary}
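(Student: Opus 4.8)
The plan is to derive Corollary~\ref{cor:chordal} from Corollary~\ref{cor:no-break} by showing that every instance $(G,H,\varphi,k)$ of \probWS with $G$ chordal satisfies $b(G)=0$. Recall that $b(G)$ is defined as $b(\widehat{G})$ for the enhancement $\widehat{G}$, that crucial breakpoints occur only inside $\varphi$-bad bags, and that by Lemma~\ref{lem:isom-decomp-enh} a bag with a $3$-connected torso is always $\varphi$-good, so a $\varphi$-bad bag necessarily has a cycle torso. Hence it suffices to prove two things for chordal $G$: that no bag of the Tutte decomposition of $G$ is a cycle of length at least $4$, and that every triangle bag is $\varphi$-good.

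I would first observe that enhancement is vacuous for a chordal $2$-connected graph. Every bag $X_t^{(1)}$ with $t\in W_2^{(1)}$ is a separator $\{u,v\}$ of $G$, and in a $2$-connected graph $N_G(C)=\{u,v\}$ for every component $C$ of $G-\{u,v\}$, so $\{u,v\}$ is a minimal separator; since minimal separators of chordal graphs are cliques, $u$ and $v$ are already adjacent in $G$. Therefore $\widehat{G}=G$ and $\widehat{\varphi}=\varphi$, and for every $t\in W_{\geq 3}^{(1)}$ the torso of $X_t^{(1)}$ equals $G[X_t^{(1)}]$. Now suppose some $X_t^{(1)}$ had a cycle torso $v_1v_2\cdots v_rv_1$ with $r\geq 4$. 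Then $v_1v_2\cdots v_rv_1$ is a cycle of $G$, so by chordality it has a chord $v_iv_j\in E(G)$ with $v_i,v_j$ non-consecutive on the cycle; but then $v_iv_j\in E(G[X_t^{(1)}])$ is an edge of the torso distinct from the $r$ cycle edges, so the torso has more than $r$ edges on $r$ vertices and cannot be a cycle --- a contradiction. Thus every $W_{\geq 3}^{(1)}$ bag has a $3$-connected or a triangle torso.

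To finish, I would note that a triangle bag is $\varphi$-good: by Lemma~\ref{lem:isom-decomp}(iii) the map $\varphi$ restricts to a bijection between $E(G[X_t^{(1)}])$ and $E(H[X_{\alpha(t)}^{(2)}])$, by Lemma~\ref{lem:isom-decomp-enh}(iv) both induced graphs are triangles, and sending each vertex of $G[X_t^{(1)}]$ to the unique common endpoint in $H[X_{\alpha(t)}^{(2)}]$ of the $\varphi$-images of its two incident edges yields a $\varphi$-isomorphism. Consequently the Tutte decomposition of $G=\widehat{G}$ has no $\varphi$-bad bag, so $b(G)=b(\widehat{G})=0$, and Corollary~\ref{cor:no-break} applies. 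The only delicate point is the interplay between chordality and the Tutte decomposition --- namely that minimal $2$-separators of a chordal graph are cliques, which makes enhancement trivial and forces a long polygon bag to be a chordless cycle of length at least $4$ --- but this is routine once spelled out, so I do not expect a genuine obstacle.
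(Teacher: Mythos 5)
Your proposal is correct and follows essentially the same route as the paper, which derives the corollary from Corollary~\ref{cor:no-break} by noting that chordality rules out induced cycles of length at least $4$ and hence forces $b(G)=0$. You simply spell out the routine details (minimal $2$-separators of a chordal graph are cliques, so enhancement is vacuous and every bag is $3$-connected or a triangle, and triangle bags are $\varphi$-good) that the paper leaves implicit.
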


\section{Conclusion}\label{sec:concl}
We proved that \probWS admits a polynomial kernel when parameterized by the breakpoint number of the input graphs and this implies that the problem has a polynomial kernel when parameterized by $k$. More precisely, we obtain a kernel, where the graphs have $\Oh(k)$ vertices. Using this kernel, we can solve \probWS in  $2^{\Oh(k\log k)}\cdot n^{\Oh(1)}$ time. It is natural to ask whether the problem can be solved in a single-exponential in $k$ time.

Another interesting direction of research is to investigate approximability for \probWS. In~\cite{BermanK99}, Berman and Karpinski proved that for every $\varepsilon>0$, it is \classNP-hard to approximate the reversal distance $d(\pi)$ for a linear permutation $\pi$  within factor $\frac{1237}{1236}-\varepsilon$. This result can be translated for circular permutations and this allows to obtain inapproximability lower bound for \probWS on cycles similarly to Corollary~\ref{cor:hard}. 
From the positive side, the currently best $1.375$-approximation for $d(\pi)$ was given by Berman, Hannenhalli, and Karpinski~\cite{BermanHK02}. Due to the close relations between \probWS and the sorting by reversal problem, it is interesting to check whether the same approximation ratio can be achieved for \probWS.

In \probWS, we are given two graphs $G$ and $H$ together with a 2-isomorphism and the task is to decide whether we can apply at most $k$ Whitney switches to obtain a graph $G'$ from $G$ such that $G'$ is $\varphi$-isomorphic to $H$. We can relax the task and ask whether we can obtain $G'$ that is isomorphic to $H$, that is, we do not require an isomorphism of $G$ to $H$ be a $\varphi$-isomorphism. Formally, we define the following problem.

\defproblema{\probUWS}{$2$-Isomorphic graphs $G$ and $H$, and a nonnegative integer $k$. }{Decide whether it is possible to obtain a graph $G'$  from $G$ by at most $k$ Whitney switches such that $G'$ is isomorphic to $H$.}

Note that if $\varphi$ is a 2-isomorphism of $G$ to $H$, then the minimum number of Whitney switches needed to obtain $G'$ that is $\varphi$-isomorphic to $H$ gives an upper bound for the number of Whitney switches required to obtain from $G$ a graph that isomorphic to $G$. However, these values can be arbitrary far apart. Consider two cycles $G$ and $H$ with the same number of vertices.  Clearly, $G$ and $H$ are isomorphic but for a given 2-isomorphism $\varphi$ of $G$ to $H$, we may need many Whitney switches to obtain $G'$ that is $\varphi$-isomorphic to $H$ and the number of switches is not bounded by any constant.

Using Proposition~\ref{prop:perm-hard}, we can show that \probUWS is \classNP-hard for very restricted instances.

\begin{proposition}\label{prop:hard-unlabeled}
\probUWS is \classNP-complete when restricted to 2-connected series-parallel graphs even if the input graphs are given together with their $2$-isomorphism.
\end{proposition}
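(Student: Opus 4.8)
The plan is to reduce from the problem of computing the circular reversal distance, which is \classNP-complete by Proposition~\ref{prop:perm-hard}. Membership of \probUWS in \classNP is immediate: a certificate is a list of at most $k$ Whitney separations together with the final isomorphism, and it is checkable in polynomial time. Given a circular permutation $\pi^c=(\pi_1,\ldots,\pi_n)$ of $\{1,\ldots,n\}$ (with $n\ge 4$, which may be assumed) and an integer $k$, I will construct in polynomial time two $2$-connected series-parallel graphs $G^*,H^*$ and a $2$-isomorphism $\varphi^*$ of $G^*$ to $H^*$ so that $(G^*,H^*,k)$ is a yes-instance of \probUWS if and only if $d^c(\pi^c)\le k$; by the construction $G^*$ and $H^*$ will indeed be $2$-isomorphic, so this also settles the ``even if the $2$-isomorphism is given'' variant.

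\textbf{Construction.} Start from a \emph{central cycle} on vertices $v_0,\ldots,v_{n-1}$ with edges $e_i=v_{i-1}v_i$ (indices modulo $n$); call $e_i$ the edge at \emph{position} $i$. Fix pairwise distinct positive integers $a_1<a_2<\cdots<a_n$ (e.g.\ $a_\ell=\ell$). The \emph{marker of label $\ell$} attached to an edge $xy$ consists of $a_\ell$ fresh vertices, each made adjacent to both $x$ and $y$ (so $xy$ with its marker is a book with $a_\ell$ pages). Let $H^*$ attach to the edge at position $i$ the marker of label $i$, and let $G^*$ attach to the edge at position $i$ the marker of label $\pi_i$. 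Both graphs are $2$-connected: marker vertices have degree $2$, and deleting any vertex leaves the rest connected. Both are series-parallel: contracting the degree-$2$ marker vertices turns each graph into a cycle of parallel-edge bundles, which has no $K_4$-minor; equivalently, the obvious tree decomposition with bags $\{v_{i-1},v_i,w\}$ for marker vertices $w$ together with bags triangulating the central cycle has width $2$. Define $\varphi^*$ to send the central edge $e_i$ of $G^*$ to the central edge at position $\pi_i$ of $H^*$, and to send the marker of label $\pi_i$ on $e_i$ in $G^*$ (bijectively, arbitrarily on the pages) to the unique marker of label $\pi_i$ in $H^*$. One checks directly that $\varphi^*$ and $\varphi^{*-1}$ preserve cycles, so $\varphi^*$ is a $2$-isomorphism; alternatively, $H^*$ is obtained from $G^*$ by Whitney switches that merely permute markers around the central cycle, and composing the associated $\sigma$-maps yields a $2$-isomorphism.

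\textbf{Correctness.} The key point is to control the Whitney switches available on $G^*$ and on every graph obtained from it by switches. I claim the $2$-separators of $G^*$ are exactly the pairs $\{v_{i-1},v_i\}$ (endpoints of a central edge) and the pairs $\{v_i,v_j\}$ of central vertices that are non-adjacent on the central cycle: any pair containing a marker vertex, and any adjacent-in-$G^*$ pair, is seen directly not to separate. A switch across $\{v_{i-1},v_i\}$ either only re-glues marker pages (a no-op, since a page adjacent to $\{v_{i-1},v_i\}$ stays adjacent to $\{v_{i-1},v_i\}$) or re-glues the remainder, which just reverses an arc of the central cycle carrying its attached markers; a switch across a non-adjacent pair $\{v_i,v_j\}$ likewise reverses one of the two arcs together with the markers on its interior edges. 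Hence, exactly as in the cycle/circular-permutation correspondence recalled before Proposition~\ref{prop:perm-hard} (Figure~\ref{fig:reverse}), a Whitney switch on $G^*$ is, up to a no-op, a circular reversal of the cyclic label sequence $(\pi_1,\ldots,\pi_n)$, and the result again has the ``central cycle with markers'' shape. By induction, the graphs reachable from $G^*$ by at most $k$ Whitney switches are precisely the central-cycle-with-markers graphs whose cyclic label sequence arises from $\pi^c$ by at most $k$ circular reversals. Finally, any isomorphism between two such graphs maps degree-$2$ vertices to degree-$2$ vertices, hence the central cycle (the subgraph induced on the vertices of degree $\ge 4$, which is always an $n$-cycle here) to the central cycle; and for each central edge it must preserve the number of its degree-$2$ common neighbours, i.e.\ its marker size, i.e.\ (as the $a_\ell$ are distinct) its label. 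So two such graphs are isomorphic iff their cyclic label sequences agree as circular permutations. Since the cyclic label sequence of $H^*$ is the identity circular permutation $\iota^c$, the instance $(G^*,H^*,k)$ is a yes-instance of \probUWS iff $\pi^c$ can be transformed to $\iota^c$ by at most $k$ circular reversals, i.e.\ iff $d^c(\pi^c)\le k$. The construction has $n+\sum_\ell a_\ell=\Oh(n^2)$ vertices and is produced in polynomial time, completing the reduction.

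\textbf{Main obstacle.} The delicate part is ensuring the markers introduce no \emph{useful} new Whitney switches: one must verify that the only new $2$-separators are the pairs $\{v_{i-1},v_i\}$ and that switching across such a pair is either a no-op or a genuine circular reversal, so that the correspondence with sorting $\pi^c$ is exact in both directions. The degree-$2$ ``book page'' gadget is chosen precisely so that this bookkeeping is clean; it also keeps the graphs series-parallel, which is what forbids using rigid $3$-connected markers that would trivially pin down the isomorphism.
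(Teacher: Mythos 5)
Your proof is correct and follows essentially the same route as the paper: a reduction from circular reversal distance (Proposition~\ref{prop:perm-hard}) via a series-parallel gadget consisting of a central $n$-cycle whose edges carry size-coded attachments encoding $\pi$, together with the Whitney-switch/circular-reversal correspondence. The only difference is cosmetic — you attach $a_{\pi_i}$ degree-2 ``book pages'' to each cycle edge where the paper attaches a parallel path of length $\pi_i+1$ — and your explicit analysis of the available $2$-separators plays the same role as the paper's observation that switches inside the attached paths are useless.
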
 

\begin{proof}
By Proposition~\ref{prop:perm-hard}, it is \classNP-complete to decide for given a circular permutation $\pi^c$ and a nonnegative integer $k$, whether $d^c(\pi^c)\leq k$. We reduce from this problem.

\begin{figure}[ht]
\centering
\scalebox{0.7}{
\input{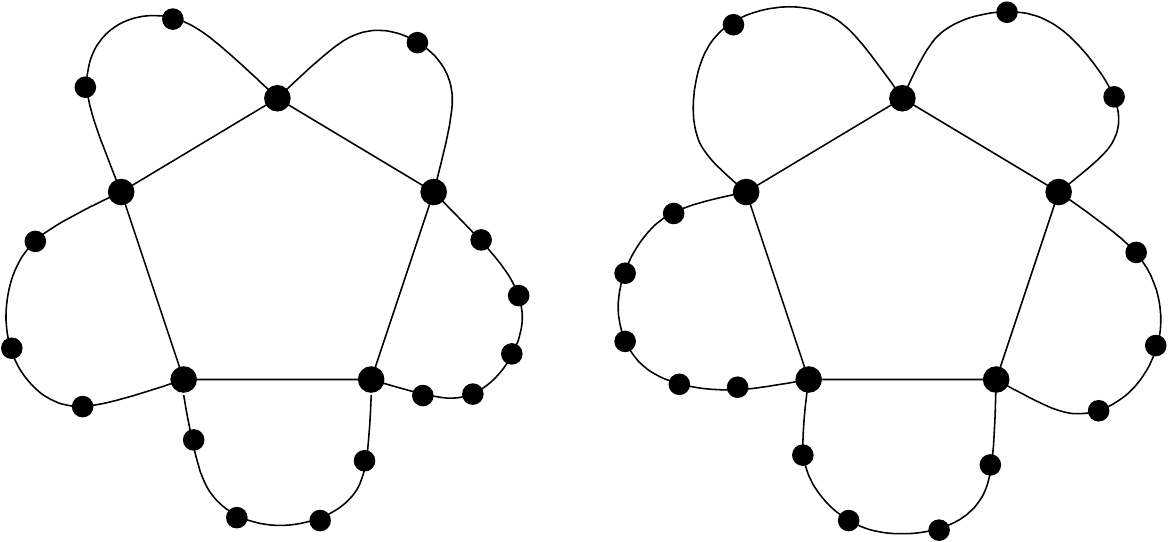_t}}
\caption{Construction of $G$ and $H$ for $\pi^c=(2,1,5,4,5)$.}
\label{fig:hardness}
\end{figure}

Let $\pi^c=(\pi_1,\ldots,\pi_n)$ be a circular permutation. We construct the graph $G$ as follows:
\begin{itemize}
\item construct an $n$-vertex cycle $C=u_0u_1\cdots u_n$ assuming that $u_0=u_n$,
\item for every $i\in\{1,\ldots,n\}$, construct a $(u_{i-1},u_i)$-path $P_i$ of length $\pi_i+1$.
\end{itemize}
 The graph $H$ is constructed in the same way for $\iota^c=(1,\ldots,n)$, that is, we do the following:
  \begin{itemize}
\item construct an $n$-vertex cycle $C'=v_0v_1\cdots v_n$ assuming that $v_0=v_n$,
\item for every $i\in\{1,\ldots,n\}$, construct a $(v_{i-1},v_i)$-path $P_i'$ of length $i+1$.
\end{itemize}  
The construction of $G$ and $H$ is shown in Figure~\ref{fig:hardness}. 

We define $\varphi\colon E(G)\rightarrow E(H)$ as follows:
\begin{itemize}
\item for $i\in\{1,\ldots,n\}$, set $\varphi(u_{i-1}u_i)=u_{\pi_i-1}u_{\pi_i}$,
\item for $i\in\{1,\ldots,n\}$, $\varphi$ maps the edges of $P_i$ to the edges of $P_i'$ following the path order of the paths staring with the edges incident to $v_{i-1}$ and $u_{\pi_i-1}$, respectively.
\end{itemize}
It is straightforward to verify that $\varphi$ is a 2-isomorphism of $G$ to $H$.

We claim that $d^c(\pi^c)\leq k$ if and only if $(G,H,k)$ is a yes-instance of \probUWS.

Suppose that $d^c(\pi^c)\leq k$.  Then there is a sorting sequence $\mathcal{S}$ of circular reversals of length at most $k$ for $\pi^c$. We use the equivalence between reversals for circular permutations and Whitney switches on cycles described in Section~\ref{sec:reversals} and apply the equivalent to $\mathcal{S}$ sequence $\mathcal{S}'$ of Whitney switches for $C$. It is easy to see that $\mathcal{S}'$ produces the graph isomorphic to $H$.

For the opposite direction, assume that $(G,H,k)$ is a yes-instance of \probUWS. Then there is a sequence of Whitney switches $\mathcal{S}$ of length at most $k$ such that the graph $G'$ obtained from $G$ by applying $\mathcal{S}$ is isomorphic to $H$. Note that the Whitney switch with respect to any Whitney separation $(A,B)$ such that $A\cap B\subseteq V(P_i)$ for some $i\in\{1,\ldots,n\}$ results in a graph isomorphic to $G$. Therefore, we can assume that every Whitney switch in $\mathcal{S}$ is performed with respect to a Whitney separation $(A,B)$ such that $A\cap B$ is a pair of nonadjacent vertices of $C$. We again use the equivalence between circular reversal and Whitney switches on cycles and consider the sequence $\mathcal{S}'$ of circular reversals for $\pi^c$ that is equivalent to $\mathcal{S}$. Since each path $P_i$ has length $\pi_i+1$ and every switch from $\mathcal{S}$ does not affect $P_i$, we obtain that $\mathcal{S}'$ produce the identity circular  permutation $\iota^c$. Hence, $d^c(\pi^c)\leq k$.  
\end{proof}

Proposition~\ref{prop:hard-unlabeled} lead to the question about the parameterized complexity of \probUWS. In particular, does the problem admit a polynomial kernel when parameterized by $k$? 

Notice that to deal with  \probUWS, we should be able to check whether the input graphs $G$ and $H$ are isomorphic. If we are given a 2-isomorphism $\varphi$ of $G$ to $H$, then checking whether $G$ and $H$ are $\varphi$-isomorphic can be done in polynomial time by  Lemma~\ref{lem:folk}. However, checking whether $G$ and $H$ are isomorphic, even if a 2-isomorphism $\varphi$ is given, is a complicated task. For example, it can be observed that this is at least as difficult as solving \textsc{Graph Isomorphism} on tournaments (recall that a \emph{tournament} is a directed graph such that for every two distinct vertices $u$ and $v$, either $uv$ or $vu$ is an arc). While \textsc{Graph Isomorphism} on tournaments may be easier than the general problem (we refer to~\cite{Schweitzer17,Wagner07} for the details), still it is unknown whether this special case can be solved in polynomial time and the best known algorithm is the quasi-polynomial algorithm of Babai~\cite{Babai16}.

\begin{figure}[ht]
\centering
\scalebox{0.7}{
\input{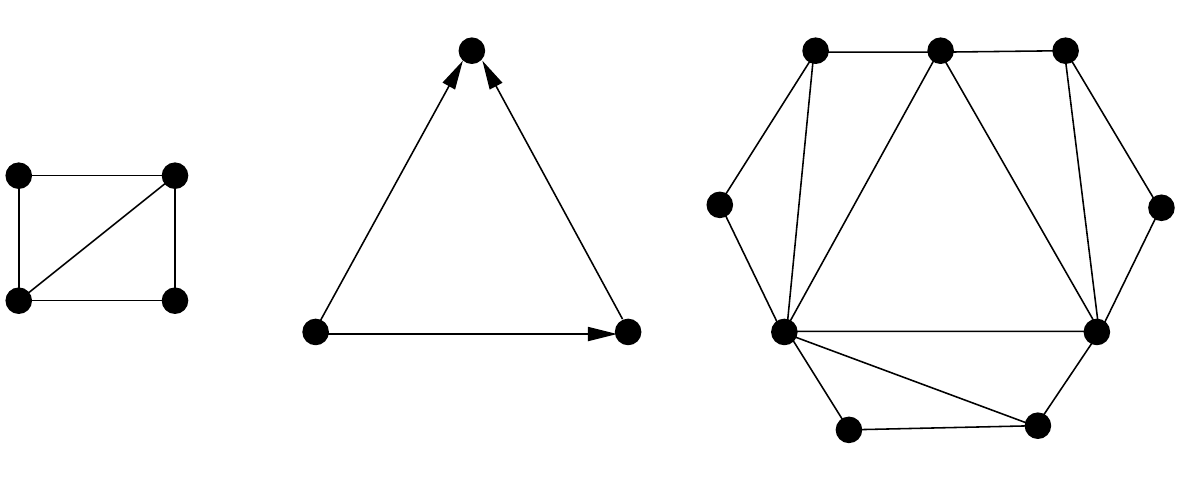_t}}
\caption{Construction of $G$ from a tournament.}
\label{fig:turn}
\end{figure}

Let $T$ be a tournament. We construct the undirected graph $G(T)$:
\begin{itemize}
\item construct a copy of $V(T)$,
\item for every arc $uv$ of $T$, construct a copy of the graph $R$ shown in Figure~\ref{fig:turn} a) and identify the vertex $x$ with $u$ and $y$ with $v$ in the copy of $V(T)$ (see Figure~\ref{fig:turn} b)).
\end{itemize}
If $T_1$ and $T_2$ are $n$-vertex tournaments with $n\geq 2$, then is is straightforward to verify that $G(T_1)$ and $G(T_2)$ are 2-isomorphic and it is easy to construct their 2-isomorphism. However, $G(T_1)$ and $G(T_2)$ are isomorphic if and only if $T_1$ and $T_2$ are isomorphic.

Given this observation, it is natural to consider \probUWS on graph classes for which \textsc{Graph Isomorphism} is polynomially solvable. For example, what can be said about \probUWS on planar graphs?

The relation between Whitney switches and sorting by reversals together with the reduction in the proof of Proposition~\ref{prop:hard-unlabeled} indicates that as the first step, it could be reasonable to investigate the following problem for sequences that generalizes \probRS for permutations. Let $\pi=(\pi_1,\ldots,\pi_n)$ be a sequence of positive integers; note that now some elements of $\pi$ may be the same. For $1\leq i<j\leq n$, we define the \emph{reversal} $\rho(i,j)$ in exactly the same way as for permutations. Then we can define the \emph{reversal distance} between two $n$-element sequences such that the multisets of their  elements are the same; we assume that the distance is $+\infty$ if the multisets of elements are distinct. 

\defproblema{\probSRD}{Two $n$-element sequences $\pi$ and $\sigma$ of positive integers and a nonnegative integer $k$. }{Decide whether 
the reversal distance between $\pi$ and $\sigma$ is at most $k$.}

By the result of Caprara in~\cite{Caprara97}, this problem is \classNP-complete even if the input sequences are permutations. It is also known that the problem is \classNP--complete if the input sequences contain only two distinct elements~\cite{ChristieI01}. The question, whether \probSRD is \classFPT when parameterized by $k$, was explicitly stated in the survey of Bulteau et. al~\cite{BulteauHKN14} (in terms of strings) and is open and only some partial results are known~\cite{BulteauFK16}. We also can define the version of \probSRD for circular sequences and ask the same question about parameterized complexity. Using the idea behind the reduction in the proof of Proposition~\ref{prop:hard-unlabeled}, it is easy to observe that \probUWS on 2-connected series-parallel graphs is at least as hard as the circular variant of \probSRD.  
 
\paragraph{Acknowledgments.} We are grateful to Erlend Raa V{\aa}gset for fruitful discussions that initiated the research resulted in the paper.

\bibliographystyle{siam} 
\bibliography{switching}

\end{document}